\def\namedlabel#1#2{\begingroup
    #2%
    \def\@currentlabel{#2}%
    \phantomsection\label{#1}\endgroup
}
\renewcommand*{\backref}[1]{}
\renewcommand*{\backrefalt}[4]{%
    \ifcase #1 %
    \or     #2%
    \else   #2%
    \fi
}
\let\originalleft\left
\let\originalright\right
\renewcommand{\left}{\mathopen{}\mathclose\bgroup\originalleft}
\renewcommand{\right}{\aftergroup\egroup\originalright}
\definecolor{codegreen}{rgb}{0,0.6,0}
\definecolor{codegray}{rgb}{0.5,0.5,0.5}
\definecolor{codepurple}{rgb}{0.58,0,0.82}
\definecolor{backcolour}{rgb}{0.95,0.95,0.92}
\lstdefinestyle{mystyle}{
    backgroundcolor=\color{backcolour},   
    commentstyle=\color{codegreen},
    keywordstyle=\color{magenta},
    numberstyle=\tiny\color{codegray},
    stringstyle=\color{codepurple},
    basicstyle=\footnotesize,
    breakatwhitespace=false,         
    breaklines=true,                 
    captionpos=b,                    
    keepspaces=true,                 
    numbers=left,                    
    numbersep=5pt,                  
    showspaces=false,                
    showstringspaces=false,
    showtabs=false,                  
    tabsize=2
}
\newcommand{\AlgoResetCount}{\renewcommand{\@ResetCounterIfNeeded}{\setcounter{AlgoLine}{0}}}
\newcommand{\AlgoNoResetCount}{\renewcommand{\@ResetCounterIfNeeded}{}}
\newcounter{AlgoSavedLineCount}
\tikzset{every picture/.style={line width=0.75pt}} 
\newcommand{\N}{{\rm I\!N}}
\newcommand{\bI}{\mathbb{I}}
\newcommand{\bR}{\mathbb{R}}
\newcommand{\bS}{\mathbb{S}}
\newcommand{\id}{\mathbb{I}}
\newcommand{\cD}{\mathcal{D}}
\newcommand{\cE}{\mathcal{E}}
\newcommand{\cH}{\mathcal{H}}
\newcommand{\cN}{\mathcal{N}}
\newcommand{\cO}{\mathcal{O}}
\newcommand{\cP}{\mathcal{P}}
\newcommand{\cS}{\mathcal{S}}
\newcommand{\cU}{\mathcal{U}}
\newcommand{\cV}{\mathcal{V}}
\newcommand{\cX}{\mathcal{X}}
\newcommand{\cY}{\mathcal{Y}}
\newcommand{\cZ}{\mathcal{Z}}
\newcommand{\bdiv}{\ \mathrm{div} \ }
\newcommand{\Be}{\mathrm{Be}}
\newcommand{\diag}{\mathrm{diag}}
\newcommand{\erfc}{\mathrm{erfc}}
\newcommand{\sym}{\mathrm{sym}}
\newcommand{\tr}{\mathrm{tr}}
\newcommand{\dHam}{d_\mathrm{Ham}}
\newcommand{\CovSampling}{\textsc{CovSamp}}
\newcommand{\NatParamEst}{\textsc{NatParamEst}}
\newcommand{\iprod}[1]{\left\langle #1 \right\rangle}
\newcommand{\paren}[1]{\left( #1 \right)}
\newcommand{\brk}[1]{\left[ #1 \right]}
\newcommand{\brc}[1]{\left\{ #1 \right\}}
\newcommand{\pr}[2]{\underset{#1}{\mathbb{P}}\left[ #2 \right]}
\newcommand{\ex}[2]{\underset{#1}{\mathbb{E}}\left[ #2 \right]}
\newcommand{\var}[2]{\underset{#1}{\mathrm{Var}}\left( #2 \right)}
\newcommand{\llnorm}[1]{\left\| #1 \right\|_2}
\newcommand{\infnorm}[1]{\left\| #1 \right\|_{\infty}}
\newcommand{\fnorm}[1]{\left\| #1 \right\|_F}
\newcommand{\mnorm}[2]{\left\| #2 \right\|_{#1}}
\newcommand{\abs}[1]{\left| #1 \right|}
\newcommand{\eps}{\varepsilon}
\definecolor{shadecolor}{rgb}{0.83, 0.83, 0.83}
\theoremstyle{plain}
\newtheorem{thm}{Theorem}[section]
\newtheorem{theorem}[thm]{Theorem}
\newtheorem{fact}[thm]{Fact}
\newtheorem{proposition}[thm]{Proposition}
\newtheorem{corollary}[thm]{Corollary}
\newtheorem{lemma}[thm]{Lemma}
\theoremstyle{definition}
\newtheorem{definition}[thm]{Definition}
\newtheorem{remark}[thm]{Remark}
\title{New Lower Bounds for Private Estimation and\\a Generalized Fingerprinting Lemma\thanks{Authors are listed in alphabetical order.}}
\author{
    Gautam Kamath\thanks{{\tt g@csail.mit.edu}. Cheriton School of Computer Science, University of Waterloo. Supported by an NSERC Discovery Grant, an unrestricted gift from Google, and a University of Waterloo startup grant.}
\and
    Argyris Mouzakis\thanks{{\tt amouzaki@uwaterloo.ca}. Cheriton School of Computer Science, University of Waterloo. Supported by an NSERC Discovery Grant and a David R. Cheriton Graduate Scholarship.}
\and
    Vikrant Singhal\thanks{{\tt vikrant.singhal@uwaterloo.ca}. Cheriton School of Computer Science, University of Waterloo. Supported by an NSERC Discovery Grant.}
}
\begin{document}

\maketitle

\begin{abstract}

We prove new lower bounds for statistical estimation tasks under the constraint of $\paren{\eps, \delta}$-differential privacy. 
First, we provide tight lower bounds for private covariance estimation of Gaussian distributions.
We show that estimating the covariance matrix in Frobenius norm requires $\Omega\paren{d^2}$ samples, and in spectral norm requires $\Omega\paren{d^{\frac{3}{2}}}$ samples, both matching upper bounds up to logarithmic factors.
The latter bound verifies the existence of a conjectured \emph{statistical gap} between the private and the non-private sample complexities for spectral estimation of Gaussian covariances.
We prove these bounds via our main technical contribution, a broad generalization of the fingerprinting method~\cite{BunUV14} to exponential families.
Additionally, using the private Assouad method of Acharya, Sun, and Zhang~\cite{AcharyaSZ21}, we show a tight $\Omega\paren{\frac{d}{\alpha^2 \eps}}$ lower bound for estimating the mean of a distribution with bounded covariance to $\alpha$-error in $\ell_2$-distance. 
Prior known lower bounds for all these problems were either polynomially weaker or held under the stricter condition of $\paren{\eps,0}$-differential privacy.
\end{abstract}

\newpage

\tableofcontents

\newpage

\section{Introduction}
\label{sec:intro}

The last several years have seen a surge of interest in algorithms for statistical estimation under the constraint of \emph{Differential Privacy} (DP)~\cite{DworkMNS06}.
We now have a rich algorithmic toolbox for private estimation of mean, covariance, and entire distributions, in a variety of settings.

While the community has enjoyed much success designing \emph{algorithms} for estimation tasks, \emph{lower bounds} have been much harder to come by, and consequently, the existing literature lacks a rigorous understanding of some core problems.
Lower bounds against the strong privacy constraint of \emph{pure} DP (i.e., $\paren{\eps, 0}$-DP) are generally not too challenging to prove, most often relying upon the well-known packing technique~\cite{HardtT10}.
Such packing lower bounds are optimal in a broad range of settings (see, e.g.,~\cite{BunKSW19}), and thus are frequently effective at providing tight minimax sample complexity lower bounds.

However, pure DP is a very strong privacy constraint: in most practical circumstances, it suffices to only require the weaker privacy notion of \emph{approximate} DP (i.e., $\paren{\eps,\delta}$-DP)~\cite{DworkKMMN06}, in which case packing lower bounds are no longer applicable.
Proving lower bounds under approximate DP is much more challenging, and accordingly, the state of affairs is less satisfying.
There exist only a couple of techniques which apply in this setting, including the fingerprinting method~\cite{BunUV14} and differentially private analogues of Le Cam's method and Assouad's lemma~\cite{AcharyaSZ18,AcharyaSZ21}.
These techniques can be quite brittle and thus unable to prove lower bounds for some fairly basic settings.
For example, both techniques generally require that the underlying distribution has independent marginals and are thus ineffective at proving tight lower bounds for problems involving correlations such as private covariance estimation.
Furthermore, the fingerprinting method often needs the distribution's marginals to have a precise functional form, such as Gaussian or Bernoulli.
These restrictions make the approaches somewhat brittle for proving lower bounds which differ too much from existing results.
Consequently, we lack approximate DP lower bounds for some fundamental settings which seem to qualitatively differ, including Gaussian covariance estimation and mean estimation of heavy-tailed distributions.

\subsection{Our Results}
\label{subsec:results}

We circumvent these barriers and fill a number of gaps in the literature by providing tight lower bounds for several estimation tasks under the constraint of approximate differential privacy.
Specifically, we address problems including covariance estimation for Gaussians and mean estimation with heavy-tailed data.
Our main technical tool for covariance estimation is a novel generalization of the fingerprinting method to exponential families, while we prove our mean estimation lower bound via DP Assouad's lemma.

Our first result is a lower bound for covariance estimation in Frobenius norm.\footnote{Recall that the Mahalanobis norm of $M$ with respect to $\Sigma$ is $\mnorm{\Sigma}{M} = \fnorm{\Sigma^{-1/2} M \Sigma^{-1/2}}$.
For the lower bound constructions we consider, $\Omega\paren{1} \id \preceq \Sigma \preceq \cO\paren{1} \id$.
In this regime, the Mahalanobis and Frobenius distances are equivalent up to a constant factor.
Thus, we will often use them interchangeably and our lower bound for Frobenius estimation also implies a lower bound for Mahalanobis estimation.
Additionally, the same remark holds for the spectral norms of $M$ and $\Sigma^{-1/2} M \Sigma^{-1/2}$ (which comes up in Theorem~\ref{thm:informal2}).}

\begin{theorem}[Informal version of Theorem~\ref{thm:lb-maha}]
\label{thm:informal1}
For any $\alpha = \cO\paren{1}$, any $\paren{\eps, \delta}$-DP mechanism with $\eps, \delta \in \brk{0, 1}$, and $\delta \le \widetilde{\cO}\paren{\frac{d^2}{n}}$ that takes $n$ samples from an arbitrary $d$-dimensional Gaussian $\cN\paren{0, \Sigma}$ and outputs $M\paren{X}$ satisfying $\ex{X, M}{\fnorm{M\paren{X} - \Sigma}^2} \le \alpha^2$ requires $n \geq \Omega\paren{\frac{d^2}{\alpha \eps}}$.
\end{theorem}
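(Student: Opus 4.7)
The plan is to prove Theorem~\ref{thm:informal1} by a fingerprinting argument tailored to the Gaussian covariance family, viewed as an exponential family with sufficient statistic $T(x) = x x^\top$ and natural parameter $\eta = -\Sigma^{-1}/2$. First I would parametrize the hard family as $\cN(0, \Sigma)$ with $\Sigma$ drawn from a prior supported in a neighborhood of the identity---say $\Sigma = I + \gamma U$, where the on-and-above-diagonal entries of the symmetric matrix $U$ are drawn i.i.d.\ from a smooth distribution on $[-1, 1]$ and $\gamma$ is a small enough constant that $\tfrac{1}{2} I \preceq \Sigma \preceq \tfrac{3}{2} I$ almost surely. Such a prior carries $\Theta(d^2)$ independent degrees of freedom, which is the effective dimension we aim to extract; it also keeps the problem in the regime where Frobenius and Mahalanobis distances are equivalent up to constants.

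Next I would establish a generalized exponential-family fingerprinting identity. The key fact is that the log-likelihood score with respect to $\Sigma$ is, up to normalization, the centered sufficient statistic $\sum_{i=1}^n (T(X_i) - \Sigma)$; hence a Stein-type integration-by-parts in $\eta$ gives, coordinate-wise,
\begin{equation*}
\ex{X \mid \Sigma}{\sum_{i=1}^n \paren{T(X_i)_{jk} - \Sigma_{jk}} \cdot M_{ab}(X)} \;=\; \partial_{\eta_{jk}} \ex{X \mid \Sigma}{M_{ab}(X)}.
\end{equation*}
Summing the Frobenius contractions $\iprod{T(X_i) - \Sigma, \, M(X)}_F$ over $i$, integrating against the prior on $U$, and applying a second integration by parts against the smooth prior density yields a \emph{completeness} lower bound of order $d^2 \cdot \paren{1 - O\paren{\alpha}}$ on the resulting correlation, provided the estimator meets the accuracy guarantee.

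For the \emph{soundness} direction, I would upper-bound the same correlation using $(\eps, \delta)$-DP. Since $T(X_i) - \Sigma$ is conditionally independent of $M(X^{(i)})$ given $\Sigma$---where $X^{(i)}$ denotes the dataset with $X_i$ replaced by a fresh sample---the entire correlation is driven by how much $M$ changes when $X_i$ is swapped. By approximate DP together with truncation of $T(X_i)$ to its typical Frobenius norm (which is $O(d)$ by Gaussian concentration), each per-sample term satisfies
\begin{equation*}
\ex{}{\iprod{T(X_i) - \Sigma, \, M(X)}_F} \;\le\; \eps \cdot \ex{}{\fnorm{T(X_i) - \Sigma} \cdot \fnorm{M(X) - \Sigma}} + \delta \cdot \mathrm{poly}(d).
\end{equation*}
Cauchy--Schwarz plus the accuracy assumption gives a total upper bound of order $n \eps d \alpha + n \delta \cdot \mathrm{poly}(d)$. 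Matching this against the $\Omega(d^2)$ completeness bound and using $\delta \le \widetilde{\cO}(d^2/n)$ to absorb the $\delta$-term produces $n = \Omega(d^2/(\eps \alpha))$.

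The main obstacle is executing the generalized exponential-family fingerprinting identity cleanly. Concretely, the prior on $U$ must be chosen so that (a) $\Sigma$ stays well-conditioned almost surely, (b) the log-prior score has controlled magnitude so the second integration by parts produces a completeness term of order exactly $d^2$, and (c) boundary contributions from the support of $\pi$ are negligible. A secondary challenge is the $\delta$-truncation: one needs a high-probability bound on $\fnorm{T(X_i)}$ and must absorb the tail into the overall error budget, which is precisely where the hypothesis $\delta \le \widetilde{\cO}(d^2/n)$ enters. Once these two ingredients are in place, the rest is routine bookkeeping between the completeness and soundness bounds.
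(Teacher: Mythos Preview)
Your plan has the right skeleton, but the soundness step is a factor of $d$ too loose and the arithmetic does not close. As you wrote it, the per-sample bound is $\eps\cdot\ex{}{\fnorm{T(X_i)-\Sigma}\cdot\fnorm{M(X)-\Sigma}}\approx\eps\cdot d\cdot\alpha$, since indeed $\ex{}{\fnorm{X_iX_i^\top-\Sigma}^2}=\tr(\Sigma_T)=\Theta(d^2)$ when $\Sigma\approx\id$. Matching the total $n\eps d\alpha$ against your claimed $\Omega(d^2)$ completeness gives only $n\gtrsim d/(\eps\alpha)$, not $d^2/(\eps\alpha)$; your final sentence simply contains this arithmetic slip. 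The missing idea is that you must not apply Cauchy--Schwarz to the vectors $T(X_i)-\Sigma$ and $M(X)-\Sigma$ separately. Instead, bound $\ex{}{|\widetilde Z_i|}\le\sqrt{\ex{}{\widetilde Z_i^2}}$ and observe that, conditionally on $\Sigma$ and $X_{\sim i}$, $\ex{}{\widetilde Z_i^2}=(M-\Sigma)^{\flat\,\top}\Sigma_T(M-\Sigma)^\flat\le\llnorm{\Sigma_T}\cdot\fnorm{M-\Sigma}^2$. For well-conditioned $\Sigma$ one has $\Sigma_T=\tfrac12\Sigma^{\otimes2}$ on symmetric flattenings, hence $\llnorm{\Sigma_T}=O(1)$, and the per-sample bound drops to $O(\eps\alpha)$; only then does $n\eps\alpha\gtrsim d^2$ yield the desired $n\gtrsim d^2/(\eps\alpha)$. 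Exploiting that $\Sigma_T$ has $O(1)$ \emph{operator} norm rather than its $\Theta(d^2)$ trace is precisely the content of the paper's Lemmas~\ref{lemma:fourth-order-symmetric}--\ref{lem:suff_stats_cov_ub}, via an explicit fourth-moment identity for $(X\otimes X)(X\otimes X)^\top$.

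Two secondary issues. First, with $U_{jk}$ i.i.d.\ in $[-1,1]$ you cannot take $\gamma$ constant and keep $\tfrac12\id\preceq\Sigma\preceq\tfrac32\id$ almost surely: $\llnorm{U}$ can be as large as $d$, so Gershgorin forces $\gamma=O(1/d)$. Second, because your prior lives on $\Sigma$ while the score identity differentiates in the natural parameter $\eta=-\Sigma^{-1}/2$, your two integrations by parts are linked by a non-diagonal Jacobian, and the $\Omega(d^2)$ completeness claim is not automatic. The paper sidesteps both problems by putting the prior directly on the upper-triangular entries of the \emph{precision} matrix, each in an interval of width $\Theta(1/d)$, and by inserting the scaling factor $[R_j^2/4-(\eta_j-m_j)^2]$ into the correlation; completeness then reduces to a one-dimensional computation per coordinate (Lemma~\ref{lem:fing}).
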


We point out that the above lower bound also implies a lower bound for density estimation of Gaussians with known mean and unknown covariance (see Theorem $1.1$ of~\cite{DevroyeMR18b}).
This nearly matches the $\widetilde{\cO}\paren{d^2}$ upper bound of~\cite{KamathLSU19} up to polylogarithmic factors.\footnote{We note that all our lower bounds in this work are stated in terms of mean squared error, while most of the upper bounds guarantee error at most $\alpha$ with probability $1 - \beta$ for some $\beta > 0$.
In many natural cases (such as when the estimator's range is naturally bounded), lower bounds against MSE can be converted to constant probability statements via a boosting argument (outlined, e.g., in the proof of Theorem 6.1 of~\cite{KamathLSU19}).
The details of such a conversion generally standard when applicable and we do not discuss it further here.}
Previously, $\Omega\paren{d^2}$ lower bounds for Gaussian covariance estimation were only known under the more restrictive constraint of pure differential privacy~\cite{KamathLSU19,BunKSW19}.

Our result for Gaussian covariance estimation in Frobenius norm implies the following result for spectral estimation.

\begin{theorem}[Informal version of Theorem~\ref{thm:lb-spectral}]
\label{thm:informal2}
For any $\alpha = \cO\paren{\frac{1}{\sqrt{d}}}$, any $\paren{\eps, \delta}$-DP mechanism with $\eps, \delta \in \brk{0, 1}$, and $\delta \le \widetilde{\cO}\paren{\frac{d^2}{n}}$ that takes $n$ samples from an arbitrary $d$-dimensional Gaussian $\cN\paren{0, \Sigma}$ and outputs $M\paren{X}$ satisfying $\ex{X, M}{\llnorm{M\paren{X} - \Sigma}^2} \le \alpha^2$ requires $n \geq \Omega\paren{\frac{d^{1.5}}{\alpha \eps}}$.
\end{theorem}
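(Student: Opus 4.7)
The plan is to derive Theorem~\ref{thm:informal2} as an immediate black-box consequence of Theorem~\ref{thm:informal1}, using only the elementary matrix-norm inequality $\fnorm{M} \le \sqrt{d}\,\llnorm{M}$, which is valid for any $d \times d$ matrix $M$. Intuitively, any private spectral-norm estimator can be reinterpreted, for free, as a Frobenius-norm estimator whose mean-squared error is at most a factor of $d$ worse; the $\sqrt{d}$ gap between the Frobenius ($d^2$) and spectral ($d^{1.5}$) sample complexities then arises solely from this conversion.

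Concretely, let $M$ be any $\paren{\eps, \delta}$-DP mechanism which, given $n$ samples from an arbitrary $d$-dimensional Gaussian $\cN\paren{0, \Sigma}$, outputs $M\paren{X}$ with $\ex{X, M}{\llnorm{M\paren{X} - \Sigma}^2} \le \alpha^2$. Viewing the same mechanism $M$ as a Frobenius-norm estimator, the norm comparison yields
\[
\ex{X, M}{\fnorm{M\paren{X} - \Sigma}^2} \le d \cdot \ex{X, M}{\llnorm{M\paren{X} - \Sigma}^2} \le d \alpha^2.
\]
Thus $M$ is itself a $\paren{\eps, \delta}$-DP mechanism for covariance estimation in Frobenius norm with squared error at most $\paren{\sqrt{d}\,\alpha}^2$. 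Privacy is preserved trivially, since nothing about the mechanism or its input is changed, only the metric in which we measure its output.

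To apply Theorem~\ref{thm:informal1} with effective accuracy $\alpha' = \sqrt{d}\,\alpha$, we need $\alpha' = \cO\paren{1}$, equivalently $\alpha = \cO\paren{\frac{1}{\sqrt{d}}}$, which is exactly the hypothesis of Theorem~\ref{thm:informal2}. The side condition $\delta \le \widetilde{\cO}\paren{\frac{d^2}{n}}$ is identical in the two statements. Plugging into the Frobenius lower bound gives
\[
n \ge \Omega\paren{\frac{d^2}{\alpha' \eps}} = \Omega\paren{\frac{d^2}{\sqrt{d}\,\alpha\,\eps}} = \Omega\paren{\frac{d^{1.5}}{\alpha\,\eps}},
\]
which is the desired bound. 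There is no real technical obstacle here: the entire argument is a one-line consequence of Theorem~\ref{thm:informal1} together with the standard matrix-norm comparison, and the $\cO\paren{1/\sqrt{d}}$ restriction on $\alpha$ in the spectral statement is precisely what makes the effective Frobenius accuracy $\alpha' = \sqrt{d}\,\alpha$ fall inside the regime where Theorem~\ref{thm:informal1} applies. All the genuine technical work therefore lives in proving Theorem~\ref{thm:informal1}; the spectral lower bound follows for free.
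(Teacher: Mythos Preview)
Your proposal is correct and matches the paper's own argument essentially line for line: the paper also derives Theorem~\ref{thm:lb-spectral} as a black-box consequence of Theorem~\ref{thm:lb-maha} via the inequality $\fnorm{A} \le \sqrt{d}\,\llnorm{A}$, applying the Frobenius lower bound with accuracy parameter $\sqrt{d}\,\alpha$ and noting that the constraint $\alpha = \cO\paren{1/\sqrt{d}}$ is exactly what is needed for $\sqrt{d}\,\alpha = \cO\paren{1}$. The only cosmetic difference is that the paper frames it as a contradiction (assume $n = o\paren{d^{1.5}/\paren{\alpha\eps}}$ and derive a violation of the Frobenius bound), but the content is identical.
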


Simply clipping the data at an appropriate radius and adding Gaussian noise to the empirical covariance matrix gives a sample complexity upper bound of $\widetilde \cO\paren{\frac{d^{1.5}}{\alpha\eps}}$ for the same problem, for all $\alpha = \cO\paren{1}$~\cite{DworkTTZ14,KamathLSU19,BiswasDKU20}. 
This implies that our lower bound is tight up to polylogarithmic factors in the regime $\alpha = \cO\paren{\frac{1}{\sqrt{d}}}$.

Note that the sample complexity of the same problem sans privacy constraints is $\Theta\paren{d}$, and thus the additional cost of privacy is polynomial in the dimension $d$.
This is in contrast to Gaussian mean estimation in $\ell_2$-norm and covariance estimation in Frobenius norm, which maintain their non-private sample complexities of $\cO\paren{d}$ and $\cO\paren{d^2}$, respectively.

$\Omega\paren{d^{\frac{3}{2}}}$-sample lower bounds for approximate DP covariance estimation in spectral norm were previously known for \emph{worst-case} data~\cite{DworkTTZ14}.
Our work relaxes this assumption significantly to the Gaussian case while maintaining the same lower bound.
A priori, it was not clear that such a result was even true, as there are frequently gaps between the sample complexity of private estimation for worst-case versus well-behaved distributions for even basic settings, including distributions over the hypercube (see, e.g., Remark 6.4 of~\cite{BunKSW19}).

This lower bound has implications for estimation tasks with scale-dependent error.
For example, consider Gaussian mean estimation in Mahalanobis distance.
A natural way to approach this problem is to first estimate the covariance matrix spectrally, and then estimate the mean after an appropriate rescaling (see, e.g.,~\cite{KamathLSU19}).
Our lower bound shows that any such approach must incur the $\Omega\paren{d^{\frac{3}{2}}}$ cost of spectral estimation, which is greater than the $\cO\paren{d}$ minimax sample complexity of the problem.
Indeed, some recent works~\cite{BrownGSUZ21, LiuKO21} manage to circumvent this roadblock by adopting more direct (but computationally inefficient) methods.

Finally, we prove new lower bounds for mean estimation of distributions with bounded second moments (see Definition~\ref{def:bounded_moments}).

\begin{theorem}[Informal version of Theorem~\ref{thm:lb-second-moment}]
\label{thm:informal3}
For any $\alpha \le 1$, any $\paren{\eps, \delta}$-DP mechanism with $\delta \le \eps$ that takes $n$ samples from an arbitrary distribution over $\bR^d$ with second moments bounded by $1$ and outputs $M\paren{X}$ satisfying $\ex{X, M}{\llnorm{M\paren{X} - \mu}^2} \le \alpha^2$ requires $n \geq \Omega\paren{\frac{d}{\alpha^2 \eps}}$.
\end{theorem}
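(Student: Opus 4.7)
My plan is to apply the private Assouad lemma of Acharya, Sun, and Zhang to a family of sparse heavy-tailed product distributions indexed by the Boolean cube $\brc{-1, +1}^d$. For each $v$, define $P_v = \bigotimes_{i=1}^d P^{(i)}_{v_i}$ on $\bR^d$, where the one-dimensional marginal $P^{(i)}_{\pm}$ places mass $p$ on $\pm \tau$ and mass $1 - p$ on $0$, with parameters $p, \tau > 0$ to be chosen. A direct computation yields $\ex{X \sim P_v}{X_i} = v_i \, p\tau$ and $\ex{X \sim P_v}{X_i^2} = p\tau^2$, so the constraint $p \tau^2 \le 1$ keeps the per-coordinate second moments bounded by $1$, making each $P_v$ a valid instance of the problem. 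The mean separation is $\llnorm{\mu_v - \mu_{v'}}^2 = 4 (p\tau)^2 \dHam\paren{v, v'}$.

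\textbf{DP-aware coupling and the Assouad reduction.} The private Assouad lemma reduces a lower bound on the $\ell_2^2$ minimax risk to arguing that, for each coordinate $i$, no $\paren{\eps, \delta}$-DP mechanism can reliably distinguish samples drawn from the coordinate-$i$-positive half of the cube from the negative half. The key observation is that $P^{(i)}_+$ and $P^{(i)}_-$ admit a maximal coupling under which both draws equal $0$ with probability $1 - p$ and otherwise equal $+\tau$ and $-\tau$ respectively. Hence $n$ i.i.d. samples from the two distributions can be coupled so that their Hamming distance is $\mathrm{Binomial}\paren{n, p}$, concentrated around $np$. Conditioning on the high-probability event that this Hamming distance is $O\paren{np}$ and invoking approximate-DP group privacy then bounds the total variation distance between the mechanism's output distributions on the two coupled datasets by $O\paren{np\eps} + O\paren{np \cdot e^{np\eps} \delta}$.

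\textbf{Parameter choice.} Set $p\tau^2 = 1$ and $p = \Theta\paren{\alpha^2 / d}$, so that $\tau = 1/\sqrt{p}$. With these parameters the per-coordinate squared gap equals $(p\tau)^2 = p$, and aggregating over the $d$ coordinates gives a minimax lower bound of $\Theta\paren{dp} = \Theta\paren{\alpha^2}$ unless the coordinate-wise distinguishing advantage is bounded away from $1$. Requiring the DP-TV bound above to remain below a fixed constant smaller than this advantage then forces $np\eps = \Omega\paren{1}$, i.e., $n = \Omega\paren{d / \paren{\alpha^2 \eps}}$, as claimed.

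\textbf{Main obstacle.} The delicate step is controlling the $\delta$-slack introduced by applying group privacy to a random Hamming distance of order $\Theta\paren{np}$: the additive failure term of order $np \cdot e^{np\eps} \delta$ threatens to dominate. This term is absorbed precisely because the hypothesis $\delta \le \eps$ combined with the regime $np\eps = O\paren{1}$ (in which the coupling argument is applied, the bound being vacuous otherwise) reduces it to $O\paren{np\delta} \le O\paren{np\eps} = O\paren{1}$. Turning this back-of-envelope balance into a clean application of the ASZ framework — in particular, computing the expectations $\ex{}{e^{H\eps}}$ and $\ex{}{H e^{H \eps}}$ for $H \sim \mathrm{Binomial}\paren{n, p}$ and plugging them into the group-privacy TV bound without losing polynomial factors — is the only nontrivial calculation.
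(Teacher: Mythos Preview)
Your construction, coupling, and parameter choice are identical to the paper's proof: product distributions whose $i$-th marginal puts mass $p$ on $v_i\tau$ and $1-p$ on $0$, the maximal coupling giving expected Hamming distance $D = np$, and $p\tau^2 = 1$ with $p = \Theta\paren{\alpha^2/d}$.

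The only difference is that the step you flag as the ``main obstacle'' does not arise. The second clause of the DP Assouad lemma (Lemma~\ref{lem:dp-assouad}) already takes the \emph{expected} Hamming distance $D$ as its sole coupling input and returns
\[
    R\paren{\cP,\ell,\eps,\delta} \;\geq\; \frac{d\tau}{2}\paren{0.9\,e^{-10\eps D} - 10\,\delta D},
\]
with a $\delta$-term of order $\delta D$, not $\delta D\,e^{\eps D}$. There is no need to compute $\ex{}{e^{H\eps}}$ or $\ex{}{H e^{H\eps}}$ for a Binomial $H$; the paper simply plugs in $D = np$ and $\tau = 2p^2\tau^2 = 2p$, requires the bracket to be a positive constant, and reads off $n \geq \Omega\paren{d/\paren{\alpha^2(\eps+\delta)}}$, after which $\delta \le \eps$ finishes. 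So your plan is correct, just with an unnecessary detour in the last step.
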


Similar $\Theta\paren{\frac{d}{\alpha^2 \eps}}$ upper and lower bounds were previously known for distributions satisfying Definition~\ref{def:bounded_moments} for $k = 2$ under the stricter constraint of pure differential privacy~\cite{BarberD14,KamathSU20,HopkinsKM21}.
Also, a similar bound is shown in~\cite{KamathLZ21} for concentrated DP, but under a weaker moment assumption which involves only coordinate-wise projections.
Our result shows that the same bound holds under the weaker constraint of approximate DP for distributions satisfying the stronger moment assumption, and thus no savings can be obtained by relaxing the privacy notion.
In contrast to our other results, we show this lower bound via the DP Assouad method of~\cite{AcharyaSZ21}, thus demonstrating the promise of different approaches for proving lower bounds for differentially private estimation.

\subsection{Our Techniques}
\label{subsec:techniques}

Our main technical contribution is providing a broad generalization of the celebrated fingerprinting method to exponential families.
We say that a distribution with known support (e.g., a subset $S$ of $\bR^d$) belongs to an exponential family if the distribution's density can be written in the form $p_{\eta}\paren{x} = h\paren{x} \exp\paren{\eta^{\top} T\paren{x} - Z\paren{\eta}}$, where $\eta \in \cH \subseteq \bR^k$ (for some $k$) is the \emph{natural parameter vector}, $h \colon S \to \bR_+$ is the \emph{carrier measure}, $T \colon S \to \bR^k$ are the \emph{sufficient statistics}, and $Z \colon \cH \to \bR$ is the \emph{log-partition function}.

Distributions belonging in the same exponential family are parameterized by their \emph{natural parameter vector} $\eta$.
Thus, estimating the natural parameter vector $\eta$ is a well-motivated problem.
Given $n$ samples $X_1, \dots, X_n$ from $p_{\eta}$, it suffices to know the values of the sufficient statistics $T\paren{X_1}, \dots, T\paren{X_n}$ instead of the samples themselves in order to estimate $\eta$ (see, e.g.,~\cite{Jordan10}).
We denote the mean and covariance of $T$ by $\mu_T$ and $\Sigma_T$, respectively.

Our extension of the fingerprinting lemma for exponential families crucially relies on the above observation.
Traditional fingerprinting proofs (see, e.g.,~\cite{BunUV14, SteinkeU17a, DworkSSUV15, BunSU17, SteinkeU17b, KamathLSU19}) consider the trade-off between the accuracy of a mechanism and the correlation of its output with the input samples.
Conversely, our lemma involves reasoning about the correlation of the output with the sufficient statistics of the samples.

Another important aspect of our technique is that it allows us to prove lower bounds for non-product distributions.
Having considered the trade-off between accuracy and correlation with the input, the second step in fingerprinting proofs is to upper-bound the correlation terms using the definition of differential privacy.
All prior proofs however managed to do this only for product distributions, significantly limiting their applicability.
On the other hand, our work leverages properties of exponential families, allowing us to express the upper bound in terms of the covariance matrix $\Sigma_T$.

The above observations lead to the following theorem, giving lower bounds for private estimation for exponential families.

\begin{theorem}[Informal version of Theorem~\ref{thm:lower_bound}]
\label{thm:informal4}
Let $X \sim p_{\eta}^{\bigotimes n}$ be a dataset, where $p_{\eta}$ belongs to an exponential family $\cE\paren{T, h}$ with natural parameter vector $\eta \in \mathbb{R}^k$ randomly sampled from an axis-aligned hyper-rectangle with side lengths $R \in \mathbb{R}^k$ and midpoint $m \in \mathbb{R}^k$.
Let $X_{\sim i}$ denote the dataset where $X_i$ has been replaced with a fresh sample $X_i' \sim p_\eta$.
Let $M$ be an $\paren{\eps, \delta}$-DP mechanism with accuracy guarantee $\ex{X, M}{\llnorm{M\paren{X} - \paren{\eta - m}}^2} \le \frac{\llnorm{R}^2}{24}$.
Then, for any $T > 0$, it holds that:
\[
    n \brc{2 \delta T + 2 \eps \ex{\eta}{\sqrt{\ex{X_{\sim i}, M}{s^{\top} \Sigma_T s}}} + 2 \ex{\eta}{\int\limits_T^{\infty} \pr{X_i}{\llnorm{T\paren{X_i} - \mu_T} > \frac{4 t}{\infnorm{R}^3 \sqrt{k}}} \, dt}} \geq \frac{\llnorm{R}^2}{24},
\]
where $s \in \bR^k$ is a vector measuring a proxy for the coordinate-wise correlation between the true parameter vector and the mechanism's error when an arbitrary sample is redrawn.
\end{theorem}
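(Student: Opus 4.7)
The plan is to adapt the fingerprinting method to exponential families by exploiting the fact that for the natural parameterization, the score function $\nabla_\eta \log p_\eta(x) = T(x) - \mu_T$ is precisely the centered sufficient statistic. This replaces the role of $X_i - \mu$ in classical fingerprinting (which works only for product distributions with a specific mean-like structure) with the centered sufficient statistic $T(X_i) - \mu_T$, which makes sense for any exponential family, product or not. I would pick $\eta$ uniformly on the hyper-rectangle $\prod_j [m_j - R_j/2, m_j + R_j/2]$ and push through the two standard pillars of fingerprinting: (i) a \emph{fingerprinting identity} lower-bounding the average correlation between $M(X)$ and $T(X_i) - \mu_T$ by something proportional to $\|R\|^2$, and (ii) a \emph{privacy bound} upper-bounding each such correlation in terms of $\eps, \delta$, and a tail term.

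For (i), I would first show $\mathbb{E}[M(X)(T(X_i) - \mu_T)^\top]$ can, after integration by parts in $\eta_j$ on the interval $[m_j - R_j/2, m_j + R_j/2]$ using the score identity, be written as a contribution involving the boundary values of $\mathbb{E}[M(X)\mid \eta]$ and the accuracy error. Summing over $i$ and using the accuracy guarantee $\mathbb{E}\|M(X) - (\eta - m)\|^2 \le \|R\|^2/24$ together with the substitution $M(X) - (\eta - m)$ in place of $M(X)$, I expect to derive $\sum_i \mathbb{E}[\langle M(X) - (\eta-m),\, T(X_i) - \mu_T\rangle]\ge c\|R\|^2$ for an explicit constant, where the $\|R\|_\infty^3 \sqrt{k}$ factor that ends up in the tail term arises from rescaling between the uniform prior on the box and the fingerprinting derivative. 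Next, the standard ``replace-one-sample'' trick observes that $X_{\sim i}$ is conditionally independent of $T(X_i)$ given $\eta$ and $\mathbb{E}[T(X_i) - \mu_T\mid \eta] = 0$, so we can substitute $M(X) - M(X_{\sim i})$ for $M(X)$ inside the inner product at no cost.

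For (ii), I would bound each $\mathbb{E}\langle M(X) - M(X_{\sim i}),\, T(X_i) - \mu_T\rangle$ by conditioning on $\eta$ and $X_{\sim i}$ and splitting the integral over $T(X_i)$ into a ``typical'' region $\{\|T(X_i)-\mu_T\|\le \tau\}$, scaled so the threshold matches the $4t/(\|R\|_\infty^3\sqrt{k})$ in the theorem, and a ``tail'' region. On the typical region, writing the correlation as $\mathbb{E}[\langle M(X), s\rangle] - \mathbb{E}[\langle M(X_{\sim i}), s\rangle]$ for the vector $s := \mathbb{E}[T(X_i) - \mu_T \mid X_{\sim i},\eta,\text{accept}]$ and applying $(\eps,\delta)$-DP to the scalar linear functional $x \mapsto \langle x, s\rangle$ gives the $\delta T$ term directly and, via Cauchy–Schwarz against the $(1+\eps)$ factor from DP, gives the $\eps\sqrt{\mathbb{E}[s^\top \Sigma_T s]}$ term. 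The tail region is controlled by integrating $\mathbb{P}[\|T(X_i)-\mu_T\|>\tau]$ as $\tau$ ranges from $T$ to $\infty$. Combining (i) and (ii) and multiplying by $n$ yields the claimed inequality. The main obstacle I anticipate is the correct identification of the vector $s$ and the careful tracking of constants and rescalings so that the truncation threshold and the three distinct terms emerge in exactly the form stated; the score identity itself, the boundary-term integration by parts, and the DP bound are each standard, but fitting them together so that only $\Sigma_T$ (and not a cruder worst-case quantity) appears in the $\eps$-term is the delicate part.
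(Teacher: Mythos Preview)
Your high-level plan (score identity $\nabla_\eta \log p_\eta = T - \mu_T$, integration by parts in $\eta$, replace-one-sample, DP upper bound split at a threshold) is the right skeleton, but two of the key joints are misaligned with how the argument actually closes.

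First, the vector $s$ is not what you think. In the paper, $s_j \coloneqq \bigl[\tfrac{R_j^2}{4} - (\eta_j - m_j)^2\bigr]\bigl[M_j(X_{\sim i}) - (\eta_j - m_j)\bigr]$; it is a \emph{coefficient} vector built from the mechanism's output on the resampled dataset and the scaling factor, so that $\widetilde{Z}_i = \langle s, T(X_i) - \mu_T\rangle$. Your proposed $s = \mathbb{E}[T(X_i) - \mu_T \mid X_{\sim i}, \eta, \text{accept}]$ cannot play this role: conditioning on $X_{\sim i}, \eta$ leaves $T(X_i) - \mu_T$ with mean zero, and more importantly $M(X)$ still depends on $X_i$, so you cannot ``pull out'' $T(X_i) - \mu_T$ as a conditional expectation in $\langle M(X), T(X_i)-\mu_T\rangle$. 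The paper's DP step does not work by freezing $s$ and applying DP to a linear functional of the output; it applies DP directly to the tail events $\{Z_i > t\}$ and $\{Z_i < -t\}$ for the neighboring datasets $X, X_{\sim i}$, which relates $\mathbb{E}[Z_i]$ to $\mathbb{E}[\widetilde{Z}_i]$ (which is zero) plus $2\eps\,\mathbb{E}|\widetilde{Z}_i| \le 2\eps\sqrt{\mathbb{E}[s^\top \Sigma_T s]}$.

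Second, you are missing the multiplicative weight $\tfrac{R_j^2}{4} - (\eta_j - m_j)^2$ in the correlation term. This is not cosmetic: with a uniform prior on a finite box, integration by parts in $\eta_j$ produces boundary terms that do \emph{not} vanish (the density does not decay at the endpoints). The paper inserts this factor precisely so that it vanishes at $\eta_j = m_j \pm R_j/2$, killing the boundary term and yielding the clean identity $\mathbb{E}_{\eta_j}[G_j] = 2\,\mathbb{E}_{\eta_j}[(\eta_j - m_j)g_j(\eta)]$. This factor is also the source of the $\|R\|_\infty^3\sqrt{k}$ in the tail threshold: bounding $|Z_i|$ uses $\bigl[\tfrac{R_j^2}{4} - (\eta_j-m_j)^2\bigr] \le \tfrac{\|R\|_\infty^2}{4}$ and $|M_j(X) - (\eta_j-m_j)| \le R_j \le \|R\|_\infty$, then Cauchy--Schwarz against $\vec{1}_k$ gives the $\sqrt{k}$. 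Your attribution of this constant to ``rescaling between the uniform prior and the fingerprinting derivative'' is too vague to reproduce it, and without the weight your boundary-term handling is unspecified.
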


In the above, $M$ does not estimate $\eta$ itself, but rather its deviation from $m$.
The two estimation problems are equivalent and we use this formulation for purely technical reasons.
Additionally, observe that the sufficient statistics of the underlying distribution appear both in the second and third terms of the LHS.
This should be contrasted with older fingerprinting proofs (see, e.g., the proofs of Lemmas $6.2$ and $6.7$ from~\cite{KamathLSU19}), which instead involve the variance and the tail probabilities of the underlying distribution, respectively.
This difference is because, as we described previously, the correlation is measured with respect to the sufficient statistics of the samples, instead of the samples themselves.

It is a known fact that Gaussian distributions $\cN\paren{0, \Sigma}$ are an exponential family.
This allows us to apply our fingerprinting lemma for exponential families to get the lower bound for estimation with respect to the Frobenius norm.
On the other hand, our result for spectral estimation is the consequence of a reduction from Frobenius estimation to spectral estimation which makes use of the standard inequality $\fnorm{A} \le \sqrt{d} \llnorm{A}$ (hence, the constraint $\alpha \le \cO\paren{\frac{1}{\sqrt{d}}}$).

Finally, we remark that our lemma captures previously studied settings for approximate DP lower bounds in statistical estimation, including mean estimation for binary product distributions and identity-covariance Gaussians (see~\cite{KamathLSU19}).

\subsection{Related Work}
\label{subsec:related_work}

The fingerprinting technique for proving lower bounds was introduced by Bun, Ullman, and Vadhan~\cite{BunUV14}, which relied on the existence of \emph{fingerprinting codes}~\cite{BonehS95, Tardos03}.
Since then, the technique has been significantly simplified and refined in a number of ways~\cite{SteinkeU17a, DworkSSUV15, BunSU17, SteinkeU17b, KamathLSU19,CaiWZ19}, including the removal of fingerprinting codes and distilling the main technical component into the \emph{fingerprinting lemma}.
Beyond the aforementioned settings of mean estimation of Gaussians and distributions over the hypercube~\cite{BunUV14,SteinkeU15,DworkSSUV15, KamathLSU19}, fingerprinting lower bounds have also been applied in settings including private empirical risk minimization~\cite{BassilyST14} and private spectral estimation~\cite{DworkTTZ14}.
Differentially private analogues of Fano, Le Cam, and Assouad were first considered in the local DP setting~\cite{DuchiJW13}, and more recently under the constraint of central DP~\cite{AcharyaSZ18, AcharyaSZ21}.

On the upper bounds side, the most relevant body of work is that which focuses on mean and covariance estimation for distributions satisfying a moment bound, see, e.g.,~\cite{BarberD14, KarwaV18, BunKSW19, BunS19, KamathLSU19, CaiWZ19, KamathSU20, WangXDX20, DuFMBG20, BiswasDKU20, AdenAliAK21, BrownGSUZ21, KamathLZ21, HuangLY21, KamathMSSU21, AshtianiL21}.
Our lower bounds nearly match these upper bounds for Gaussian covariance estimation~\cite{KamathLSU19} and heavy-tailed mean estimation~\cite{KamathSU20,HopkinsKM21}.
Previously, these bounds were conjectured to be optimal, albeit on somewhat shaky grounds including evidence from lower bounds proven under pure DP or non-rigorous connections with lower bounds in related settings.
Our results rigorously prove optimality and confirm these conjectures.
Other works study private statistical estimation in different and more general settings, including mixtures of Gaussians~\cite{KamathSSU19, AdenAliAL21}, graphical models~\cite{ZhangKKW20}, discrete distributions~\cite{DiakonikolasHS15}, and median estimation~\cite{AvellaMedinaB19, TzamosVZ20}.
Some recent directions involve guaranteeing user-level privacy~\cite{LiuSYKR20, LevySAKKMS21}, or a combination of local and central DP for different users~\cite{AventDK20}.
See~\cite{KamathU20} for further coverage of DP statistical estimation.

\subsection{Discussion of Independent Work}

Following the initial publication of our results, we became aware of the independent works of Cai, Wang, and Zhang~\cite{CaiWZ20,CaiWZ23}.\footnote{The former work is prior to ours, and the latter is subsequent, but appears to be an updated and expanded version of the former.}
These works also present a generalization of fingerprinting (which they call the \emph{score attack}) that also goes beyond prior works focusing on Gaussians and binary product distributions.
Their result is framed in terms of general parametric models that satisfy a number of regularity conditions (which exponential families satisfy).
Conversely, our work focuses exclusively on exponential families, but our ideas can directly be generalized to their setting by observing that: (i) $T\paren{X} - \mu_T$ coincides with the \emph{score statistic} when evaluated over an exponential family, and (ii) the covariance matrix of the sufficient statistics $\Sigma_T$ is the \emph{Fisher Information Matrix} for exponential families $\cE\paren{T, h}$.
Thus, the core lemmas in both papers are \emph{morally equivalent}.

The main difference between the two results has to do with the specific settings investigated, and the constructions analyzed.
In particular, their result considers prior distributions that may be supported either over a finite or an infinite domain.
To handle this case, they assume that the prior involved in the application of the method is continuously differentiable and apply Stein's Lemma.
On the other hand, we work with uniform priors over finite axis-aligned hyper-rectangles that are subsets of the parameter space (which is one of the cases they consider) and multiply our correlation terms by appropriate scaling factors to deal with the issue that the density does not vanish as we approach the boundary of the hyper-rectangles from their interior.
Additionally, in order to handle this, we assume that our estimators are not estimating the parameter itself, but rather its deviation from the midpoint of the aforementioned hyper-rectangle.
Another (minor) difference comes up when observing the upper bound due to privacy for the correlation terms (referred in~\cite{CaiWZ23} as the \emph{soundness} case).
In~\cite{CaiWZ23}, the upper bound is written in terms of the maximum eigenvalue of the Fisher Information Matrix, whereas our bound uses a quadratic form involving $\Sigma_T$.
Our choice not to upper-bound the corresponding term via the maximum eigenvalue was conscious because it facilitates the application of the result in the context considered in the paper.

In terms of applications, we apply our method to Gaussian covariance estimation, whereas they consider Generalized Linear Models, the Bradley-Terry-Luce ranking model, and non-parametric regression.
We comment that, the way we have implemented the method, we put special care on ensuring that our $\paren{\eps, \delta}$-DP lower bounds apply for $\delta$ as large as $\widetilde{\cO}\paren{\frac{1}{n}}$ (which is essentially optimal), while some of the bounds in~\cite{CaiWZ23} have not been optimized with respect to this aspect and hold for $\delta \le \cO\paren{\frac{1}{n^{1 + \gamma}}}$, for some $\gamma > 0$.

\section{Preliminaries}
\label{sec:prelim}

\textbf{General Notation.}
We use the notations: $\brk{n} \coloneqq \brc{1, 2, \dots, n}, \brk{a \pm R} \coloneqq \brk{a - R, a + R}$, and $\bR_+ \coloneqq \left[0, \infty\right)$.
Given any distribution $\cD$, $\cD^{\bigotimes n}$ denotes the \emph{product measure} where each \emph{marginal distribution} is $\cD$.
Thus, if we are given $n$ independent samples from $\cD$, we write $\paren{X_1, \dots, X_n} \sim \cD^{\bigotimes n}$.
Also, depending on the context, we may use capital Latin characters like $X$ to denote either an individual sample from a distribution or a collection of samples $X \coloneqq \paren{X_1, \dots, X_n}$.
To denote the $j$-th component of a vector, we will use either a subscript (e.g., $X_j$, if the vector is $X$) or a superscript (e.g., $X_i^j$, if the vector is $X_i$).
The convention used will be clear depending on the context.
The \emph{unit $\ell_2$-sphere} in $d$-dimensions that is centered at the origin is denoted by $\bS^{d - 1}$.
Finally, we give the following definition for distributions with bounded $k$-th moments:

\begin{definition}
\label{def:bounded_moments}
Let $\cD$ be a distribution over $\bR^d$.
We say that $\cD$ has $k$-th moments bounded by $C$ if $\ex{X \sim \cD}{\abs{\iprod{X - \ex{X \sim \cD}{X}, v}}^k} \le C, \forall v \in \bS^{d - 1}$.
\end{definition}

\medskip

\textbf{Linear Algebra Preliminaries.}
For a vector $v \in \bR^d$, $v_i$ refers to its $i$-th component and $v_{- i} \in \bR^{d - 1}$ describes the vector one gets by removing the $i$-th component from $v$.
We use $\Vec{1}_d$ to refer to the all $1$s vector in $\bR^d$.
For a pair of vectors $x, y \in \bR^d$, we write $x \le y$ if and only if $x_i \le y_i, \forall i \in \brk{n}$.
Given a matrix $M \in \bR^{d \times d}$, we denote the element at the $i$-th row and the $j$-th column by $M_{i j}$ or $\paren{M}_{i j}$.
Also, we denote the \emph{canonical flattening} operation of $M$ into a vector by $v = M^{\flat} \in \bR^{d^2}$ and the inverse operation by $M = v^{\#} \in \bR^{d \times d}$.
We denote its \emph{spectral norm} by $\llnorm{M} \coloneqq \sup\limits_{v \in \bS^{d - 1}} \llnorm{M v}$.
For symmetric matrices $M$, this reduces to $\llnorm{M} = \sup\limits_{v \in \bS^{d - 1}} \abs{v^{\top} M v}$, which is known to be equal to the largest eigenvalue of $M$ (in absolute value).
Also, we denote the \emph{trace} of $M$ by $\tr\paren{M} \coloneqq \sum\limits_{i \in \brk{d}} M_{i i}$, and its \emph{Frobenius norm} by $\fnorm{M} \coloneqq \sqrt{\tr\paren{M^{\top} M}} = \sqrt{\sum\limits_{i = 1}^d \sum\limits_{j = 1}^d M_{i j}^2} = \llnorm{M^{\flat}}$.
For symmetric matrices, the latter is equal to the square root of the sum of the squares of its eigenvalues.
Moreover, we consider the \emph{Mahalanobis norm} (with respect to a non-degenerate covariance matrix $\Sigma$), both for vectors $\paren{\mnorm{\Sigma}{x} \coloneqq \llnorm{\Sigma^{- \frac{1}{2}} x}}$ and matrices $\paren{\mnorm{\Sigma}{M} \coloneqq \fnorm{\Sigma^{- \frac{1}{2}} M \Sigma^{- \frac{1}{2}}}}$.
Furthermore, we use the notation $\iprod{\cdot, \cdot}$ to denote \emph{inner products}, either between vectors $x, y \in \bR^d \ \paren{\iprod{x, y} \coloneqq x^{\top} y}$ or between matrices $A, B \in \bR^{d \times d} \ \paren{\iprod{A, B} \coloneqq \tr\paren{A B^{\top}}}$.
The inner product of a pair of $d \times d$ matrices $A, B$ can be reduced to an inner product between vectors in $\bR^{d^2}$ via the identity $\iprod{A, B} = \iprod{A^{\flat}, B^{\flat}}$.
Additionally, given a pair of symmetric matrices $A, B \in \bR^d$, we write $A \succeq B$ if and only if $x^{\top} \paren{A - B} x \geq 0, \forall x \in \bR^d$.
Finally, given matrices $A \in \bR^{n \times m}$ and $B \in \bR^{k \times \ell}$, the \emph{Kronecker product} of $A$ with $B$ is denoted by $A \otimes B \in \bR^{\paren{n k} \times \paren{m \ell}}$.
More linear algebra facts appear in Appendix~\ref{sec:facts_lin_alg}.

\subsection{Privacy Preliminaries}
\label{subsec:privacy_prelim}

We define differential privacy and state its closure under post-processing property.

\begin{definition}[Differential Privacy (DP)~\cite{DworkMNS06}]
\label{def:dp}
A randomized algorithm $M \colon \cX^n \rightarrow \cY$ satisfies $\paren{\eps, \delta}$-differential privacy ($\paren{\eps, \delta}$-DP) if for every pair of neighboring datasets $X, X' \in \cX^n$ (i.e., datasets that differ in exactly one entry), we have:
\[
    \pr{M}{M\paren{X} \in Y} \le e^{\eps} \cdot \pr{M}{M\paren{X'} \in Y} + \delta, \forall Y \subseteq \cY.
\]
When $\delta = 0$, we say that $M$ satisfies $\eps$-differential privacy or pure differential privacy.
\end{definition}

\begin{lemma}[Post Processing~\cite{DworkMNS06}]
\label{lem:post_processing}
If $M \colon \cX^n \rightarrow \cY$ is $\paren{\eps, \delta}$-DP, and $P \colon \cY \to \cZ$ is any randomized function, then the algorithm $P \circ M$ is $\paren{\eps,\delta}$-DP.
\end{lemma}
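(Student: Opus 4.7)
The plan is to reduce the randomized post-processing case to the deterministic one by conditioning on the internal randomness of $P$, and then exploit the fact that, for deterministic post-processors, the event $\{P(M(X)) \in Z\}$ is just the preimage event $\{M(X) \in P^{-1}(Z)\}$, so the DP guarantee of $M$ applies directly.

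First, I would handle the deterministic case. Fix a pair of neighboring datasets $X, X' \in \cX^n$ and an arbitrary measurable set $Z \subseteq \cZ$. When $P$ is deterministic, set $Y \coloneqq P^{-1}(Z) = \{y \in \cY : P(y) \in Z\} \subseteq \cY$. Since $P$ is deterministic, the events $\{P(M(X)) \in Z\}$ and $\{M(X) \in Y\}$ coincide, and likewise with $X'$. Applying the $(\eps,\delta)$-DP guarantee of $M$ to the set $Y$ then gives
\[
    \pr{M}{P(M(X)) \in Z} = \pr{M}{M(X) \in Y} \le e^{\eps} \pr{M}{M(X') \in Y} + \delta = e^{\eps} \pr{M}{P(M(X')) \in Z} + \delta.
\]

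Next, I would lift this to the randomized case by writing $P$ as a deterministic function of its input together with an independent random seed $R \sim \mu$ (taking values in some space $\cR$), so that $P(y) \stackrel{d}{=} P_R(y)$ with $P_r(\cdot)$ deterministic for each fixed $r \in \cR$. Since $R$ is independent of $M$, I can condition on $R = r$ and apply the deterministic case to the post-processor $P_r$, obtaining $\pr{M}{P_r(M(X)) \in Z} \le e^{\eps} \pr{M}{P_r(M(X')) \in Z} + \delta$ for every $r$. Taking expectation with respect to $R \sim \mu$ on both sides and using linearity of expectation (together with Fubini/Tonelli to commute the expectation over $R$ with the probability over $M$) yields
\[
    \pr{M, R}{P_R(M(X)) \in Z} \le e^{\eps} \pr{M, R}{P_R(M(X')) \in Z} + \delta,
\]
which is exactly the $(\eps, \delta)$-DP guarantee for $P \circ M$ at $(X, X')$ and $Z$.

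There is no real obstacle here; the only subtlety is the measure-theoretic justification that an arbitrary randomized map $P \colon \cY \to \cZ$ admits such a representation as a deterministic function of an independent seed and that the integrals over $R$ commute appropriately. For standard Borel spaces this is routine, and since nothing in the rest of the paper hinges on pathological measurability issues, I would just invoke the standard representation and apply Tonelli for the nonnegative integrand $r \mapsto \pr{M}{P_r(M(X)) \in Z}$. Since $X, X', Z$ were arbitrary, this proves that $P \circ M$ is $(\eps, \delta)$-DP.
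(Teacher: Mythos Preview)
Your proof is correct and is the standard argument: reduce to deterministic post-processing via preimages, then average over the independent internal randomness of $P$. The paper itself does not supply a proof of this lemma at all; it simply states the result and cites~\cite{DworkMNS06}, so there is nothing to compare against beyond noting that you have filled in exactly the classical proof the citation points to.
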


Finally, we discuss a recent technique from \cite{AcharyaSZ21} for proving lower bounds for estimation problems under approximate DP.
It is a private version of the well-known Assouad Lemma~\cite{Yu97}.
Let $\cP$ be a family of distributions over $\cX^n$, where $\cX$ is the data universe and $n$ is the number of samples.
Let $\theta \colon \cP \to \Theta$ be the quantity associated with the distribution that we want to estimate, and let $\ell \colon \Theta \times \Theta \to \bR_+$ be the pseudo-metric (loss function) for estimating $\theta$.
We define the risk of an estimator $\hat{\theta} \colon \cX^n \to \Theta$ as $\max\limits_{p \in \cP} \ex{X \sim p}{\ell\paren{\hat{\theta}\paren{X},\theta\paren{p}}}$.
The minimax risk of $\paren{\eps, \delta}$-DP estimators for a statistical task is defined as:
\[
    R\paren{\cP, \ell, \eps, \delta} \coloneqq \min\limits_{\hat{\theta} \text{ is } \paren{\eps,\delta} \text{-DP}} \max\limits_{p \in \cP}\ex{X \sim p}{\ell\paren{\hat{\theta}\paren{X}, \theta\paren{p}}}.
\]
Let $\cV \subseteq \cP$ be a subset of distributions indexed by the points in the hypercube $\cE_d = \brc{\pm 1}^d$.
Suppose there exists a $\tau \in \bR$, such that for all $u, v \in \cE_d$, $\ell\paren{\theta\paren{p_u}, \theta\paren{p_v}} \geq 2 \tau \cdot \sum\limits_{i = 1}^d \mathds{1}\brc{u_i \neq v_i}$.
For every coordinate $i \in [d]$, define the following mixture distributions:
\[
    p_{+ i} \coloneqq \frac{2}{\abs{\cE_d}} \sum\limits_{e \in \cE_d \colon e_i = 1}{p_e}
    \text{   and   }
    p_{- i} \coloneqq \frac{2}{\abs{\cE_d}} \sum\limits_{e \in \cE_d \colon e_i = -1}{p_e}.
\]
The DP Assouad's method provides a lower bound for the minimax risk by considering the problem of distinguishing between $p_{+ i}$ and $p_{- i}$.

\begin{lemma}[DP Assouad's Method \cite{AcharyaSZ21}]
\label{lem:dp-assouad}
For all $i \in [d]$, let $\phi_i \colon \cX^n \to \brc{\pm 1}$ be a binary classifier.
Then we have the following:
\[
    R\paren{\cP, \ell, \eps, \delta} \geq \frac{\tau}{2} \cdot \sum\limits_{i=1}^{d} \min\limits_{\phi_i \text{ is } \paren{\eps, \delta} \text{-DP}} \paren{\pr{X \sim p_{+ i}}{\phi_i\paren{X} \neq 1} + \pr{X \sim p_{- i}}{\phi_i\paren{X} \neq -1}}.
\]
Moreover, if $\forall i \in \brk{d}$, there exists a coupling $\paren{X, Y}$ between $p_{+ i}$ and $p_{- i}$ with $\ex{}{\dHam\paren{X, Y}} \le D$, then:
\[
    R\paren{\cP, \ell, \eps, \delta} \geq \frac{d \tau}{2} \cdot \paren{0.9 e^{-10 \eps D} - 10 \delta D}.
\]
\end{lemma}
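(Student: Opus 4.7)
The plan is to prove the two inequalities in sequence. The first is the standard Assouad reduction of minimax estimation to a sum of $d$ binary-testing errors, which carries over verbatim to the DP setting thanks to post-processing (Lemma~\ref{lem:post_processing}). The second lower-bounds each per-coordinate testing error by combining group privacy with a Markov truncation of the given coupling.

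For Part 1, I start with any $(\eps,\delta)$-DP estimator $\hat\theta$ and post-process it into coordinate-wise classifiers. Let $\hat v(X) \in \cE_d$ be any minimizer of $\ell(\hat\theta(X), \theta(p_{\hat v(X)}))$ over $\cE_d$, and set $\phi_i(X) := \hat v_i(X)$, so that by post-processing each $\phi_i$ remains $(\eps,\delta)$-DP. Since $\ell$ is a pseudometric, the triangle inequality combined with the minimality of $\hat v$ gives $\ell(\theta(p_{\hat v(X)}), \theta(p_v)) \le 2\,\ell(\hat\theta(X), \theta(p_v))$, and the separation hypothesis then upgrades this to
\[
    \ell(\hat\theta(X), \theta(p_v)) \;\geq\; \tau \cdot \dHam(\hat v(X), v) \;=\; \tau \sum_{i=1}^d \mathbf{1}\{\phi_i(X) \neq v_i\}.
\]
The maximum risk over $v \in \cE_d$ is at least the average over $v \sim \mathrm{Unif}(\cE_d)$; averaging swaps the sum with the expectation, and each summand identifies with $\tfrac{1}{2}\bigl(\Pr_{p_{+i}}[\phi_i \neq 1] + \Pr_{p_{-i}}[\phi_i \neq -1]\bigr)$, delivering the first claim.

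For Part 2, I fix a coordinate $i$, an $(\eps,\delta)$-DP test $\phi$, and the coupling $(X, Y)$ of $p_{+i}$ and $p_{-i}$ with $\mathbb{E}[\dHam(X,Y)] \le D$. Markov's inequality supplies $\Pr[A] \ge 9/10$ for the event $A := \{\dHam(X, Y) \le 10D\}$. For every realization $(x, y) \in A$ at distance $k := \dHam(x, y)$, group privacy for $(\eps,\delta)$-DP mechanisms yields $\Pr[\phi(y) = 1] \ge e^{-k\eps}\Pr[\phi(x) = 1] - k\delta$; rearranging and using $\Pr[\phi(x) = 1] \le 1$ together with $e^{-k\eps} - 1 \le 0$ produces
\[
    \Pr_\phi[\phi(x) = -1] + \Pr_\phi[\phi(y) = 1] \;\geq\; e^{-k\eps} - k\delta \;\geq\; e^{-10\eps D} - 10D\,\delta.
\]
Averaging over the coupling and restricting to $A$ contributes the factor $0.9$, giving the per-coordinate bound $0.9\,e^{-10\eps D} - 10\delta D$; inserting this into Part 1 completes the proof. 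The main obstacle is the bookkeeping around $\delta$: since the group-privacy deficit at Hamming distance $k$ grows like $k e^{k\eps}\delta$, naively integrating against only a first-moment control on $k$ is vacuous. The Markov truncation at $10D$ freezes the exponent at $e^{10\eps D}$ on a set of mass $\ge 0.9$, which is exactly where the constants $0.9$ and $10$ in the statement originate.
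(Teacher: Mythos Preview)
The paper does not supply its own proof of this lemma; it is imported verbatim from \cite{AcharyaSZ21}. Your argument is correct and is essentially the proof given there: the Assouad reduction to $d$ coordinate-wise tests via post-processing, followed by Markov truncation of the coupling at Hamming radius $10D$ and group privacy on the good event. One small point worth making explicit (which you handle implicitly): when $e^{-10\eps D}-10\delta D<0$ the restriction to the event $A$ would go the wrong way, but then the claimed lower bound $0.9\,e^{-10\eps D}-10\delta D$ is negative and the inequality is vacuous, so no case analysis is needed.
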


\subsection{Exponential Families}
\label{subsec:exp_fams}

In this work, we make frequent use of exponential families due to their expressive power.
This gives us a unified way of dealing with multiple commonly used distributions (e.g., Gaussians, Bernoullis).
We first state the definition of exponential families.

\begin{definition}[Exponential Family]
\label{def:exp_fams}
Let $S \subseteq \bR^d$ and $h \colon S \to \bR_+, T \colon S \to \bR^k$.
Given $\eta \in \bR^k$, we say that the density function $p_{\eta}$ belongs to the \emph{exponential family} $\cE\paren{T, h}$ if its density can be written in the form:
\[
    p_{\eta}\paren{x} = h\paren{x} \exp\paren{\eta^{\top} T\paren{x} - Z\paren{\eta}}, \forall x \in S,
\]
where:
\[
    Z\paren{\eta} = \ln\paren{\int\limits_S h\paren{x} \exp\paren{\eta^{\top} T\paren{x}} \, dx}.
\]
The functions $h$ and $T$ are referred to as the \emph{carrier measure} and the \emph{sufficient statistics} of the family, respectively.
Additionally, $Z$ is known as the \emph{log-partition function} and $\eta$ is the \emph{natural parameter vector}.
Finally, we denote the \emph{range of natural parameters} by $\cH \subseteq \bR^k$, which is the set of values of $\eta$ for which the log-partition function is well-defined $\paren{Z\paren{\eta} < \infty}$.\footnote{The definition is phrased in terms of continuous distributions, but applies to discrete distributions as well.
Indeed, $p_{\eta}$ will be a \emph{probability mass function}, whereas the integral will be replaced by a sum in the definition of $Z$.
We gave this version of the definition to avoid measure-theoretic notation, since it does not add much to our results.}
\end{definition}

We appeal to the following properties of exponential families.

\begin{proposition}[see~\cite{Jordan10}]
\label{prop:exp_prop}
Let $\cE\paren{T, h}$ be an exponential family parameterized by $\eta \in \bR^k$.
Then, for all $\eta$ in the interior of $\cH$, the following hold:

\begin{enumerate}
	\item The mean of the sufficient statistics is:
	\[
	    \mu_T \coloneqq \ex{X \sim p_{\eta}}{T\paren{X}} = \nabla_{\eta} Z\paren{\eta}.
	\]
	\item The covariance matrix of the sufficient statistics is:
	\[
	    \Sigma_T \coloneqq \ex{X \sim p_{\eta}}{\paren{T\paren{X} - \mu_T} \paren{T\paren{X} - \mu_T}^{\top}} = \nabla^2_{\eta} Z\paren{\eta}.
	\]
	\item For all $s \in \bR^k$ such that $\eta + s \in \cH$, the MGF of the sufficient statistics is:
	\[
	    \ex{X \sim p_{\eta}}{\exp\paren{s^{\top} T\paren{X}}} = \exp\paren{Z\paren{\eta + s} - Z\paren{\eta}}.
	\]
	\item The natural parameter range $\cH$ is a convex set, and the log-partition function $Z$ is convex.
\end{enumerate}
\end{proposition}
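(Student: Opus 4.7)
The plan is to derive all four items directly from the defining identity
\[
    e^{Z(\eta)} \;=\; \int_S h(x) \exp\paren{\eta^\top T(x)} \, dx,
\]
which holds for every $\eta$ in the interior of $\cH$, since essentially everything here reduces to either differentiating this relation or applying Hölder's inequality to it.

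First I would establish the MGF identity (3), as it is the cleanest computation and it implies the rest by Taylor expansion. For any $s$ with $\eta + s \in \cH$, simply write
\[
    \ex{X \sim p_{\eta}}{\exp\paren{s^\top T(X)}} \;=\; \int_S h(x) \exp\paren{(\eta+s)^\top T(x) - Z(\eta)} \, dx \;=\; \exp\paren{Z(\eta+s) - Z(\eta)},
\]
which is exactly (3). Next, for (1) and (2), I would differentiate $Z$ under the integral sign. One derivative gives
\[
    \nabla_{\eta} Z(\eta) \;=\; e^{-Z(\eta)} \int_S h(x) T(x) \exp\paren{\eta^\top T(x)} \, dx \;=\; \ex{X \sim p_{\eta}}{T(X)} \;=\; \mu_T,
\]
and a second derivative, applied to the identity $\nabla_\eta Z(\eta) = \int T(x) p_\eta(x)\,dx$, produces
\[
    \nabla^2_{\eta} Z(\eta) \;=\; \ex{X \sim p_{\eta}}{T(X) T(X)^\top} - \mu_T \mu_T^\top \;=\; \Sigma_T.
\]
Equivalently, one can read these off as the first and second coefficients of the Taylor expansion of $s \mapsto Z(\eta + s) - Z(\eta)$ at $s = 0$ using (3), since this function is the cumulant generating function of $T(X)$.

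For (4), convexity of $\cH$ and of $Z$ both drop out of a single Hölder computation. Given $\eta_0, \eta_1 \in \cH$ and $\lambda \in \brk{0, 1}$, split $h = h^{1-\lambda} \cdot h^{\lambda}$ and apply Hölder with conjugate exponents $1/(1-\lambda), 1/\lambda$:
\[
    e^{Z((1-\lambda)\eta_0 + \lambda \eta_1)} \;=\; \int_S \brk{h(x) e^{\eta_0^\top T(x)}}^{1-\lambda} \brk{h(x) e^{\eta_1^\top T(x)}}^{\lambda} dx \;\leq\; e^{(1-\lambda) Z(\eta_0) + \lambda Z(\eta_1)}.
\]
Finiteness of the right-hand side shows $(1-\lambda)\eta_0 + \lambda \eta_1 \in \cH$, and taking logarithms yields convexity of $Z$. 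Alternatively, convexity of $Z$ on the interior of $\cH$ is immediate from (2), since $\nabla^2_\eta Z(\eta) = \Sigma_T$ is positive semidefinite as a covariance matrix.

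The only genuinely technical point is justifying the exchange of differentiation and integration used in (1) and (2); this is the expected main obstacle and precisely where the hypothesis that $\eta$ lies in the \emph{interior} of $\cH$ is consumed. The standard remedy is to dominate each integrand of the form $h(x) \, T(x)^{\otimes k} \exp(\eta^\top T(x))$ by $h(x) \, C_k \paren{e^{(\eta + \kappa \Vec{1}_k)^\top T(x)} + e^{(\eta - \kappa \Vec{1}_k)^\top T(x)}}$ for sufficiently small $\kappa > 0$ with $\eta \pm \kappa \Vec{1}_k \in \cH$, using polynomial-versus-exponential growth, and then to invoke dominated convergence. All other steps are one-line manipulations of the exponential-family density.
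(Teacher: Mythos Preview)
The paper does not actually prove this proposition; it is stated with a citation to~\cite{Jordan10} and used as a black box. Your argument is the standard textbook derivation and is essentially correct: the MGF identity~(3) is a one-line change of variables, items~(1) and~(2) follow by differentiating under the integral (or, equivalently, by reading off the first two cumulants from~(3)), and~(4) is exactly the H\"older computation you wrote.

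One small technical slip: the domination you propose, using only the two shifts $\eta \pm \kappa \Vec{1}_k$, does not in general control $\abs{T_j(x)}\,e^{\eta^\top T(x)}$. For instance, if $T_j(x)$ is large while $\sum_i T_i(x) = 0$, the bound fails. The fix is routine: use coordinate shifts $\eta \pm \kappa e_j$ instead (or, for higher-order tensors, a finite sum of exponentials with shifts lying in a small cube around $\eta$), all of which remain in $\cH$ because $\eta$ is interior. Alternatively, one can bypass explicit domination entirely by noting that finiteness of the MGF on an open neighborhood of the origin already forces analyticity there, which simultaneously yields all moments and justifies the differentiation. Either way, the substance of your proof is fine.
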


For a number of facts from probability and statistics (both related and unrelated to exponential families), we refer the reader to Appendix~\ref{sec:facts_prob_stats}.

\section{Fingerprinting Proofs for Exponential Families}
\label{sec:fing_exp_fams}

This section is split into three parts.
Section~\ref{subsec:intuition} provides an overview of existing fingerprinting proofs and motivation for our result.
Also, it introduces notation we will use in subsequent sections.
Then, in Section~\ref{subsec:fing_lem}, we prove our new fingerprinting lemma, and in Section~\ref{subsec:LB}, we provide a general recipe about how the lemma can be applied.
We show that our technique can be used to recover existing lower bounds in Appendix~\ref{sec:existing},\footnote{The bounds we recover in Appendix~\ref{sec:existing} are all from~\cite{KamathLSU19}.
These include one lower bound for mean estimation of binary product distributions, and one for Gaussian mean estimation.
In the process of proving the latter, we identified a bug in the original proof from~\cite{KamathLSU19}, which we correct in this manuscript.
This resulted in a slightly different range for $\delta$ than the one given in the original theorem and proof.}
and give new lower bounds for covariance estimation in Section~\ref{sec:lb_cov_est}.
All the results that we give here hold for any exponential family (see Definition~\ref{def:exp_fams}).
Thus, we will assume that the data-generating distribution is some $p_{\eta}$ over $S \subseteq \bR^d$ with natural parameter vector $\eta \in \cH \subseteq \bR^k$ that belongs to a family $\cE\paren{T, h}$.

\subsection{Overview of Fingerprinting Proofs and Intuition for Our Result}
\label{subsec:intuition}

Fingerprinting proofs capture the trade-off between the accuracy of an estimator and the correlation between its output and the input samples.
For simplicity, assume that the data-generating distribution is univariate and comes from a family that is parameterized by its mean $\mu$, which we want to estimate.
Standard techniques based on minimax theory reduce testing to estimation and lead to lower bounds for the latter by considering a finite family of distributions for which testing is hard.
In contrast, fingerprinting lower bounds assume we have a subset of the domain of $\mu$ (commonly a $0$-centered real interval $\brk{\pm R}$) and a uniform prior over it, from which a value of $\mu \sim \cU\brk{\pm R}$ is drawn.
Thus, we randomly pick one of the distributions $\cD_{\mu}$ in the family and draw $n$ independent samples $X \sim \cD_{\mu}^{\otimes n}$.
Then, given an estimator $f \colon \bR^n \to \brk{\pm R}$ for $\mu$, we consider the quantity:
\[
    Z_i \coloneqq c\paren{R, \mu} \paren{f\paren{X} - \mu} \paren{X_i - \mu}, \forall \in \brk{n},
\]
which measures the correlation between the output of $f$ and the $i$-th sample ($c\paren{R, \mu} > 0$ is a scaling factor depending on the distribution family considered and exists for technical purposes).

Defining $Z \coloneqq \sum\limits_{i \in \brk{n}} Z_i$, all proofs first show a lower bound $L$ on:
\[
    \ex{\mu, X}{\underbrace{Z}_{\text{correlation}} + \underbrace{\paren{f\paren{X} - \mu}^2}_{\text{accuracy}}},
\]
implying that our estimator cannot at the same time be accurate (with respect to the Mean-Squared-Error (MSE)) and exhibit low correlation with all the individual samples.
This component of the proof is traditionally called the \emph{fingerprinting lemma}.

The steps we have described so far do not involve any privacy assumptions.
Privacy comes in the second step of the proof, where we upper-bound the correlation terms $\ex{\mu, X}{Z_i}$.
Specifically, we assume that $f$ is a private mechanism with MSE at most $\frac{L}{2}$.
Due to privacy, all correlation terms $Z_i$ should be upper-bounded by the same $U = U\paren{\eps, \delta}$.
Applying this for each sample leads to an inequality of the form $n U \geq \frac{L}{2} \iff n \geq \frac{L}{2 U}$, so the only thing that remains is to identify the appropriate parameter range for $\delta$ to prove the desired bound.

Our work is motivated by a fundamental weakness of existing fingerprinting proofs.
To prove lower bounds for high-dimensional distributions via fingerprinting, past works have assumed that the underlying distribution is a product distribution.
This made it possible to work component-wise using techniques for single-dimensional distributions.
This assumption is somewhat restrictive and makes it difficult to generalize the approach to settings with a richer correlation structure.
We prove a general lemma for exponential families $\cE\paren{T, h}$ and, in the process, show how to go beyond product distributions.

The proof necessitates us to shift our focus away from mean estimation.
Indeed, there is no general closed-form expression for the mean of a distribution belonging to an exponential family, and it is generally uncommon for such families to be parameterized by it.
Instead, we consider estimating the natural parameter vector $\eta$.
We assume the existence of vectors $\eta^{\paren{1}}, \eta^{\paren{2}}$ with $\eta^{\paren{1}} \le \eta^{\paren{2}}$, and define the intervals $I_j \coloneqq \brk{\eta^{\paren{1}}_j, \eta^{\paren{2}}_j}$ for each coordinate $j \in \brk{k}$.
Then, assuming that $\bigotimes\limits_{j \in \brk{k}} I_j \subseteq \cH$, we draw a natural parameter vector $\eta$ such that each coordinate $\eta_j$ is sampled uniformly at random from $I_j$.
Observe that $I_j$ may not be centered around the origin, which poses a minor technical difficulty.
To deal with this, we assume that our estimator does not estimate $\eta$ itself, but rather the deviation of $\eta$ from the midpoint $m \coloneqq \frac{\eta^{\paren{1}} + \eta^{\paren{2}}}{2}$ of $\bigotimes\limits_{j \in \brk{k}} I_j$.
The two problems are equivalent because any estimator $f \colon S^n \to \bigotimes\limits_{j \in \brk{k}} \brk{\pm \frac{\eta^{\paren{2}}_j - \eta^{\paren{1}}_j}{2}}$ for the latter can be transformed to an estimator for the former, simply by adding $m$ to it (and vice-versa).
Defining $R \coloneqq \eta^{\paren{2}} - \eta^{\paren{1}}$, the correlation between the estimate for $\eta_j$ and the sample $X_i$ becomes:
\begin{align}
    Z_i^j \coloneqq \brk{\frac{R_j^2}{4} - \paren{\eta_j - m_j}^2} \brk{f_j\paren{X} - \paren{\eta_j - m_j}} \paren{T_j\paren{X_i} - \mu_{T, j}}, \forall i \in \brk{n}, j \in \brk{k}. \label{eq:corr_term1}
\end{align}
Observe that now, $Z_i^j$ measures the correlation between $f$ and the \emph{sufficient statistics} of the samples.
This is in contrast to older fingerprinting proofs, in which the $Z_i^j$ measure correlation between $f$ and the samples themselves.
We consider our alternate parameterization to be natural in the context of exponential families.
Indeed, we know that, if we want to estimate the natural parameter vector of a distribution that belongs to an exponential family given a dataset $X \coloneqq \paren{X_1, \dots, X_n}$, it suffices to know the values $T\paren{X_1}, \dots, T\paren{X_n}$ instead of the samples themselves (see~\cite{Jordan10}).
Thus, despite $f$ receiving $X$, correlation ends up being measured with respect to $T\paren{X_i}, i \in \brk{n}$.

In addition to the above, we introduce three more correlation terms:
\begin{align}
    Z_i &\coloneqq \sum\limits_{j \in \brk{k}} Z_i^j = \sum\limits_{j \in \brk{k}} \brk{\frac{R_j^2}{4} - \paren{\eta_j - m_j}^2} \brk{f_j\paren{X} - \paren{\eta_j - m_j}} \paren{T_j\paren{X_i} - \mu_{T, j}}, \label{eq:corr_term2} \\
    Z^j &\coloneqq \sum\limits_{i \in \brk{n}} Z_i^j = \brk{\frac{R_j^2}{4} - \paren{\eta_j - m_j}^2} \brk{f_j\paren{X} - \paren{\eta_j - m_j}} \sum\limits_{i \in \brk{n}} \paren{T_j\paren{X_i} - \mu_{T, j}}, \label{eq:corr_term3} \\
      Z &\coloneqq \sum\limits_{i \in \brk{n}} Z_i = \sum\limits_{j \in \brk{k}} Z^j. \label{eq:corr_term4}
\end{align}
We motivate each of the above.
(\ref{eq:corr_term2}) describes the correlation between the $i$-th sample and our estimates for all components of $\eta$.
Moreover, (\ref{eq:corr_term3}) describes the correlation between our estimate for $\eta_j$ and all of the samples.
Finally, (\ref{eq:corr_term4}) describes the total correlation between the output of $f$ and the dataset.
Our fingerprinting lemma for exponential families will involve upper and lower-bounding the quantity:
\[
    \ex{\eta, X}{Z + \llnorm{f\paren{X} - \paren{\eta - m}}^2}.
\]

\subsection{The Main Lemma}
\label{subsec:fing_lem}

We prove our new fingerprinting lemma in two parts.
First, we assume we have a distribution $p_{\eta}$ that belongs to an exponential family and we want to estimate $\eta_j$ whereas $\eta_{- j}$ is fixed (Lemma~\ref{lem:fing}).
This setting is similar to the ones considered in~\cite{BunSU17, KamathLSU19}, which feature similar lemmas.
We then leverage Lemma~\ref{lem:fing} to get a result when the goal is to estimate all of $\eta$ (Lemma~\ref{lem:fing_gen}).

\begin{lemma}
\label{lem:fing}
Let $p_{\eta}$ be a distribution over $S \subseteq \bR^d$ belonging to an exponential family $\cE\paren{T, h}$ with natural parameter vector $\eta \in \cH \subseteq \bR^k$.
We assume that $I_j$ is non-degenerate, and that $\eta_j$ is randomly generated by drawing $\eta_j \sim \cU\paren{I_j}$, whereas $\eta_{- j}$ is fixed.
Then, for any function $f_j \colon S^n \to \brk{\pm \frac{R_j}{2}}$ that takes as input a dataset $X \coloneqq \paren{X_1, \dots, X_n} \sim p_{\eta}^{\bigotimes n}$, we have:
\[
    \ex{\eta_j, X}{Z^j + \brk{f_j\paren{X} - \paren{\eta_j - m_j}}^2} \geq \frac{R_j^2}{12}.
\]
\end{lemma}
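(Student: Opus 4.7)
The plan is to prove the inequality by integration by parts in $\eta_j$, leveraging the defining identity of exponential families that makes $\sum_i \paren{T_j(X_i) - \mu_{T,j}}$ equal to the score statistic with respect to $\eta_j$. Concretely, set $g(\eta_j) \coloneqq R_j^2/4 - (\eta_j - m_j)^2$ --- this is precisely the scaling factor appearing in the definition of $Z^j$, and it was chosen so as to vanish at the two endpoints of $I_j$. From Proposition~\ref{prop:exp_prop}, $\nabla_\eta Z(\eta) = \mu_T$, so $\partial_{\eta_j} p_\eta(x) = p_\eta(x)\paren{T_j(x) - \mu_{T,j}}$, and hence the product rule yields
$$
    \frac{\partial}{\partial \eta_j}\, p_\eta^{\bigotimes n}(X) \;=\; p_\eta^{\bigotimes n}(X) \sum_{i \in [n]} \paren{T_j(X_i) - \mu_{T,j}}.
$$

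Using this, I would rewrite $\ex{\eta_j, X}{Z^j}$ as an iterated integral in which the score-statistic sum is absorbed into $\partial_{\eta_j} p_\eta^{\bigotimes n}$, and then integrate by parts with respect to $\eta_j$ (treating $X$ as fixed). The boundary term vanishes because $g(m_j \pm R_j/2) = 0$ --- this is the whole reason for including the $g$ factor in $Z^j$ in the first place. What remains, after using $g'(\eta_j) = -2(\eta_j - m_j)$ and evaluating $\ex{\eta_j}{g(\eta_j)} = R_j^2/6$ (via $\ex{\eta_j}{(\eta_j - m_j)^2} = R_j^2/12$ for $\eta_j \sim \cU(I_j)$), is
$$
    \ex{\eta_j, X}{Z^j} \;=\; 2\,\ex{\eta_j, X}{(\eta_j - m_j)\brk{f_j(X) - (\eta_j - m_j)}} \;+\; \frac{R_j^2}{6}.
$$

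To finish, I would complete the square via the identity $f_j(X)^2 = \brk{f_j(X) - (\eta_j - m_j)}^2 + 2(\eta_j - m_j)\brk{f_j(X) - (\eta_j - m_j)} + (\eta_j - m_j)^2$. Adding $\ex{\brk{f_j(X) - (\eta_j - m_j)}^2}$ to the previous display and using this identity to cancel the cross term yields
$$
    \ex{\eta_j, X}{Z^j + \brk{f_j(X) - (\eta_j - m_j)}^2} \;=\; \frac{R_j^2}{12} + \ex{\eta_j, X}{f_j(X)^2} \;\geq\; \frac{R_j^2}{12}.
$$
The only step that requires care is the interchange of differentiation and integration in the parts step; this is standard for exponential families in the interior of $\cH$, where the density is smooth in $\eta$ and all moments of the sufficient statistics exist, and compactness of $I_j$ provides the needed uniform bounds. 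The choice of the quadratic weight $g$ (matching the \emph{Beta-style} weight used in classical Gaussian fingerprinting lemmas) is exactly what reconciles integration by parts with the use of a bounded prior interval, and is really the only subtlety in the whole argument.
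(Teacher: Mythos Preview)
Your proof is correct and follows essentially the same approach as the paper's: both use the exponential-family score identity $\partial_{\eta_j} p_\eta^{\otimes n}(X) = p_\eta^{\otimes n}(X)\sum_i (T_j(X_i)-\mu_{T,j})$, integrate by parts in $\eta_j$ against the quadratic weight $R_j^2/4-(\eta_j-m_j)^2$ (whose vanishing at the endpoints kills the boundary term), and finish by expanding the square and dropping $\ex{}{f_j(X)^2}\ge 0$. The only organizational difference is that the paper first takes $\ex{X}{\cdot}$ to write $\ex{X}{Z^j}$ as the weight times $\partial_{\eta_j}\ex{X}{f_j(X)}$ and then integrates by parts, whereas you swap the order of integration and integrate by parts pointwise in $X$; the two computations are equivalent and yield the same identity $\ex{\eta_j,X}{Z^j+[f_j(X)-(\eta_j-m_j)]^2}=R_j^2/12+\ex{\eta_j,X}{f_j(X)^2}$.
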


\begin{proof}
Let $g_j \colon \bigotimes\limits_{j \in \brk{k}} I_j \to \brk{\pm \frac{R_j}{2}}$ with $g_j\paren{\eta} \coloneqq \ex{X}{f_j\paren{X}}$.
$g_j$ can equivalently be written as:
\[
    g_j\paren{\eta} = \bigintsss\limits_{\bR^d} \dots \bigintsss\limits_{\bR^d} \paren{\prod\limits_{i \in \brk{n}} p_{\eta}\paren{x_i}} f_j\paren{x_1, \dots, x_n} \, dx_1 \dots dx_n.
\]
Observe that:
\begin{align*}
    \paren{\prod\limits_{i \in \brk{n}} p_{\eta}\paren{x_i}} &= \paren{\prod\limits_{i \in \brk{n}} h\paren{x_i}} \exp{\paren{\iprod{\eta, \sum\limits_{i \in \brk{n}} T\paren{x_i}} - n Z\paren{\eta}}} \\
    \implies \frac{\partial}{\partial \eta_j} \paren{\prod\limits_{i \in \brk{n}} p_{\eta}\paren{x_i}} &= \paren{\prod\limits_{i \in \brk{n}} p_{\eta}\paren{x_i}} \sum\limits_{i \in \brk{n}} \paren{T_j\paren{x_i} - \frac{\partial}{\partial \eta_j} \paren{Z\paren{\eta}}} \\
                                                                                    &= \paren{\prod\limits_{i \in \brk{n}} p_{\eta}\paren{x_i}}  \sum\limits_{i \in \brk{n}} \paren{T_j\paren{x_i} - \mu_{T, j}},
\end{align*}
where the last equality uses the first part of Proposition~\ref{prop:exp_prop}.

Calculating the partial derivative of $g_j$ with respect to $\eta_j$ and changing the order of differentiation and integration, we get:
\begin{align*}
    \frac{\partial}{\partial \eta_j} \paren{g_j\paren{\eta}} &= \bigintsss\limits_{\bR^d} \dots \bigintsss\limits_{\bR^d} \paren{\prod\limits_{i \in \brk{n}} p_{\eta}\paren{x_i}} f_j\paren{x_1, \dots, x_n} \sum\limits_{i \in \brk{n}} \paren{T_j\paren{x_i} - \mu_{T, j}} \, dx_1 \dots dx_n \\
                                                             &= \ex{X}{f_j\paren{X} \sum\limits_{i \in \brk{n}} \paren{T_j\paren{X_i} - \mu_{T, j}}} \\
                                          \overset{\paren{a}}&{=} \ex{X}{f_j\paren{X} \sum\limits_{i \in \brk{n}} \paren{T_j\paren{X_i} - \mu_{T, j}}} - \ex{X}{\paren{\eta_j - m_j} \sum\limits_{i \in \brk{n}} \paren{T_j\paren{X_i} - \mu_{T, j}}} \\
                                                             &= \ex{X}{\brk{f_j\paren{X} - \paren{\eta_j - m_j}} \sum\limits_{i \in \brk{n}} \paren{T_j\paren{X_i} - \mu_{T, j}}},
\end{align*}
where $\paren{a}$ relies on the observation that, conditioned on $\eta_j$, $\eta_j - m_j$ and $\sum\limits_{i \in \brk{n}} \paren{T_j\paren{X_i} - \mu_{T, j}}$ are independent and $\ex{X}{\sum\limits_{i \in \brk{n}} \paren{T_j\paren{X_i} - \mu_{T, j}}} = 0$.

We proceed to define the function $G_j \colon \bigotimes\limits_{j \in \brk{k}} I_j \to \bR$ as:
\begin{align*}
    G_j\paren{\eta} \coloneqq \ex{X}{Z^j} &= \brk{\frac{R_j^2}{4} - \paren{\eta_j - m_j}^2} \ex{X}{\brk{f_j\paren{X} - \paren{\eta_j - m_j}} \sum\limits_{i \in \brk{n}} \paren{T_j\paren{X_i} - \mu_{T, j}}} \\
                                          &= \brk{\frac{R_j^2}{4} - \paren{\eta_j - m_j}^2} \frac{\partial g_j}{\partial \eta_j}\paren{\eta}.
\end{align*}
Calculating the expectation of $G_j$ with respect to $\eta_j$ yields:
\begin{align*}
    \ex{\eta_j}{G_j\paren{\eta}} &= \ex{\eta_j}{\brk{\frac{R_j^2}{4} - \paren{\eta_j - m_j}^2} \frac{\partial g_j}{\partial \eta_j}\paren{\eta}} \\
                                 &= \frac{1}{\abs{I_j}} \int\limits_{\eta^{\paren{1}}_j}^{\eta^{\paren{2}}_j} \brk{\frac{R_j^2}{4} - \paren{\eta_j - m_j}^2} \frac{\partial g_j}{\partial \eta_j}\paren{\eta_1, \dots, \eta_j, \dots, \eta_k} \, d\eta_j \\
              \overset{\paren{a}}&{=} \cancelto{0}{\frac{1}{\abs{I_j}} \brk{\frac{R_j^2}{4} - \paren{\eta_j - m_j}^2} g_j\paren{\eta_1, \dots, \eta_j, \dots, \eta_k} \Bigg|_{\eta_j = \eta^{\paren{1}}_j}^{\eta_j = \eta^{\paren{2}}_j}} \\ 
                                 &\quad+ 2 \frac{1}{\abs{I_j}} \int\limits_{\eta^{\paren{1}}_j}^{\eta^{\paren{2}}_j} \paren{\eta_j - m_j} g_j\paren{\eta_1, \dots, \eta_j, \dots, \eta_k} \, d\eta_j \\
                                 &= 2 \ex{\eta_j}{\paren{\eta_j - m_j} g_j\paren{\eta}}.
\end{align*}
where $\paren{a}$ uses integration by parts that $m_j$ has been defined to be equal to $\frac{\eta^{\paren{1}}_j + \eta^{\paren{2}}_j}{2}$ in Section~\ref{subsec:intuition}.

Based on this, we remark that the inequality we wish to prove can be written in the form:
\begin{align}
    \ex{\eta_j, X}{2 \paren{\eta_j - m_j} f_j\paren{X} + \brk{f_j\paren{X} - \paren{\eta_j - m_j}}^2} \geq \frac{R_j^2}{12}. \label{eq:finger1}
\end{align}
Appealing to linearity of expectation, we focus on the second term and get:
\begin{align*}
    \ex{\eta_j, X}{\brk{f_j\paren{X} - \paren{\eta_j - m_j}}^2} &= \ex{\eta_j, X}{f_j^2\paren{X} - 2 \paren{\eta_j - m_j} f_j\paren{X} + \paren{\eta_j - m_j}^2} \\
                                                                &\geq - 2 \ex{\eta_j, X}{\paren{\eta_j - m_j} f_j\paren{X}} + \ex{\eta_j}{\paren{\eta_j - m_j}^2} \\
                                                                &= - 2 \ex{\eta_j, X}{\paren{\eta_j - m_j} f_j\paren{X}} + \var{\eta_j}{\eta_j} \\
                                                                &= - 2 \ex{\eta_j, X}{\paren{\eta_j - m_j} f_j\paren{X}} + \frac{R_j^2}{12}.
\end{align*}
Substituting this to (\ref{eq:finger1}) yields the desired result.
\end{proof}

We now generalize the previous lemma in the setting where all the components of $\eta$ are drawn independently and uniformly at random from intervals.

\begin{lemma}
\label{lem:fing_gen}
Let $p_{\eta}$ be a distribution over $S \subseteq \bR^d$ belonging to an exponential family $\cE\paren{T, h}$ with natural parameter vector $\eta \in \cH \subseteq \bR^k$.
We assume that $\eta$ is randomly generated by drawing independently $\eta_j \sim \cU\paren{I_j}, \forall j \in \brk{k}$.
Then, for any function $f \colon S^n \to \bigotimes\limits_{j \in \brk{k}} \brk{\pm \frac{R_j}{2}}$ that takes as input a dataset $X \coloneqq \paren{X_1, \dots, X_n} \sim p_{\eta}^{\bigotimes n}$, we have:
\[
    \ex{\eta, X}{Z + \llnorm{f\paren{X} - \paren{\eta - m}}^2} \geq \frac{\llnorm{R}^2}{12}.
\]
\end{lemma}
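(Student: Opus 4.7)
The plan is to reduce Lemma~\ref{lem:fing_gen} to a coordinate-by-coordinate application of Lemma~\ref{lem:fing} via a conditioning argument. The key observation is that the quantity to be lower-bounded decomposes additively across coordinates: by the definition of $Z$ in (\ref{eq:corr_term4}) we have $Z = \sum_{j \in [k]} Z^j$, and trivially $\|f(X) - (\eta - m)\|^2 = \sum_{j \in [k]} [f_j(X) - (\eta_j - m_j)]^2$. Hence by linearity of expectation it suffices to establish the per-coordinate bound
\[
    \ex{\eta, X}{Z^j + \brk{f_j\paren{X} - \paren{\eta_j - m_j}}^2} \geq \frac{R_j^2}{12}, \quad \forall j \in \brk{k},
\]
and then sum over $j \in [k]$ to recover $\frac{\|R\|^2}{12}$.

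To prove each per-coordinate bound, I would fix an arbitrary value of $\eta_{-j}$ and take the conditional expectation given $\eta_{-j}$. Under this conditioning, $\eta_j \sim \cU(I_j)$ still holds by the independence of the coordinates of $\eta$, and $X \sim p_\eta^{\otimes n}$ is drawn from an exponential-family distribution whose natural parameter's $j$-th coordinate is uniform on $I_j$ and whose remaining coordinates are fixed. The function $f_j \colon S^n \to [\pm R_j / 2]$ does not depend on $\eta$, so it is a valid estimator in the sense required by Lemma~\ref{lem:fing}. Applying that lemma in the conditional world gives
\[
    \ex{\eta_j, X \mid \eta_{-j}}{Z^j + \brk{f_j\paren{X} - \paren{\eta_j - m_j}}^2} \geq \frac{R_j^2}{12}.
\]

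Finally, the law of total expectation (averaging over $\eta_{-j}$) lifts this pointwise bound to the unconditional bound, and summing over $j$ yields the claim. I do not anticipate a real obstacle here; the only point requiring care is verifying that the hypotheses of Lemma~\ref{lem:fing} are satisfied uniformly in $\eta_{-j}$, which follows because the hyper-rectangle assumption $\bigotimes_j I_j \subseteq \cH$ guarantees that, for every fixed $\eta_{-j}$, the parameter $\eta$ stays inside the valid natural parameter range as $\eta_j$ ranges over $I_j$, so the gradient/integration-by-parts manipulations underlying Lemma~\ref{lem:fing} remain valid.
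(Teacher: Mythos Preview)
Your proposal is correct and mirrors the paper's proof essentially line by line: decompose coordinate-wise, condition on $\eta_{-j}$, apply Lemma~\ref{lem:fing} to the inner expectation, then average over $\eta_{-j}$ and sum. The only addition in the paper is a one-line remark disposing of degenerate intervals $I_j$ (where $R_j = 0$ and Lemma~\ref{lem:fing} does not formally apply), for which $Z^j$, $f_j\paren{X} - \paren{\eta_j - m_j}$, and $R_j^2/12$ all vanish trivially.
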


\begin{proof}
Without loss of generality, we assume that all the intervals $I_j$ are non-degenerate.
We have:
\begin{align*}
    \ex{\eta, X}{Z + \llnorm{f\paren{X} - \paren{\eta - m}}^2} &= \ex{\eta, X}{\sum\limits_{j \in \brk{k}} \brc{Z^j + \brk{f_j\paren{X} - \paren{\eta_j - m_j}}^2}} \\
                                                               &= \sum\limits_{j \in \brk{k}} \ex{\eta, X}{Z^j + \brk{f_j\paren{X} - \paren{\eta_j - m_j}}^2} \\
                                                               &= \sum\limits_{j \in \brk{k}} \ex{\eta_{- j}}{\ex{\eta_j, X}{Z^j + \brk{f_j\paren{X} - \paren{\eta_j - m_j}}^2}} \\
                                            \overset{\paren{a}}&{\geq} \sum\limits_{j \in \brk{k}} \ex{\eta_{- j}}{\frac{R_j^2}{12}} \\
                                                               &= \frac{\llnorm{R}^2}{12},
\end{align*}
where $\paren{a}$ is by a direct application of Lemma~\ref{lem:fing}.

If there exists a degenerate interval $I_j$, we have $Z^j = f_j\paren{X} - \paren{\eta_j - m_j} = R_j = 0$.
Thus, it suffices to apply the same argument as before for the non-degenerate intervals.
\end{proof}

\subsection{From the Lemma to Lower Bounds}
\label{subsec:LB}

We leverage Lemma~\ref{lem:fing_gen} to give a general recipe for proving lower bounds for private estimation.
The overall structure of our argument is similar to that of past fingerprinting proofs but, along the way, uses specific properties of exponential families to deal with the fact that we do not have a product distribution.

We assume the existence of an $\paren{\eps, \delta}$-DP mechanism $M \colon S^n \to \bigotimes\limits_{j \in \brk{k}} \brk{\pm \frac{R_j}{2}}$ with $\eps \in \brk{0, 1}$ and $\delta \geq 0$ such that, for any distribution $p_{\eta}$ with $\eta \in \bigotimes\limits_{j \in \brk{k}} I_j$, we have for $X \sim p_{\eta}^{\bigotimes n}$:
\[
    \ex{X, M}{\llnorm{M\paren{X} - \paren{\eta - m}}^2} \le \alpha^2 \le \frac{\llnorm{R}^2}{24}.
\]
Conditioning on the randomness of $M$, Lemma~\ref{lem:fing_gen} with $f \equiv M$ yields:
\begin{align}
    \ex{\eta, X, M}{Z} = \ex{M}{\ex{\eta, X}{Z}} \geq \frac{\llnorm{R}^2}{12} - \ex{\eta, X, M}{\llnorm{M\paren{X} - \paren{\eta - m}}^2} \geq \frac{\llnorm{R}^2}{24}. \label{eq:LB}
\end{align}
Thus, our intent is to upper-bound the LHS by a function of $n$ using the definition of privacy, which will allow us to identify a lower bound to get the desired accuracy.
To do so, we first need to upper-bound $\ex{\eta, X, M}{Z_i}$.
For that reason, in addition to the correlation terms (\ref{eq:corr_term1})-(\ref{eq:corr_term4}) defined in Section~\ref{subsec:intuition}, we introduce one more correlation term.
Let $X_i'$ be an independently drawn sample from $p_{\eta}$ and let $X_{\sim i}$ be the dataset that is the same as $X$ with $X_i$ replaced by $X_i'$.
We define:
\begin{align*}
                s_j &\coloneqq \brk{\frac{R_j^2}{4} - \paren{\eta_j - m_j}^2} \brk{M_j\paren{X_{\sim i}} - \paren{\eta_j - m_j}}, \forall j \in \brk{k}, \\
    \widetilde{Z}_i &\coloneqq \iprod{s, T\paren{X_i} - \mu_T} = \sum\limits_{j \in \brk{k}} s_j \paren{T_j\paren{X_i} - \mu_{T, j}}.
\end{align*}
This measures the effect that replacing an individual sample has on the output of the estimator.
Observe that, conditioning on $\eta$, $X_{\sim i}$ and $X_i$ are independent.
This implies that $\ex{X_{\sim i}, X_i, M}{\widetilde{Z}_i} = 0$ and $\var{X_{\sim i}, X_i, M}{\widetilde{Z}_i} = \ex{X_{\sim i}, X_i, M}{\widetilde{Z}_i^2}$.
Additionally, calculating the variance of $\widetilde{Z}_i$ requires us to reason about the covariance matrix of the sufficient statistics $\Sigma_T$, since the calculation involves terms of the form $\paren{T_{j_1}\paren{X_i} - \mu_{T, j_1}} \paren{T_{j_2}\paren{X_i} - \mu_{T, j_2}}, j_1, j_2 \in \brk{k}$.

The previous remarks are summarized in the following lemma.

\begin{lemma}
\label{lem:upp_bound1}
Conditioning on $\eta$, it holds that:
\begin{align*}
     \ex{X_{\sim i}, X_i, M}{\widetilde{Z}_i} &= 0, \\
    \var{X_{\sim i}, X_i, M}{\widetilde{Z}_i} &= \ex{X_{\sim i}, X_i, M}{\widetilde{Z}_i^2} = \ex{X_{\sim i}, M}{s^{\top} \Sigma_T s}.
\end{align*}
\end{lemma}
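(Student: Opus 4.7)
The plan is to exploit the key structural observation that, conditionally on $\eta$, the vector $s$ depends only on $X_{\sim i}$ and on the internal randomness of $M$, while $T(X_i) - \mu_T$ depends only on the fresh sample $X_i$; moreover $X_i$ is independent of $X_{\sim i}$ (the latter uses the fresh copy $X_i'$ in place of $X_i$) and of the internal coins of $M$. This lets us handle $\widetilde{Z}_i$ by a single application of the tower property, integrating out $X_i$ on the inside and leaving $s$ as a ``frozen'' vector.

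For the first identity, I would write
\[
\ex{X_{\sim i}, X_i, M}{\widetilde{Z}_i} \;=\; \ex{X_{\sim i}, M}{\,\ex{X_i}{\iprod{s, T(X_i)-\mu_T} \,\big|\, X_{\sim i}, M}\,} \;=\; \ex{X_{\sim i}, M}{\iprod{s,\, \ex{X_i}{T(X_i)} - \mu_T}} \;=\; 0,
\]
where pulling $s$ outside the inner expectation is justified by the independence noted above, and the inner expectation vanishes by the first part of Proposition~\ref{prop:exp_prop} (which gives $\ex{X_i \sim p_\eta}{T(X_i)} = \mu_T$, using that we have conditioned on $\eta$).

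For the second identity, since we have just shown $\ex{}{\widetilde{Z}_i}=0$, the variance equals the second moment. I would expand
\[
\widetilde{Z}_i^{\,2} \;=\; \bigl(s^\top (T(X_i)-\mu_T)\bigr)^2 \;=\; s^\top (T(X_i)-\mu_T)(T(X_i)-\mu_T)^\top s,
\]
and again condition on $(X_{\sim i}, M)$ so that $s$ comes outside:
\[
\ex{X_i}{\widetilde{Z}_i^{\,2} \,\big|\, X_{\sim i}, M} \;=\; s^\top\, \ex{X_i}{(T(X_i)-\mu_T)(T(X_i)-\mu_T)^\top}\, s \;=\; s^\top \Sigma_T s,
\]
using the second part of Proposition~\ref{prop:exp_prop}. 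Taking an outer expectation over $X_{\sim i}, M$ yields the claimed formula.

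There is no real obstacle here; the only thing to be careful about is bookkeeping the conditioning (the statement conditions on $\eta$, so $\mu_T$ and $\Sigma_T$ are deterministic throughout, and $s$ depends on $\eta$ implicitly through $m$ and $R$ but those are fixed once $\eta$ is). Everything else follows from linearity of expectation, the tower property, and the two moment identities for exponential families already recorded in Proposition~\ref{prop:exp_prop}.
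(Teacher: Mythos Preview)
Your proposal is correct and follows essentially the same approach as the paper's proof: both exploit the conditional independence of $s$ (a function of $X_{\sim i}$ and $M$) and $T(X_i)-\mu_T$ given $\eta$, then invoke the first and second moment identities from Proposition~\ref{prop:exp_prop}. The only cosmetic difference is that the paper expands everything coordinate-wise in double sums before recollecting into $s^\top \Sigma_T s$, whereas you work directly in vector/matrix form via the tower property; your version is slightly cleaner but the content is identical.
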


\begin{proof}
We start by computing the mean.
We have:
\begin{align*}
    \ex{X_{\sim i}, X_i, M}{\widetilde{Z}_i} = \ex{X_{\sim i}, X_i, M}{\sum\limits_{j \in \brk{k}} s_j \paren{T_j\paren{X_i} - \mu_{T, j}}} &= \sum\limits_{j \in \brk{k}} \ex{X_{\sim i}, X_i, M}{s_j \paren{T_j\paren{X_i} - \mu_{T, j}}} \\
                                                                                                                         \overset{\paren{a}}&{=} \sum\limits_{j \in \brk{k}} \ex{X_{\sim i}, M}{s_j} \cancelto{0}{\ex{X_i}{T_j\paren{X_i} - \mu_{T, j}}} \\
                                                                                                                                            &= 0,
\end{align*}
where $\paren{a}$ relies on the independence of $s_j = \brk{\frac{R_j^2}{4} - \paren{\eta_j - m_j}^2} \brk{M_j\paren{X_{\sim i}} - \paren{\eta_j - m_j}}$ and $T_j\paren{X_i} - \mu_{T, j}$ conditioned on $\eta$.

We now proceed with the variance:
\begin{align*}
    \var{X_{\sim i}, X_i, M}{\widetilde{Z}_i} &= \ex{X_{\sim i}, X_i, M}{\widetilde{Z}_i^2} \\
                                              &= \ex{X_{\sim i}, X_i, M}{\sum\limits_{j_1 \in \brk{k}} \sum\limits_{j_2 \in \brk{k}} s_{j_1} s_{j_2} \paren{T_{j_1}\paren{X_i} - \mu_{T, j_1}} \paren{T_{j_2}\paren{X_i} - \mu_{T, j_2}}} \\
                                              &= \sum\limits_{j_1 \in \brk{k}} \sum\limits_{j_2 \in \brk{k}} \ex{X_{\sim i}, M}{s_{j_1} s_{j_2}} \ex{X_i}{\paren{T_{j_1}\paren{X_i} - \mu_{T, j_1}} \paren{T_{j_2}\paren{X_i} - \mu_{T, j_2}}} \\
                           \overset{\paren{a}}&{=} \sum\limits_{j_1 \in \brk{k}} \sum\limits_{j_2 \in \brk{k}} \ex{X_{\sim i}, M}{s_{j_1} s_{j_2}} \Sigma_{T, j_1 j_2} \\
                                              &= \ex{X_{\sim i}, M}{\sum\limits_{j_1 \in \brk{k}} \sum\limits_{j_2 \in \brk{k}} s_{j_1} \Sigma_{T, j_1 j_2} s_{j_2}} \\
                                              &= \ex{X_{\sim i}, M}{s^{\top} \Sigma_T s}.
\end{align*}
where $\paren{a}$ is by the second part of Proposition~\ref{prop:exp_prop}.
\end{proof}

Now, we show how to upper-bound $\ex{X, M}{Z_i}$ using properties of differential privacy.
We appeal to the definition of DP to relate $\ex{X, M}{Z_i}$ with $\ex{X_{\sim i}, X_i, M}{\widetilde{Z}_i}$ and $\ex{X_{\sim i}, X_i, M}{\widetilde{Z}_i^2}$.
The proof involves splitting $Z_i$ in its positive and negative components and applying the definition of DP to them.
The main steps of the proof are standard (e.g., see~\cite{SteinkeU17b}), but we include everything for completeness.

\begin{lemma}
\label{lem:upp_bound2}
Conditioning on $\eta$, for any $\eps \in \brk{0, 1}, \delta \geq 0$, and $T > 0$, it holds that:
\[
    \ex{X, M}{Z_i} \le 2 \delta T + 2 \eps \sqrt{\ex{X_{\sim i}, M}{s^{\top} \Sigma_T s}} + 2 \int\limits_T^{\infty} \pr{X_i}{\llnorm{T\paren{X_i} - \mu_T} > \frac{4 t}{\infnorm{R}^3 \sqrt{k}}} \, dt.
\]
\end{lemma}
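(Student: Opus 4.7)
The plan is to reduce to Lemma~\ref{lem:upp_bound1} by applying differential privacy in order to bound $\ex{X, M}{Z_i}$ in terms of quantities involving $\widetilde{Z}_i$, whose first moment vanishes.

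First, I would condition on the full data $\paren{X, X_i', \eta}$ so that the vector $a_j \coloneqq \brk{\frac{R_j^2}{4} - \paren{\eta_j - m_j}^2} \paren{T_j\paren{X_i} - \mu_{T, j}}$ is fixed. Under this conditioning, $Z_i = \phi\paren{M\paren{X}}$ and $\widetilde{Z}_i = \phi\paren{M\paren{X_{\sim i}}}$ for the common linear functional $\phi\paren{w} \coloneqq \iprod{a, w - \paren{\eta - m}}$; since $X$ and $X_{\sim i}$ are neighboring, $M\paren{X}$ and $M\paren{X_{\sim i}}$ are $\paren{\eps, \delta}$-close. Decomposing $\phi = g_+ - g_-$ with $g_{\pm} \geq 0$ and using the tail-integral identity $\ex{M}{g\paren{M\paren{\cdot}}} = \int_0^{\infty} \pr{M}{g\paren{M\paren{\cdot}} \geq t} \, dt$, I would apply DP at the level of probabilities on $\brk{0, T}$ and treat the truncation on $\brk{T, \infty}$ directly. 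Applying DP to $g_+$ in the direction $M\paren{X} \to M\paren{X_{\sim i}}$ and to $g_-$ in the reverse direction, then taking outer expectations and subtracting, yields
\[
    \ex{X, M}{Z_i} \le e^{\eps} \ex{X_{\sim i}, X_i, M}{\paren{\widetilde{Z}_i}_+} - e^{- \eps} \ex{X_{\sim i}, X_i, M}{\paren{\widetilde{Z}_i}_-} + 2 T \delta + \paren{\text{tail terms}}.
\]

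I would then rewrite the leading difference as $\ex{X_{\sim i}, X_i, M}{\widetilde{Z}_i} + \paren{e^{\eps} - 1} \ex{X_{\sim i}, X_i, M}{\paren{\widetilde{Z}_i}_+} + \paren{1 - e^{- \eps}} \ex{X_{\sim i}, X_i, M}{\paren{\widetilde{Z}_i}_-}$. Lemma~\ref{lem:upp_bound1} kills the first summand, and the bounds $e^{\eps} - 1, 1 - e^{- \eps} \le 2 \eps$ for $\eps \in \brk{0, 1}$ together with $\paren{\widetilde{Z}_i}_+ + \paren{\widetilde{Z}_i}_- = \abs{\widetilde{Z}_i}$ collapse the remainder to $2 \eps \ex{X_{\sim i}, X_i, M}{\abs{\widetilde{Z}_i}}$. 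Jensen's inequality followed by the variance identity in Lemma~\ref{lem:upp_bound1} upper-bounds this by $2 \eps \sqrt{\ex{X_{\sim i}, M}{s^{\top} \Sigma_T s}}$, matching the second term of the claim.

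For the tail pieces I would use the deterministic bound $\abs{Z_i}, \abs{\widetilde{Z}_i} \le \frac{\infnorm{R}^3 \sqrt{k}}{4} \llnorm{T\paren{X_i} - \mu_T}$, obtained by Cauchy-Schwarz applied to $\phi$: the coefficient vector satisfies $\llnorm{a} \le \frac{\infnorm{R}^2}{4} \llnorm{T\paren{X_i} - \mu_T}$ since $\frac{R_j^2}{4} - \paren{\eta_j - m_j}^2 \le \frac{\infnorm{R}^2}{4}$, and $\llnorm{w - \paren{\eta - m}} \le \sqrt{k} \infnorm{R}$ because both $w$ and $\eta - m$ lie in $\bigotimes\limits_{j \in \brk{k}} \brk{\pm R_j / 2}$. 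Substituting into the two truncated tails produces $2 \int\limits_T^{\infty} \pr{X_i}{\llnorm{T\paren{X_i} - \mu_T} > \frac{4 t}{\infnorm{R}^3 \sqrt{k}}} \, dt$. The main subtlety will be applying DP in the correct direction on $g_+$ versus $g_-$ so that the $\paren{e^{\eps} - 1}$ and $\paren{1 - e^{- \eps}}$ factors line up against the mean-zero identity $\ex{X_{\sim i}, X_i, M}{\widetilde{Z}_i} = 0$ rather than compounding into higher-order terms; the rest is straightforward bookkeeping.
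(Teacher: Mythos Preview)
Your proposal is correct and follows essentially the same route as the paper's proof: split the tail-integral representation of $\ex{X,M}{Z_i}$ at the threshold $T$, apply the $\paren{\eps,\delta}$-DP inequality to the events $\{Z_i>t\}$ and $\{Z_i<-t\}$ on $[0,T]$ (in opposite directions) to pass to $\widetilde{Z}_i$, use Lemma~\ref{lem:upp_bound1} to kill $\ex{}{\widetilde{Z}_i}$ and to bound $\ex{}{\abs{\widetilde{Z}_i}}\le\sqrt{\ex{}{s^\top\Sigma_T s}}$, and finally control both tails via the deterministic inequality $\abs{Z_i},\abs{\widetilde{Z}_i}\le\frac{\infnorm{R}^3\sqrt{k}}{4}\llnorm{T(X_i)-\mu_T}$. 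The only cosmetic difference is that you obtain the last deterministic bound by Cauchy--Schwarz on $\iprod{a,w-(\eta-m)}$, whereas the paper uses the triangle inequality termwise followed by $\ell_1\le\sqrt{k}\,\ell_2$; both give the same constant.
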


\begin{proof}
Consider the random variables $Z_{i, +} = \max\brc{Z_i, 0}$ and $Z_{i, -} = - \min\brc{Z_i, 0}$.
These random variables are non-negative, so for any $T > 0$, we have:
\begin{align}
    \ex{X, M}{Z_i} &= \ex{X, M}{Z_{i, +}} - \ex{\eta, X, M}{Z_{i, -}} \nonumber \\
                   &= \int\limits_0^{\infty} \pr{X, M}{Z_{i, +} > t} \, dt - \int\limits_0^{\infty} \pr{X, M}{Z_{i, -} > t} \, dt \nonumber \\
                   &= \int\limits_0^{\infty} \pr{X, M}{Z_i > t} \, dt - \int\limits_0^{\infty} \pr{X, M}{Z_i < - t} \, dt \nonumber \\
                   &= \int\limits_0^T \pr{X, M}{Z_i > t} \, dt - \int\limits_0^T \pr{X, M}{Z_i < - t} \, dt \nonumber \\
                   &\quad+ \int\limits_T^{\infty} \pr{X, M}{Z_i > t} \, dt - \int\limits_T^{\infty} \pr{X, M}{Z_i < - t} \, dt \label{eq:splitting1}
\end{align}
We focus our attention on the first two terms of (\ref{eq:splitting1}).
We consider the sets $S_{+, X_i, t}, S_{-, X_i, t} \subseteq \bR^k$:
\begin{align*}
    S_{+, X_i, t} &\coloneqq \brc{Y \in \bR^k \colon \sum\limits_{j \in \brk{k}} \brk{\frac{R_j^2}{4} - \paren{\eta_j - m_j}^2} \brk{Y_j - \paren{\eta_j - m_j}} \paren{T_j\paren{X_i} - \mu_{T, j}} \geq t}, \\
    S_{-, X_i, t} &\coloneqq \brc{Y \in \bR^k \colon \sum\limits_{j \in \brk{k}} \brk{\frac{R_j^2}{4} - \paren{\eta_j - m_j}^2} \brk{Y_j - \paren{\eta_j - m_j}} \paren{T_j\paren{X_i} - \mu_{T, j}} \le - t}.
\end{align*}
By a direct application of Definition~\ref{def:dp} on $S_{+, X_i, t}$ and $S_{-, X_i, t}$\footnote{These are random subsets of $\bR^k$, since they depend on the random vector $X_i$ ($\eta$ is assumed to be fixed).
This does not affect the way the definition of privacy is applied.} for the neighboring datasets $X$ and $X_{\sim i}$ we have:
\begin{align*}
    &\quad\ \int\limits_0^T \pr{X, M}{Z_i > t} \, dt - \int\limits_0^T \pr{X, M}{Z_i < - t} \, dt \\
    &= \int\limits_0^T \pr{X, M}{M\paren{X} \in S_{+, X_i, t}} \, dt - \int\limits_0^T \pr{X, M}{M\paren{X} \in S_{-, X_i, t}} \, dt \\
    &\le \int\limits_0^T \paren{e^{\eps} \pr{X_{\sim i}, X_i, M}{M\paren{X_{\sim i}} \in S_{+, X_i, t}} + \delta} \, dt \\
    &\quad- \int\limits_0^T e^{- \eps} \paren{\pr{X_{\sim i}, X_i, M}{M\paren{X_{\sim i}} \in S_{-, X_i, t}} - \delta} \, dt \\
    &\le \int\limits_0^T \paren{e^{\eps} \pr{X_{\sim i}, X_i, M}{\widetilde{Z}_i > t} + \delta} \, dt - \int\limits_0^T e^{- \eps} \paren{\pr{X_{\sim i}, X_i, M}{\widetilde{Z}_i < - t} - \delta} \, dt \\
    &= \paren{1 + e^{- \eps}} \delta T + e^{\eps} \int\limits_0^T \pr{X_{\sim i}, X_i, M}{\widetilde{Z}_i > t} \, dt - e^{- \eps} \int\limits_0^T\pr{X_{\sim i}, X_i, M}{\widetilde{Z}_i < - t} \, dt.
\end{align*}
Under the assumption that $\eps \in \brk{0, 1}$, it holds that $e^{\eps} \le 1 + 2 \eps$ and $1 \geq e^{- \eps} \geq 1 - \eps \geq 1 - 2 \eps$.
Combined with the above, this yields:
\begin{align*}
    &\quad\ \int\limits_0^T \pr{X, M}{Z_i > t} \, dt - \int\limits_0^T \pr{X, M}{Z_i < - t} \, dt \\
    &\le 2 \delta T + \paren{2 \eps + 1} \int\limits_0^T \pr{X_{\sim i}, X_i, M}{\widetilde{Z}_i > t} \, dt + \paren{2 \eps - 1} \int\limits_0^T \pr{X_{\sim i}, X_i, M}{\widetilde{Z}_i < - t} \, dt \\
    &= 2 \delta T + 2 \eps \int\limits_0^T \paren{\pr{X_{\sim i}, X_i, M}{\widetilde{Z}_i > t} + \pr{X_{\sim i}, X_i, M}{\widetilde{Z}_i < - t}} \, dt \\
    &\quad+ \int\limits_0^T \paren{\pr{X_{\sim i}, X_i, M}{\widetilde{Z}_i > t} - \pr{X_{\sim i}, X_i, M}{\widetilde{Z}_i < - t}} \, dt \\
    &= 2 \delta T \nonumber + 2 \eps \int\limits_0^T \pr{X_{\sim i}, X_i, M}{\abs{\widetilde{Z}_i} > t} \, dt + \int\limits_0^T \paren{\pr{X_{\sim i}, X_i, M}{\widetilde{Z}_i > t} - \pr{X_{\sim i}, X_i, M}{\widetilde{Z}_i < - t}} \, dt \\
    &\le 2 \delta T \nonumber + 2 \eps \int\limits_0^{\infty} \pr{X_{\sim i}, X_i, M}{\abs{\widetilde{Z}_i} > t} \, dt + \int\limits_0^T \paren{\pr{X_{\sim i}, X_i, M}{\widetilde{Z}_i > t} - \pr{X_{\sim i}, X_i, M}{\widetilde{Z}_i < - t}} \, dt \\
    &= 2 \delta T + 2 \eps \ex{X_{\sim i}, X_i, M}{\abs{\widetilde{Z}_i}} + \int\limits_0^T \paren{\pr{X_{\sim i}, X_i, M}{\widetilde{Z}_i > t} - \pr{X_{\sim i}, X_i, M}{\widetilde{Z}_i < - t}} \, dt \\
    &= 2 \delta T + 2 \eps \ex{X_{\sim i}, X_i, M}{\abs{\widetilde{Z}_i}} + \int\limits_0^{\infty} \paren{\pr{X_{\sim i}, X_i, M}{\widetilde{Z}_i > t} - \pr{X_{\sim i}, X_i, M}{\widetilde{Z}_i < - t}} \, dt \\
    &\quad- \int\limits_T^{\infty} \paren{\pr{X_{\sim i}, X_i, M}{\widetilde{Z}_i > t} - \pr{X_{\sim i}, X_i, M}{\widetilde{Z}_i < - t}} \, dt.
\end{align*}
Defining $\widetilde{Z}_{i, +} = \max\brc{\widetilde{Z}_i, 0}$ and $\widetilde{Z}_{i, -} = - \min\brc{\widetilde{Z}_i, 0}$, we get:
\begin{align}
    &\quad\ \int\limits_0^T \pr{X, M}{Z_i > t} \, dt - \int\limits_0^T \pr{X, M}{Z_i < - t} \, dt \nonumber \\
    &\le 2 \delta T + 2 \eps \ex{X_{\sim i}, X_i, M}{\abs{\widetilde{Z}_i}} + \paren{\ex{X_{\sim i}, X_i, M}{\widetilde{Z}_{i, +}} - \ex{X_{\sim i}, X_i, M}{\widetilde{Z}_{i, -}}} \nonumber \\
    &\quad- \int\limits_T^{\infty} \paren{\pr{X_{\sim i}, X_i, M}{\widetilde{Z}_i > t} - \pr{X_{\sim i}, X_i, M}{\widetilde{Z}_i < - t}} \, dt \nonumber \\
    &\le 2 \delta T + 2 \eps \ex{X_{\sim i}, X_i, M}{\abs{\widetilde{Z}_i}} + \cancelto{0}{\ex{X_{\sim i}, X_i, M}{\widetilde{Z}_i}} \nonumber \\
    &\quad- \int\limits_T^{\infty} \paren{\pr{X_{\sim i}, X_i, M}{\widetilde{Z}_i > t} - \pr{X_{\sim i}, X_i, M}{\widetilde{Z}_i < - t}} \, dt \nonumber \\
    &= 2 \delta T + 2 \eps \ex{X_{\sim i}, X_i, M}{\abs{\widetilde{Z}_i}} - \int\limits_T^{\infty} \paren{\pr{X_{\sim i}, X_i, M}{\widetilde{Z}_i > t} - \pr{X_{\sim i}, X_i, M}{\widetilde{Z}_i < - t}} \, dt, \label{eq:splitting2}
\end{align}
where we appealed to Lemma~\ref{lem:upp_bound1} along the way.

The second version of Fact~\ref{fact:cs} for $X = \abs{\widetilde{Z}_i}, Y = 1$, and Lemma~\ref{lem:upp_bound1} yield:
\begin{align}
    \ex{X_{\sim i}, X_i, M}{\abs{\widetilde{Z}_i}} \le \sqrt{\ex{X_{\sim i}, X_i, M}{\widetilde{Z}_i^2}} = \sqrt{\ex{X_{\sim i}, M}{s^{\top} \Sigma_T s}} \label{eq:splitting3}.
\end{align}
Substituting (\ref{eq:splitting2}) and (\ref{eq:splitting3}) into (\ref{eq:splitting1}), we get:
\begin{align}
    \ex{X, M}{Z_i} &\le 2 \delta T + 2 \eps \sqrt{\ex{X_{\sim i}, M}{s^{\top} \Sigma_T s}} + \int\limits_T^{\infty} \paren{\pr{X, M}{Z_i > t} - \pr{X, M}{Z_i < - t}} \, dt \nonumber \\
                   &\quad+ \int\limits_T^{\infty} \paren{\pr{X_{\sim i}, X_i, M}{\widetilde{Z}_i < - t} - \pr{X_{\sim i}, X_i, M}{\widetilde{Z}_i > t}} \, dt \nonumber \\
                   &\le 2 \delta T + 2 \eps \sqrt{\ex{X_{\sim i}, M}{s^{\top} \Sigma_T s}} + \int\limits_T^{\infty} \paren{\pr{X, M}{\abs{Z_i} > t} + \pr{X_{\sim i}, X_i, M}{\abs{\widetilde{Z}_i} > t}} \, dt. \label{eq:splitting4}
\end{align}
We now show how to bound each of the tail probabilities involving $Z_i$ and $\widetilde{Z}_i$.
For $Z_i$, we have:
\begin{align*}
    \abs{Z_i} &= \abs{\sum\limits_{j \in \brk{k}} \brk{\frac{R_j^2}{4} - \paren{\eta_j - m_j}^2} \brk{M_j\paren{X} - \paren{\eta_j - m_j}} \paren{T_j\paren{X_i} - \mu_{T, j}}} \\
              &\le \sum\limits_{j \in \brk{k}} \brk{\frac{R_j^2}{4} - \paren{\eta_j - m_j}^2} \abs{M_j\paren{X} - \paren{\eta_j - m_j}} \abs{T_j\paren{X_i} - \mu_{T, j}} \\
    \overset{\paren{a}}&{\le} \frac{\infnorm{R}^3}{4} \sum\limits_{j \in \brk{k}} \abs{T_j\paren{X_i} - \mu_{T, j}} \\
    \overset{\paren{b}}&{\le} \frac{\infnorm{R}^3}{4} \sqrt{k} \llnorm{T\paren{X_i} - \mu_T},
\end{align*}
where $\paren{a}$ relies on the assumption $\abs{M_j\paren{X_i} - \paren{\eta_j - m_j}} \le R_j, \forall j \in \brk{k}$, and $\paren{b}$ uses the first version of Fact~\ref{fact:cs} for $x = \Vec{1}_k, y_j = \abs{T_j\paren{X_i} - \mu_{T, j}}, \forall j \in \brk{k}$.

The above yields:
\begin{align}
    \int\limits_T^{\infty} \pr{X, M}{\abs{Z_i} > t} \, dt \le \int\limits_T^{\infty} \pr{X_i}{\llnorm{T\paren{X_i} - \mu_T} > \frac{4 t}{\infnorm{R}^3 \sqrt{k}}} \, dt. \label{eq:tail_prob_equiv1}
\end{align}
Upper-bounding the term involving $\widetilde{Z}_i$ in similar fashion and substituting (\ref{eq:tail_prob_equiv1}) to (\ref{eq:splitting4}), we get:
\[
    \ex{X, M}{Z_i} \le 2 \delta T + 2 \eps \sqrt{\ex{X_{\sim i}, M}{s^{\top} \Sigma_T s}} + 2 \int\limits_T^{\infty} \pr{X_i}{\llnorm{T\paren{X_i} - \mu_T} > \frac{4 t}{\infnorm{R}^3 \sqrt{k}}} \, dt.
\]
\end{proof}

We conclude by formally stating and proving the main result of this section and describing the general recipe about how to apply it.

\begin{theorem}
\label{thm:lower_bound}
Let $p_{\eta}$ be a distribution over $S \subseteq \bR^d$ belonging to an exponential family $\cE\paren{T, h}$ with natural parameter vector $\eta \in \cH \subseteq \bR^k$.
Also, let $\eta^{\paren{1}}, \eta^{\paren{2}} \in \cH$ and let $I_j \coloneqq \brk{\eta^{\paren{1}}_j, \eta^{\paren{2}}_j}, \forall j \in \brk{k}$ be a collection of intervals and $R \coloneqq \eta^{\paren{2}} - \eta^{\paren{1}}, m \coloneqq \frac{\eta^{\paren{1}} + \eta^{\paren{2}}}{2}$ be the corresponding width and midpoint vectors, respectively.
Assume that $\bigotimes\limits_{j \in \brk{k}} I_j \subseteq \cH$ and $\eta \sim \cU\paren{\bigotimes\limits_{j \in \brk{k}} I_j}$.
Moreover, assume that we have a dataset $X \sim p_{\eta}^{\bigotimes n}$ and an independently drawn point $X_i' \sim p_{\eta}$ and $X_{\sim i}$ denotes the dataset where $X_i$ has been replaced with $X_i'$.
Finally, let $M \colon S^n \to \bigotimes\limits_{j \in \brk{k}} \brk{\pm \frac{R_j}{2}}$ be an $\paren{\eps, \delta}$-DP mechanism with $\eps \in \brk{0, 1}, \delta \geq 0$ with $\ex{X, M}{\llnorm{M\paren{X} - \paren{\eta - m}}^2} \le \alpha^2 \le \frac{\llnorm{R}^2}{24}$.
Then, for any $T > 0$, it holds that:
\[
    n \brc{2 \delta T + 2 \eps \ex{\eta}{\sqrt{\ex{X_{\sim i}, M}{s^{\top} \Sigma_T s}}} + 2 \ex{\eta}{\int\limits_T^{\infty} \pr{X_i}{\llnorm{T\paren{X_i} - \mu_T} > \frac{4 t}{\infnorm{R}^3 \sqrt{k}}} \, dt}} \geq \frac{\llnorm{R}^2}{24}, \label{eq:main_ineq}
\]
where $s \in \bR^k$ with:
\[
    s_j \coloneqq \brk{\frac{R_j^2}{4} - \paren{\eta_j - m_j}^2} \brk{M_j\paren{X_{\sim i}} - \paren{\eta_j - m_j}}, \forall j \in \brk{k}.
\]
\end{theorem}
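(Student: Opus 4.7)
The plan is to combine the fingerprinting lemma (Lemma~\ref{lem:fing_gen}) with the privacy-based correlation bound (Lemma~\ref{lem:upp_bound2}), treating both as black boxes. The derivation is essentially bookkeeping: the hard technical work has already been done in the two preceding lemmas, and what remains is to align the conditioning correctly and sum over the $n$ samples.

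First, I would invoke Lemma~\ref{lem:fing_gen} with $f \equiv M$ to obtain the accuracy-versus-correlation tradeoff
\[
    \ex{\eta, X, M}{Z + \llnorm{M(X) - (\eta - m)}^2} \geq \frac{\llnorm{R}^2}{12}.
\]
Using the accuracy hypothesis $\ex{X, M}{\llnorm{M(X) - (\eta - m)}^2} \le \alpha^2 \le \frac{\llnorm{R}^2}{24}$ and rearranging (as done in equation (\ref{eq:LB})), this yields the lower bound $\ex{\eta, X, M}{Z} \geq \frac{\llnorm{R}^2}{24}$.

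Next, I would expand $Z = \sum_{i \in [n]} Z_i$ by definition (\ref{eq:corr_term4}) and use linearity of expectation, writing
\[
    \ex{\eta, X, M}{Z} = \sum_{i \in [n]} \ex{\eta}{\ex{X, M}{Z_i}}.
\]
For each $i$, Lemma~\ref{lem:upp_bound2} (applied conditionally on $\eta$) upper-bounds the inner expectation by
\[
    2\delta T + 2\eps \sqrt{\ex{X_{\sim i}, M}{s^{\top} \Sigma_T s}} + 2\int_T^{\infty} \pr{X_i}{\llnorm{T(X_i) - \mu_T} > \tfrac{4t}{\infnorm{R}^3 \sqrt{k}}} dt.
\]
Taking expectation over $\eta$ and summing across $i \in [n]$ introduces the factor of $n$ on the left-hand side, yielding the claimed inequality.

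The only delicate point, and the place where one must be careful rather than obstructed, is the interchange of expectations: Lemma~\ref{lem:upp_bound2} is stated pointwise in $\eta$, so $\eps$ and $\delta$ come out as constants while the two data-dependent terms must be integrated against the prior on $\eta$ (which is why the final statement has $\ex{\eta}{\cdot}$ wrapped around the last two summands but not the $2\delta T$ term). Because each sample $X_i$ contributes an identical bound by symmetry of the i.i.d. setup, summing gives exactly $n$ copies of this bound, and combining with $\ex{\eta, X, M}{Z} \geq \frac{\llnorm{R}^2}{24}$ yields the theorem. I do not expect any substantive obstacle: the theorem is essentially the composition of the two lemmas already proved, repackaged as a single quantitative statement about the sample complexity $n$.
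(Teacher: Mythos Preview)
Your proposal is correct and matches the paper's own proof essentially line for line: it too obtains $\ex{\eta,X,M}{Z}\ge \llnorm{R}^2/24$ from Lemma~\ref{lem:fing_gen} and the accuracy assumption, then writes $\ex{\eta,X,M}{Z}=\sum_i \ex{\eta}{\ex{X,M}{Z_i}}$, applies Lemma~\ref{lem:upp_bound2} inside the $\eta$-expectation, and sums to obtain the factor of $n$. There is nothing to add.
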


\begin{proof}
By Lemma~\ref{lem:fing_gen} and (\ref{eq:LB}), we have $\ex{\eta, X, M}{Z} \geq \frac{\llnorm{R}^2}{24}$.
Additionally, by Lemma~\ref{lem:upp_bound2}, we have:
\begin{align*}
    \ex{\eta, X, M}{Z} &= \sum\limits_{i \in \brk{n}} \ex{\eta, X, M}{Z_i} \\
                       &= \sum\limits_{i \in \brk{n}} \ex{\eta}{\ex{X, M}{Z_i}} \\
                       &\le \sum\limits_{i \in \brk{n}} \ex{\eta}{2 \delta T + 2 \eps \sqrt{\ex{X_{\sim i}, M}{s^{\top} \Sigma_T s}} + 2 \int\limits_T^{\infty} \pr{X_i}{\llnorm{T\paren{X_i} - \mu_T} > \frac{4 t}{\infnorm{R}^3 \sqrt{k}}} \, dt} \\
                       &\le n \brc{2 \delta T + 2 \eps \ex{\eta}{\sqrt{\ex{X_{\sim i}, M}{s^{\top} \Sigma_T s}}} + 2 \ex{\eta}{\int\limits_T^{\infty} \pr{X_i}{\llnorm{T\paren{X_i} - \mu_T} > \frac{4 t}{\infnorm{R}^3 \sqrt{k}}} \, dt}}.
\end{align*}
Combining the two, we get the desired result.
\end{proof}

In order to apply the above theorem, the main idea is to identify values for $T$ and $\delta$ such that:
\[
    \delta T \geq 2 \ex{\eta}{\int\limits_T^{\infty} \pr{X_i}{\llnorm{T\paren{X_i} - \mu_T} > \frac{4 t}{\infnorm{R}^3 \sqrt{k}}} \, dt} \text{ and } 3 \delta T n \le \frac{\llnorm{R}^2}{48}.
\]
The reasoning behind this is that the first and third terms of (\ref{eq:main_ineq}) both depend on the threshold $T$ (below which the definition of privacy is applied), with the latter term becoming smaller as $T$ increases.
Thus, balancing these two terms involves identifying a value for $T$ that is as small as possible and results in $\delta T$ dominating the other term.
At the same time, we want the sum of those two terms to be smaller than the half of the RHS, so that we get a lower bound that is as tight as possible, leading to the constraint on $\delta$.
The previous result in the inequality $n \ex{\eta}{\sqrt{\ex{X_{\sim i}, M}{s^{\top} \Sigma_T s}}} \geq \frac{\llnorm{R}^2}{96 \eps}$.
Consequently, obtaining the desired sample complexity lower bounds boils down to upper-bounding the term $\ex{X_{\sim i}, M}{s^{\top} \Sigma_T s}$.
In the applications we will see, the upper bounds will be worst-case in terms of $\eta$, so the outer expectation $\ex{\eta}{\cdot}$ will largely be ignored.

\begin{remark}
\label{rem:hyperrect}
In order to apply Theorem~\ref{thm:lower_bound}, it is necessary for the parameter range $\cH$ to have a subset that is a (potentially degenerate) axis-aligned hyper-rectangle.
This is a very mild assumption, since $\cH$ is known to be convex (see the last part of Proposition~\ref{prop:exp_prop}).
Moreover, if we have identified a hyper-rectangle that is subset of $\cH$ and is convenient for our purposes, but is not axis-aligned, it suffices to remark that, for any rotation matrix $U \in \bR^{k \times k}, \eta^{\top} T\paren{x} = \paren{U \eta}^{\top} \paren{U T\paren{x}}$.
This mapping is bijective and preserves the $\ell_2$-distance between vectors.
Thus, we can re-parameterize our exponential family in a way that will make the aforementioned hyper-rectangle axis-aligned and lower bounds proven for estimating the natural parameter vector of the new family will hold for the original one as well.
\end{remark}

\begin{remark}
\label{rem:exp_chernoff}
The last term in the upper bound of Lemma~\ref{lem:upp_bound2} involves the tail of the random variable $\llnorm{T\paren{X_i} - \mu_T}$.
The obvious way to bound that term is:
\begin{align*}
    \int\limits_T^{\infty} \pr{X_i}{\llnorm{T\paren{X_i} - \mu_T}^2 > \frac{16 t^2}{\infnorm{R}^6 k}} \, dt \le \bigintss\limits_T^{\infty} \frac{\ex{X_i}{\llnorm{T\paren{X_i} - \mu_T}^2}}{\frac{16 t^2}{\infnorm{R}^6 k}} \, dt &= \frac{\infnorm{R}^6 k \tr\paren{\Sigma_T}}{16} \int\limits_T^{\infty} \frac{dt}{t^2} \\
                                                                             &= \frac{\infnorm{R}^6 k \tr\paren{\Sigma_T}}{16 T}.
\end{align*}
Following the steps we sketched previously, this approach leads to a range for $\delta$ of the form $\delta \le \widetilde{\cO}\paren{\frac{1}{n^2}}$.
However, when reasoning about the cost of approximate DP for various statistical tasks, in order to claim that a full characterization has been achieved, it is generally the case that lower bounds must be shown for $\delta$ as large as $\widetilde{\cO}\paren{\frac{1}{n}}$.
To achieve this, we will apply the following strategy.
First, observe that we can approximate $\llnorm{T\paren{X} - \mu_T}$ using an $\eps$-net (see Definition~\ref{def:net}).
In particular, if $\cN_{\frac{1}{4}}$ is a $\frac{1}{4}$-net of $\bS^{k - 1}$ of minimum cardinality, using Facts~\ref{fact:cov_num_unit_sphere},~\ref{fact:l2_norm_bound}, and a union bound, we get that:
\begin{align*}
    \int\limits_T^{\infty} \pr{X_i}{\llnorm{T\paren{X_i} - \mu_T} > \frac{4 t}{\infnorm{R}^3 \sqrt{k}}} \, dt &\le \int\limits_T^{\infty} \pr{X_i}{\frac{1}{1 - \frac{1}{4}} \cdot \sup\limits_{v \in \cN_{\frac{1}{4}}} \abs{\iprod{v, T\paren{X_i} - \mu_T}} > \frac{4 t}{\infnorm{R}^3 \sqrt{k}}} \, dt \\
                                                                                                              &= \int\limits_T^{\infty} \pr{X_i}{\sup\limits_{v \in \cN_{\frac{1}{4}}} \abs{\iprod{v, T\paren{X_i} - \mu_T}} > \frac{3 t}{\infnorm{R}^3 \sqrt{k}}} \, dt \\
                                                                                                              &\le \abs{\cN_{\frac{1}{4}}} \int\limits_T^{\infty} \pr{X_i}{\abs{\iprod{v, T\paren{X_i} - \mu_T}} > \frac{3 t}{\infnorm{R}^3 \sqrt{k}}} \, dt \\
                                                                                                              &\le 9^k \sup_{v \in \cN_{\frac{1}{4}}} \int\limits_T^{\infty} \pr{X_i}{\abs{\iprod{v, T\paren{X_i} - \mu_T}} > \frac{3 t}{\infnorm{R}^3 \sqrt{k}}} \, dt.
\end{align*}
Now, to reason about the tail probabilities of $\abs{\iprod{v, T\paren{X_i} - \mu_T}}$, it suffices to leverage the fact that the MGFs of univariate projections of the sufficient statistics of exponential families are well-defined (see the third part of Proposition~\ref{prop:exp_prop}).
This paves the way for proving Chernoff-type bounds in the context that we are considering.
In practice, we will not have to prove our own bounds, instead appealing to various known bounds from the literature.
\end{remark}

\begin{remark}
\label{rem:fing_pure_dp}
In the above discussion, we have largely assumed that we want to prove lower bounds under approximate DP, corresponding to $\delta > 0$.
However, applying the method becomes significantly simpler for pure DP.
Indeed, the reason we had to split $\bR_+$ in the intervals $\brk{0, T}$ and $\paren{T, \infty}$ in the proof of Lemma~\ref{lem:upp_bound2} and work separately with each interval is because, if we applied the definition of DP to each of $\pr{X, M}{Z_i > t}, \pr{X, M}{Z_i < - t}$ while integrating over all of $\bR_+$, we would have to deal with a term of the form $\int\limits_0^{\infty} \delta \, dt = \infty$.
This would render our upper bound on the correlation terms vacuous and it would be impossible to obtain any meaningful lower bounds.
On the other hand, for $\delta = 0$, it suffices to integrate over all of $\bR_+$ (corresponding to $T = \infty$).
This would eventually lead to the bound $n \ex{\eta}{\sqrt{\ex{X_{\sim i}, M}{s^{\top} \Sigma_T s}}} \geq \frac{\llnorm{R}^2}{48 \eps}$.
\end{remark}

\section{Lower Bounds for Private Gaussian Covariance Estimation}
\label{sec:lb_cov_est}

Here, we provide lower bounds for covariance estimation of high-dimensional Gaussians with respect to the Mahalanobis and spectral norms under the constraint of $\paren{\eps, \delta}$-DP.

\subsection{Estimation with Respect to the Mahalanobis Norm}
\label{subsec:mah_norm_lb}

In this section, we characterize the sample complexity of private Gaussian covariance estimation with respect to the Mahalanobis norm under approximate differential privacy.
Previous lower bounds either assumed a stricter notion of privacy (e.g., $\eps$-DP - see~\cite{KamathLSU19}) or exhibited significant gaps with known upper bounds (see Section $1.1.1$ of~\cite{AdenAliAK21} for further discussion).
We start by stating our main result.

\begin{theorem}[Gaussian Covariance Estimation in Mahalanobis Norm]\label{thm:lb-maha}
There exists a distribution $\cD$ over covariance matrices $\Sigma \in \bR^{d \times d}$ with $\id \preceq \Sigma \preceq 2 \id$ such that, given $\Sigma \sim \cD$ and $X \sim \cN\paren{0, \Sigma}^{\bigotimes n}$, for any $\alpha = \cO\paren{1}$ and any $\paren{\eps, \delta}$-DP mechanism $M \colon \bR^{n \times d} \to \bR^{d \times d}$ with $\eps, \delta \in \brk{0, 1}$, and $\delta \le \cO\paren{\min\brc{\frac{1}{n}, \frac{d^2}{n \log\paren{\frac{n}{d^2}}}}}$ that satisfies $\ex{X, M}{\mnorm{\Sigma}{M\paren{X} - \Sigma}^2} \le \alpha^2$, it must hold that $n \geq \Omega\paren{\frac{d^2}{\alpha \eps}}$.
\end{theorem}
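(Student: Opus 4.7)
My plan is to apply Theorem~\ref{thm:lower_bound} to the family of zero-mean Gaussians $\cN(0,\Sigma)$. Writing $-\tfrac{1}{2}x^\top\Sigma^{-1}x$ in the exponent shows that this is an exponential family of dimension $k = \binom{d+1}{2} = \Theta(d^2)$: the natural parameter $\eta$ collects the independent (upper-triangular) entries of $-\tfrac{1}{2}\Sigma^{-1}$, with conventional $\sqrt{2}$ rescalings on off-diagonal coordinates so that $\llnorm{\cdot}$ on the parameter vector equals $\fnorm{\cdot}/2$ on $\Sigma^{-1}$. I take the prior to be uniform on the axis-aligned hyper-rectangle centered at $\eta_0$ corresponding to $\Sigma_0 = \tfrac{3}{2}\id$, with every side of width $R_j = R$ for a sufficiently small absolute constant $R>0$ so that the induced support satisfies $\id \preceq \Sigma \preceq 2\id$. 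Consequently $\llnorm{R}^2 = \Theta(d^2)$ and $\infnorm{R} = \Theta(1)$.

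\textbf{Reducing covariance to natural-parameter estimation.} Given the hypothesized mechanism $M$, I would first project its output onto $\brc{\Sigma : \id \preceq \Sigma \preceq 2\id}$, then post-process via $\hat\Sigma \mapsto -\tfrac{1}{2}\hat\Sigma^{-1} - m$, and finally truncate to the hyper-rectangle. By Lemma~\ref{lem:post_processing} this remains $(\eps,\delta)$-DP. In our regime, matrix inversion is $O(1)$-Lipschitz and $\mnorm{\Sigma}{\cdot}$ is equivalent to $\fnorm{\cdot}$ up to a factor of $2$, so the resulting mechanism $M'$ satisfies $\ex{X,M'}{\llnorm{M'(X) - (\eta-m)}^2} = O(\alpha^2)$. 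The accuracy hypothesis $\alpha^2 \le \llnorm{R}^2/24 = \Theta(d^2)$ of Theorem~\ref{thm:lower_bound} is trivially met for $\alpha = \cO(1)$.

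\textbf{Applying the main inequality.} Next I would compute $\Sigma_T$ via Isserlis: $\mathrm{Cov}(X_i X_j, X_k X_l) = \Sigma_{ik}\Sigma_{jl} + \Sigma_{il}\Sigma_{jk}$, which after symmetrization gives $\llnorm{\Sigma_T} = O(\llnorm{\Sigma}^2) = O(1)$. Since $|s_j| \le (\infnorm{R}^2/4)\,|M'_j(X_{\sim i}) - (\eta_j - m_j)|$, the accuracy guarantee yields $\ex{X_{\sim i},M'}{s^\top \Sigma_T s} \le \llnorm{\Sigma_T}\cdot(\infnorm{R}^4/16)\,\ex{}{\llnorm{M'(X_{\sim i})-(\eta-m)}^2} = O(\alpha^2)$, uniformly in $\eta$. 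For the tail term I would follow Remark~\ref{rem:exp_chernoff}: each projection $\iprod{v,T(X)-\mu_T} = x^\top V x - \tr(V\Sigma)$ with $\fnorm{V} = \Theta(1)$ is sub-exponential with $O(1)$-scale by Hanson--Wright, so a $\tfrac{1}{4}$-net union bound on $\bS^{k-1}$ gives $\pr{}{\llnorm{T(X)-\mu_T} > t} \le \exp(-\Omega(t) + O(d^2))$. Choosing $T = \Theta(d^2)$ (or $T = \Theta(d\log(n/d^2))$ in the extreme low-$d$ regime) makes the tail integral negligible compared to $\delta T$ whenever $\delta$ lies in the theorem's hypothesized range. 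The constraint $3\delta T n \le \llnorm{R}^2/48$ then becomes exactly $\delta \le \cO(\min\{1/n, d^2/(n\log(n/d^2))\})$, and what remains of Theorem~\ref{thm:lower_bound} is
\[
    n \geq \frac{\llnorm{R}^2}{96\,\eps\,\sup_{\eta}\sqrt{\ex{X_{\sim i},M'}{s^\top \Sigma_T s}}} \;=\; \Omega\!\paren{\frac{d^2}{\alpha\,\eps}}.
\]

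\textbf{Main obstacle.} The delicate step is the tail analysis: a naive Chebyshev bound on $\llnorm{T(X)-\mu_T}^2$ would incur an extra polynomial factor in $n$ and force the crippling ceiling $\delta \le \widetilde{\cO}(1/n^2)$, as flagged in Remark~\ref{rem:exp_chernoff}; reaching the essentially optimal $\widetilde{\cO}(1/n)$ ceiling requires the net-plus-Hanson--Wright treatment of the quartic form $\fnorm{xx^\top - \Sigma}^2$ and its careful interaction with the $\infnorm{R}^3 \sqrt{k}$ scaling inside the threshold. A secondary concern is the bookkeeping for the symmetry of $\Sigma$: the $\sqrt{2}$ rescalings on off-diagonal natural-parameter coordinates are what make the Euclidean geometry on $\eta$ match the Frobenius geometry on $\Sigma^{-1}$, which in turn (via Lipschitzness of inversion on our feasible set) matches the Mahalanobis geometry on $\Sigma$ up to absolute constants.
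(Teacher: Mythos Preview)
Your overall strategy mirrors the paper's exactly (exponential-family reformulation, reduction from $\Sigma$ to $\eta$, bound on $s^\top\Sigma_T s$, Hanson--Wright plus net for the tail), but there is a genuine gap in the construction of the prior that breaks the argument as written.

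\textbf{The side lengths cannot be an absolute constant.} You assert that each coordinate of $\eta$ ranges over an interval of width $R_j=R$ for a sufficiently small \emph{absolute} constant $R$ while still ensuring $\id\preceq\Sigma\preceq 2\id$. This is impossible for large $d$: the natural parameters are essentially the entries of $\Sigma^{-1}$, and if every off-diagonal entry of $\Sigma^{-1}$ is allowed to range over $[-\Theta(R),\Theta(R)]$, then by Gershgorin the eigenvalues of $\Sigma^{-1}$ lie in an interval of width $\Theta(dR)$ about the diagonal value $\tfrac{2}{3}$. Keeping $\Sigma^{-1}$ in $[\tfrac{1}{2}\id,\id]$ forces $R=O(1/d)$, not $R=\Theta(1)$. (Indeed, without this constraint $\Sigma^{-1}$ need not even be positive definite, and your claim $\llnorm{\Sigma_T}=O(\llnorm{\Sigma}^2)=O(1)$ collapses.) The paper's construction takes exactly $R_j\asymp 1/d$, so that $\infnorm{R}=\Theta(1/d)$ and $\llnorm{R}^2=\Theta(1)$, not $\Theta(d^2)$.

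\textbf{Where the $d^2$ actually comes from.} With the correct scaling $\infnorm{R}=\Theta(1/d)$, the $d^2$ in the lower bound no longer arises from $\llnorm{R}^2$; instead it emerges from the bound on the privacy term: $\abs{s_j}\le(\infnorm{R}^2/4)\abs{M_j'(X_{\sim i})-(\eta_j-m_j)}$ gives $\ex{}{s^\top\Sigma_T s}\le O(\infnorm{R}^4)\cdot\alpha^2=O(\alpha^2/d^4)$, hence $n\ge\llnorm{R}^2/\bigl(96\eps\sqrt{O(\alpha^2/d^4)}\bigr)=\Omega(d^2/(\alpha\eps))$. The same rescaling is essential in the tail step: the threshold inside the integral is $4t/(\infnorm{R}^3\sqrt{k})=\Theta(d^2 t)$, which is what lets the Hanson--Wright bound beat the $9^{d^2}$ net factor with $T=\Theta(1)$; under your claimed $\infnorm{R}=\Theta(1)$ the threshold is only $\Theta(t/d)$, forcing $T=\Theta(d^3)$ and hence $\delta=\widetilde O(1/(nd))$ rather than the stated $\widetilde O(1/n)$. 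Once you set $R_j=\Theta(1/d)$ and redo these three calculations accordingly, your outline becomes the paper's proof. Note also that this is exactly why the hypothesis is $\alpha=\cO(1)$ rather than $\cO(\sqrt{d})$: the accuracy precondition $\alpha^2\le\llnorm{R}^2/24=\Theta(1)$ is genuinely restrictive here, not ``trivially met'' as you wrote.
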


The outline of the proof involves:
\begin{enumerate}
    \item writing the Gaussian distribution $\cN\paren{0, \Sigma}$ as an exponential family (Lemma~\ref{lem:exp_fam_norm_cov_alt}),
    \item defining a distribution from which $\eta$ is drawn (Algorithm~\ref{alg:sampling} and Lemma~\ref{lem:sampling_analysis}),
    \item showing that estimating the deviation of the natural parameter vector from the midpoint $m$ reduces to covariance estimation (so lower bounds for the former also apply for the latter - see Algorithm~\ref{alg:reduction}, as well as Lemma~\ref{lem:reduction_cov} and Corollary~\ref{cor:red_cov_lb}),
    \item upper-bounding the term $\ex{}{s^{\top} \Sigma_T s}$ (Lemma~\ref{lem:suff_stats_cov_ub}),
    \item applying Theorem~\ref{thm:lower_bound} as described at the end of the previous section (Lemma~\ref{lem:tail_prob_cov_ub1} and the main proof at the end of the section).
\end{enumerate}

Regarding the first step, Fact~\ref{fact:exp_fam_norm_cov} establishes that the density can be written in the form:
\begin{align*}
    p_{\eta}\paren{x} &= h\paren{x} \exp\paren{\eta^{\top} T\paren{x} - Z\paren{\eta}}, \forall x \in \bR^d, \\
           h\paren{x} &= 1, \\
           T\paren{x} &= - \frac{1}{2} \paren{x x^{\top}}^{\flat}, \\
                 \eta &= \paren{\Sigma^{- 1}}^{\flat}, \\
        Z\paren{\eta} &= \frac{d}{2} \ln\paren{2 \pi} - \frac{1}{2} \ln\paren{\det\paren{\eta^{\#}}}.
\end{align*}
Observe that the natural parameter vector $\eta$ is the flattening of the \emph{precision matrix} $\Sigma^{-1}$.
However, we cannot use this representation, because $\Sigma^{- 1}$ is symmetric.
Indeed, this implies that the components of $\eta$ cannot be generated independently, as is required by Theorem~\ref{thm:lower_bound}.
For that reason, we consider an alternative parameterization.

\begin{lemma}
\label{lem:exp_fam_norm_cov_alt}
Let $\cN\paren{0, \Sigma}$ be a $d$-dimensional Gaussian with unknown covariance $\Sigma$.
Then, it belongs to an exponential family with $h_0 \equiv h, \eta_0 \coloneqq 2 U^{\flat}, T_0 \equiv T$, and $Z_0\paren{\eta_0} \coloneqq \frac{d}{2} \ln\paren{2 \pi} - \frac{1}{2} \ln\paren{\det\paren{\frac{\eta_0^{\#} + \paren{\eta_0^{\#}}^{\top}}{2}}}$ where $U \in \bR^{d \times d}$ is an upper-triangular matrix with:
\[
    U_{i j} \coloneqq \left\{
    \begin{array}{ll}
         \frac{1}{2} \paren{\Sigma^{- 1}}_{i i}, & i = j \\
         \paren{\Sigma^{- 1}}_{i j},             & j > i \\
         0,                                      & i > j
    \end{array}
    \right..
\]
In particular, it holds that $Z\paren{\eta} = Z_0\paren{\eta_0}$ and $\eta^{\top} T\paren{x} = \eta_0^{\top} T_0\paren{x}$.
\end{lemma}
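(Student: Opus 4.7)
The plan is to verify the claimed exponential family representation by direct comparison with the standard form from Fact~\ref{fact:exp_fam_norm_cov}. The key observation driving the construction is that $\Sigma^{-1}$ is symmetric, so the flattening $\eta = (\Sigma^{-1})^{\flat}$ has redundant coordinates (each off-diagonal entry appears twice), which would prevent us from generating its components independently when applying Theorem~\ref{thm:lower_bound}. The reparameterization resolves this: any symmetric matrix admits the unique decomposition $\Sigma^{-1} = U + U^{\top}$ with $U$ upper-triangular and diagonal halved, and the nontrivial coordinates of $\eta_0 = 2U^{\flat}$ index only the $\binom{d+1}{2}$ upper-triangular positions.

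First I would check that $Z(\eta) = Z_0(\eta_0)$. Writing $\eta_0^{\#} = 2U$, the symmetrized matrix appearing in $Z_0$ satisfies
\[
    \frac{\eta_0^{\#} + (\eta_0^{\#})^{\top}}{2} = U + U^{\top}.
\]
Using the definition of $U$, an entrywise check shows $(U + U^{\top})_{ii} = 2U_{ii} = (\Sigma^{-1})_{ii}$, and for $j > i$, $(U + U^{\top})_{ij} = U_{ij} = (\Sigma^{-1})_{ij}$, with the sub-diagonal entries recovered by symmetry of $\Sigma^{-1}$. Hence $(U + U^{\top}) = \Sigma^{-1} = \eta^{\#}$, so the determinants inside the logarithms agree and the log-partition functions are identical.

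Next I would verify $\eta^{\top} T(x) = \eta_0^{\top} T_0(x)$ as quadratic forms in $x$. Recalling $T(x) = T_0(x) = -\tfrac{1}{2}(xx^{\top})^{\flat}$ and using $\iprod{A, B} = \iprod{A^{\flat}, B^{\flat}}$,
\[
    \eta^{\top} T(x) = \iprod{\Sigma^{-1}, -\tfrac{1}{2}xx^{\top}} = -\tfrac{1}{2} x^{\top} \Sigma^{-1} x, \qquad \eta_0^{\top} T_0(x) = \iprod{2U, -\tfrac{1}{2}xx^{\top}} = - x^{\top} U x.
\]
Since $x^{\top} U^{\top} x = (x^{\top} U x)^{\top} = x^{\top} U x$ (both are scalars), we get $x^{\top}\Sigma^{-1} x = x^{\top}(U + U^{\top}) x = 2 x^{\top} U x$, and the two expressions match.

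Combining the two identities with $h_0 \equiv h$ gives $p_{\eta_0}(x) = h_0(x)\exp(\eta_0^{\top} T_0(x) - Z_0(\eta_0)) = p_{\eta}(x)$ pointwise, which is exactly the claim. There is no real obstacle — the entire proof is bookkeeping around the symmetric/upper-triangular decomposition of $\Sigma^{-1}$ — but the value of the lemma is structural rather than computational: it recasts $\cN(0, \Sigma)$ as an exponential family whose free natural parameters (the upper-triangular entries) can be independently perturbed, which is precisely what is required to impose the product-of-intervals prior of Theorem~\ref{thm:lower_bound} in the subsequent covariance-estimation lower bound.
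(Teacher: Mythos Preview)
Your proof is correct and follows essentially the same route as the paper: both verify $U + U^{\top} = \Sigma^{-1}$ to match the log-partition functions, and both reduce the equality of $\eta^{\top}T(x)$ and $\eta_0^{\top}T_0(x)$ to the scalar-transpose identity $x^{\top}U^{\top}x = x^{\top}Ux$. The only cosmetic difference is that you spell out the entrywise check of $U+U^{\top}=\Sigma^{-1}$ explicitly, whereas the paper asserts it in one line.
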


\begin{proof}
Observe that $\eta^{\#} = \Sigma^{- 1} = U + U^{\top} = \frac{\eta_0^{\#} + \paren{\eta_0^{\#}}^{\top}}{2} \implies Z\paren{\eta} = Z_0\paren{\eta_0}$.
Additionally, we have:
\begin{align*}
    \eta^{\top} T\paren{x} = \iprod{\paren{\Sigma^{- 1}}^{\flat}, - \frac{1}{2} \paren{x x^{\top}}^{\flat}} = - \frac{1}{2} \iprod{\Sigma^{- 1}, x x^{\top}} &= - \frac{1}{2} x^{\top} \Sigma^{- 1} x \\
                                                                                                                                                             &= - \frac{1}{2} x^{\top} \paren{U + U^{\top}} x \\
                                                                                                                                                             &= - \frac{1}{2} x^{\top} U x - \frac{1}{2} x^{\top} U^{\top} x \\
                                                                                                                                          \overset{\paren{a}}&{=} - x^{\top} U x \\
                                                                                                                                                             &= - \iprod{U, x x^{\top}} \\
                                                                                                                                                             &= \iprod{2 U^{\flat}, - \frac{1}{2} \paren{x x^{\top}}^{\flat}} \\
                                                                                                                                                             &= \eta_0^{\top} T_0\paren{x},
\end{align*}
where $\paren{a}$ used the fact that $x^{\top} U^{\top} x \in \bR \implies x^{\top} U^{\top} x = \paren{x^{\top} U^{\top} x}^{\top} = x^{\top} U x$.

Consequently, we have $p_{\eta}\paren{x} = h\paren{x} \exp\paren{\eta^{\top} T\paren{x} - Z\paren{\eta}} = h_0\paren{x} \exp\paren{\eta_0^{\top} T_0\paren{x} - Z_0\paren{\eta_0}}$, verifying the equivalence of the two parameterizations.
\end{proof}

We now consider the following process to generate $\eta_0$.
Roughly speaking, modulo the parameterization issues mentioned above, the process consists of sampling the elements of the precision matrix uniformly at random from intervals of width $\frac{1}{2 d}$.
For the diagonal elements, these intervals are centered at $\frac{3}{4}$, whereas for the non-diagonal elements they are centered at the origin.
Similar constructions have been employed previously, e.g., in~\cite{KamathLSU19}.

\begin{algorithm}[htb]
    \caption{Covariance Matrix Sampling}\label{alg:sampling}
    \hspace*{\algorithmicindent} \textbf{Input:} $d \in \N$. \\
    \hspace*{\algorithmicindent} \textbf{Output:} A pair $\paren{\Sigma, \eta_0}$.
    \begin{algorithmic}[1]
    \Procedure{\CovSampling}{$d$}
        \For{$i \in \brk{d}$}
            \For{$j \in \brk{d} \setminus \brk{i - 1}$}
                \If {$i = j$}
                    \State Draw an independent sample $\paren{\Sigma^{- 1}}_{i i} \sim \cU\brk{\frac{3}{4} \pm \frac{1}{4 d}}$.
                    \State Let $U_{i i} = \frac{1}{2} \paren{\Sigma^{- 1}}_{i i}$.
                \Else
                    \State Draw an independent sample $\paren{\Sigma^{- 1}}_{i j} \sim \cU\brk{\pm \frac{1}{4 d}}$.
                    \State Let $\paren{\Sigma^{- 1}}_{j i} = \paren{\Sigma^{- 1}}_{i j}$.
                    \State Let $U_{i j} = \paren{\Sigma^{- 1}}_{i j}$ and $U_{j i} = 0$.
                \EndIf
            \EndFor
        \EndFor
        \State Let $\eta_0 = 2 U^{\flat}$.
        \State \Return $\paren{\Sigma, \eta_0}$.
    \EndProcedure
    \end{algorithmic}
\end{algorithm}

We proceed to identify some properties of the matrices generated by the above process that will help us to prove Theorem~\ref{thm:lb-maha}.

\begin{lemma}
\label{lem:sampling_analysis}
Let $\paren{\Sigma, \eta_0} \in \bR^{d \times d} \times \bR^{d^2}$ be the output of Algorithm~\ref{alg:sampling}.
We have that:
\begin{enumerate}
    \item $\id \preceq \Sigma \preceq 2 \id$.
    \item The components of $\eta_0$ are supported on intervals of the form:
    \[
        I_{\ell} = \left\{
        \begin{array}{ll}
            \brk{\frac{3}{4} \pm \frac{1}{4 d}},    & \paren{\ell - 1} \bmod d = \paren{\ell - 1} \bdiv d \\
            \brk{\pm \frac{1}{2 d}},                & \paren{\ell - 1} \bmod d > \paren{\ell - 1} \bdiv d \\
            \brc{0},                                & \paren{\ell - 1} \bmod d < \paren{\ell - 1} \bdiv d
        \end{array}
        \right..
    \]
    \item The components of $\eta_0$ are independent and it holds that $\infnorm{R} = \frac{1}{d}$ and $\llnorm{R}^2 = \frac{1}{2} \paren{1 - \frac{1}{2 d}} \geq \frac{1}{4}$.
    \item The vector $m \in \bR^{d^2}$ is equal to:
    \[
        m_{\ell} = \left\{
        \begin{array}{ll}
            \frac{3}{4}, & \ell = \paren{k - 1} d + k,\text{ for some } k \in \brk{d} \\
            0,           & \text{otherwise} 
        \end{array}
        \right..
    \]
\end{enumerate}
\end{lemma}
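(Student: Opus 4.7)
The lemma collects four structural properties of the output of Algorithm~\ref{alg:sampling}, each of which follows by direct computation from the sampling scheme combined with Lemma~\ref{lem:exp_fam_norm_cov_alt}. Parts (2)--(4) are essentially bookkeeping, while part (1) requires a short spectral argument; my plan is to dispatch (1) first via Gershgorin and then unpack definitions for the rest.

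For part (1), I would apply the Gershgorin circle theorem to $\Sigma^{-1}$. By construction, the diagonal entries $(\Sigma^{-1})_{ii}$ lie in $\brk{\frac{3}{4} \pm \frac{1}{4d}}$, and each off-diagonal entry lies in $\brk{\pm \frac{1}{4d}}$, so the sum of absolute values of off-diagonal entries in any single row is at most $(d-1) \cdot \frac{1}{4d} \leq \frac{1}{4}$. Gershgorin then forces every eigenvalue of the symmetric matrix $\Sigma^{-1}$ to lie in $\brk{\frac{3}{4} - \frac{1}{4d} - \frac{d-1}{4d},\ \frac{3}{4} + \frac{1}{4d} + \frac{d-1}{4d}} = \brk{\frac{1}{2},\ 1}$, and inverting yields $\id \preceq \Sigma \preceq 2\id$. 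This is the only place where a genuine estimate is used; the diagonal offset $\frac{3}{4}$ and the off-diagonal scale $\frac{1}{4d}$ are chosen precisely so that the Gershgorin row sums telescope cleanly.

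For parts (2)--(4), the plan is to translate through the (row-major) flattening convention. Writing $\ell = (i-1)d + j$ gives $(\ell - 1) \bdiv d = i - 1$ and $(\ell - 1) \bmod d = j - 1$, so the three cases of $\paren{\ell-1} \bmod d$ being equal to, strictly greater than, or strictly less than $\paren{\ell-1} \bdiv d$ correspond exactly to the diagonal, strict upper-triangular, and strict lower-triangular positions of $U$. Combining this with the definitions $U_{ii} = \frac{1}{2}(\Sigma^{-1})_{ii}$, $U_{ij} = (\Sigma^{-1})_{ij}$ for $j > i$, and $U_{ij} = 0$ for $j < i$, and then using $\eta_0 = 2U^\flat$, gives the claimed intervals in (2). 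Independence in (3) is immediate from the algorithm drawing each free entry of $\Sigma^{-1}$ independently, so the widths $R_\ell$ are $\frac{1}{2d}$ on the diagonal, $\frac{1}{d}$ on the strict upper triangle, and $0$ on the strict lower triangle; this gives $\infnorm{R} = \frac{1}{d}$ and $\llnorm{R}^2 = d \cdot \paren{\frac{1}{2d}}^2 + \frac{d(d-1)}{2} \cdot \paren{\frac{1}{d}}^2 = \frac{1}{4d} + \frac{d-1}{2d} = \frac{1}{2}\paren{1 - \frac{1}{2d}}$, which is bounded below by $\frac{1}{4}$ for $d \geq 1$. Part (4) then reads off the midpoint of each interval: $\frac{3}{4}$ at diagonal indices $\ell = (k-1)d + k$ and $0$ elsewhere.

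I do not anticipate any real obstacle: the Gershgorin estimate in part (1) is tight by design, and the remaining parts only require faithfully translating the algorithm's output through the flattening convention and the parameterization of Lemma~\ref{lem:exp_fam_norm_cov_alt}.
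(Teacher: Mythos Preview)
Your proposal is correct and follows essentially the same approach as the paper: Gershgorin for part (1) to pin the eigenvalues of $\Sigma^{-1}$ in $\brk{\frac{1}{2},1}$, and direct bookkeeping through the flattening convention and the definition $\eta_0 = 2U^\flat$ for parts (2)--(4). Your write-up is in fact more explicit than the paper's on the index arithmetic and the Gershgorin row-sum computation.
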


\begin{proof}
\begin{enumerate}
    \item By a direct application of Theorem~\ref{thm:gershg}, we get that the eigenvalues of $\Sigma^{- 1}$ all lie in the interval $\brk{\frac{1}{2}, 1}$.
    Thus, we have $\frac{1}{2} \id \preceq \Sigma^{- 1} \preceq \id \iff \id \preceq \Sigma \preceq 2 \id$.
    \item Observe that, the first of the three branches in the definition corresponds to the diagonal elements of $\eta_0^{\#}$, the second corresponds to those above the diagonal, and the third corresponds to those below the diagonal.
    The result follows directly from this remark.
    \item The non-constant components of $\eta_0$ are all independent draws from a uniform distribution, which immediately yields the independence property.
    In addition to that, we have $d$ non-constant components taking values in an interval of length $\frac{1}{2 d}$, and $\frac{d \paren{d - 1}}{2}$ non-constant components take values in an interval of length $\frac{1}{d}$.
    This implies that $\infnorm{R} = \frac{1}{d}$ and $\llnorm{R}^2 = d \cdot \frac{1}{4 d^2} + \frac{d \paren{d - 1}}{2} \cdot \frac{1}{d^2} = \frac{1}{2} \paren{1 - \frac{1}{2 d}}$.
    \item Observe that the non-diagonal elements of $\eta_0^{\#} = 2 U$ are either constant and equal to $0$ or take values in a $0$-centered interval.
    This implies that the only non-zero elements of $m$ are those corresponding to the diagonal elements of $\paren{\Sigma^{- 1}}_{i i}$.
    This yields the desired result.
\end{enumerate}
\end{proof}

\begin{remark}
\label{rem:accuracy}
Note that Theorem~\ref{thm:lb-maha} holds only for $\alpha = \cO\paren{1}$.
In contrast, prior fingerprinting lower bounds for mean estimation of product distributions and Gaussians gave non-trivial results in the low-accuracy regime $\alpha = \cO\paren{\sqrt{d}}$ (see, e.g., the results in Section~\ref{sec:existing}).
In principle, it may be able to achieve similar lower bounds for covariance estimation in Frobenius norm, as the Frobenius diameter of the set of matrices that satisfy $\id \preceq \Sigma \preceq 2 \id$ is equal to $\sqrt{d}$.
However, by the third part of the above lemma, our construction only has an upper bound on this diameter of $\cO\paren{1}$.
This makes estimation trivial for any $\alpha$ which is larger, as one could output an arbitrary parameter vector within the set.
Thus, to prove a lower bound for all $\alpha = \cO\paren{\sqrt{d}}$, one must consider a different construction, and it is not merely a deficiency of our analysis.
\end{remark}

We show that, if one has an estimator for the covariance matrix $\Sigma$, then this implies an estimator for the natural parameter vector $\eta_0$.
Therefore, a lower bound for the latter problem implies a lower bound for the former, allowing us to focus on lower bounds for estimating the natural parameter vector.
More precisely, we present an $\paren{\eps, \delta}$-DP mechanism $T_M \colon \bR^{n \times d} \to \bigotimes\limits_{j \in \brk{d^2}} I_j$ that satisfies the guarantee $\ex{X, T_M}{\llnorm{T_M\paren{X} - \paren{\eta_0 - m}}^2} \le \alpha^2$.
$T_M$ assumes the existence of an $\paren{\eps, \delta}$-DP mechanism $M \colon \bR^{n \times d} \to \bR^{d \times d}$ such that $\ex{X, M}{\mnorm{\Sigma}{M\paren{X} - \Sigma}^2} \le \frac{\alpha^2}{32}$.

\begin{algorithm}[htb]
    \caption{From Covariance Estimation to Natural Parameter Estimation}\label{alg:reduction}
    \hspace*{\algorithmicindent} \textbf{Input:} $X = \paren{X_1, \dots, X_n} \sim \cN\paren{0, \Sigma}^{\bigotimes n}$ and a mechanism $M \colon \bR^{n \times d} \to \bR^{d \times d}$. \\
    \hspace*{\algorithmicindent} \textbf{Output:} $T_M\paren{X} \in \bigotimes\limits_{j \in \brk{d^2}} I_j$.
    \begin{algorithmic}[1]
    \Procedure{\NatParamEst}{$X$}
        \State Let $\widehat{\Sigma} \coloneqq M\paren{X}$. \Comment{$\widehat{\Sigma}$ may not be equal to any possible $\Sigma$ generated by Algorithm~\ref{alg:sampling}.}
        \State Let $\widetilde{\Sigma}$ be the projection of $\widehat{\Sigma}$ onto the support of $\Sigma$ such that $\fnorm{\widehat{\Sigma} - \widetilde{\Sigma}}$ is minimized. \label{ln:proj}
        \State Let $\widetilde{U} \in \bR^{d \times d}$ be an upper-triangular matrix such that $\widetilde{\Sigma}^{- 1} = \widetilde{U} + \widetilde{U}^{\top}$.
        \State Let $T_M\paren{X} \coloneqq 2 \widetilde{U}^{\flat} - m$.
        \State \Return $T_M\paren{X}$.
    \EndProcedure
    \end{algorithmic}
\end{algorithm}

\begin{lemma}
\label{lem:reduction_cov}
Let $\Sigma \in \bR^{d \times d}$ be a covariance generated by Algorithm~\ref{alg:sampling}, and $X \sim \cN\paren{0, \Sigma}^{\bigotimes n}$ be a dataset.
If $M \colon \bR^{n \times d} \to \bR^{d \times d}$ is an $\paren{\eps, \delta}$-DP mechanism with $\ex{X, M}{\mnorm{\Sigma}{M\paren{X} - \Sigma}^2} \le \frac{\alpha^2}{32} \le \frac{1}{128}$, then Algorithm~\ref{alg:reduction} constructs an $\paren{\eps, \delta}$-DP mechanism $T_M \colon \bR^{n \times d} \to \bigotimes\limits_{j \in \brk{d^2}} I_j$ that satisfies $\ex{X, T_M}{\llnorm{T_M\paren{X} - \paren{\eta_0 - m}}^2} \le \alpha^2 \le \frac{1}{4}$.
\end{lemma}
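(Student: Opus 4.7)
The mechanism $T_M$ is a deterministic post-processing of $M$, so its $\paren{\eps, \delta}$-DP guarantee follows immediately from Lemma~\ref{lem:post_processing}. Moreover, by construction $\widetilde{\Sigma}$ lies in the support of Algorithm~\ref{alg:sampling}'s output, so $2\widetilde{U}^{\flat}$ lies in $\bigotimes_{j} I_{j}$, which translates (after subtracting $m$) into the claimed range. The substantive task is to bound the expected squared $\ell_2$-error on the natural-parameter side by the expected squared Mahalanobis error on the covariance side.

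\textbf{Chain of norm inequalities.} Write $\widehat{\Sigma} \coloneqq M\paren{X}$ and $D \coloneqq \widetilde{\Sigma}^{-1} - \Sigma^{-1}$. The plan is to chain four bounds. \emph{(i) Projection:} since $\Sigma$ itself lies in the support while $\widetilde{\Sigma}$ is the Frobenius-nearest point to $\widehat{\Sigma}$ in the support, $\fnorm{\widetilde{\Sigma} - \widehat{\Sigma}} \le \fnorm{\Sigma - \widehat{\Sigma}}$, so by the triangle inequality $\fnorm{\widetilde{\Sigma} - \Sigma} \le 2 \fnorm{\widehat{\Sigma} - \Sigma}$. \emph{(ii) Frobenius-to-Mahalanobis:} from $\id \preceq \Sigma \preceq 2 \id$ and the identity $\widehat{\Sigma} - \Sigma = \Sigma^{1/2} \paren{\Sigma^{-1/2} \paren{\widehat{\Sigma} - \Sigma} \Sigma^{-1/2}} \Sigma^{1/2}$, we obtain $\fnorm{\widehat{\Sigma} - \Sigma} \le \llnorm{\Sigma} \cdot \mnorm{\Sigma}{\widehat{\Sigma} - \Sigma} \le 2 \mnorm{\Sigma}{\widehat{\Sigma} - \Sigma}$. \emph{(iii) Passing to precision matrices:} writing $D = \widetilde{\Sigma}^{-1} \paren{\Sigma - \widetilde{\Sigma}} \Sigma^{-1}$ and using $\llnorm{\Sigma^{-1}}, \llnorm{\widetilde{\Sigma}^{-1}} \le 1$ (both follow from Gershgorin applied to matrices in the support, exactly as in the first part of Lemma~\ref{lem:sampling_analysis}) gives $\fnorm{D} \le \fnorm{\widetilde{\Sigma} - \Sigma}$. \emph{(iv) From $D$ to $U$:} by the definitions of $U$ and $\widetilde{U}$, the matrix $2\widetilde{U} - 2U$ is upper triangular with diagonal entries $D_{ii}$ and above-diagonal entries $2 D_{ij}$, so using symmetry of $D$ we get $\fnorm{2\widetilde{U} - 2U}^2 = \sum_{i} D_{ii}^2 + 4 \sum_{j > i} D_{ij}^2 \le 2 \fnorm{D}^2$.

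\textbf{Assembly.} Observing that $T_M\paren{X} - \paren{\eta_0 - m} = 2 \widetilde{U}^\flat - 2 U^\flat$, so that $\llnorm{T_M\paren{X} - \paren{\eta_0 - m}}^2 = \fnorm{2\widetilde{U} - 2U}^2$, and chaining the four bounds produces $\llnorm{T_M\paren{X} - \paren{\eta_0 - m}}^2 \le 32 \, \mnorm{\Sigma}{\widehat{\Sigma} - \Sigma}^2$. Taking expectations and invoking the hypothesis $\ex{X, M}{\mnorm{\Sigma}{M\paren{X} - \Sigma}^2} \le \alpha^2 / 32$ delivers the claim, and $\alpha^2 \le 1/4$ is just $32 \cdot \paren{1/128}$. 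The argument is essentially bookkeeping; no individual step is deep. The only mild obstacle is verifying that every matrix appearing in the chain, including $\widetilde{\Sigma}$ (which is only guaranteed to lie in the support rather than to equal the true $\Sigma$), satisfies the required spectral bounds, and that the factor of $2$ picked up in step (iv) from doubling the off-diagonal entries, together with the factors of $2$ in steps (i) and (ii), multiplies out to exactly the slack $32$ built into the hypothesis.
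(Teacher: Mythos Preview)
Your argument is correct and follows the same overall scaffolding as the paper: privacy by post-processing, the output-range check via the projection step, and a norm chain yielding $\llnorm{T_M\paren{X} - \paren{\eta_0 - m}}^2 \le 32 \, \mnorm{\Sigma}{\widehat{\Sigma} - \Sigma}^2$. Steps (i), (ii), and (iv) coincide with the paper's computations.

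The one place you genuinely diverge is step (iii). The paper does not use the resolvent identity $\widetilde{\Sigma}^{-1} - \Sigma^{-1} = \widetilde{\Sigma}^{-1}\paren{\Sigma - \widetilde{\Sigma}}\Sigma^{-1}$; instead it passes through the Mahalanobis norm $\mnorm{\Sigma^{-1}}{\widetilde{\Sigma}^{-1} - \Sigma^{-1}}$ and invokes two auxiliary rescaling lemmas (Lemmas~\ref{lem:mah_rescale1} and~\ref{lem:mah_rescale2}) to arrive at $\mnorm{\widetilde{\Sigma}}{\widetilde{\Sigma} - \Sigma}$, before finally bounding by $\fnorm{\widetilde{\Sigma} - \Sigma}$ using $\widetilde{\Sigma}^{-1} \preceq \id$. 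Both routes land on exactly the same inequality $\fnorm{\widetilde{\Sigma}^{-1} - \Sigma^{-1}} \le \fnorm{\widetilde{\Sigma} - \Sigma}$ and hence the same constant $32$. Your one-line version is more elementary and avoids the detour through Mahalanobis rescaling, at no cost in the final bound; the paper's route, while longer here, keeps the Mahalanobis structure visible throughout, which is thematically aligned with the statement but not logically necessary.
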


\begin{proof}
The privacy guarantee of $T_M$ is an immediate consequence of the guarantee of $M$ and Lemma~\ref{lem:post_processing}.
Additionally, the only randomness used by $T_M$ is that used by $M$.
Having established that, we focus our attention on the accuracy guarantee.
By the definitions of $\eta_0$ and $T_M$, we have:
\begin{align}
    \ex{X, T_M}{\llnorm{T_M\paren{X} - \paren{\eta_0 - m}}^2} = 4 \ex{X, M}{\fnorm{\widetilde{U} - U}^2}. \label{eq:red1}
\end{align}
Additionally, the definitions of $U$ and $\widetilde{U}$ yield:
\begin{align}
    \fnorm{\widetilde{U} - U}^2 = \sum\limits_{j \geq i} \paren{\widetilde{U}_{i j} - U_{i j}}^2 &= \frac{1}{4} \sum\limits_{i \in \brk{d}} \brk{\paren{\widetilde{\Sigma}^{- 1}}_{i i} - \paren{\Sigma^{- 1}}_{i i}}^2 + \frac{1}{2} \sum\limits_{i \neq j} \brk{\paren{\widetilde{\Sigma}^{- 1}}_{i j} - \paren{\Sigma^{- 1}}_{i j}}^2 \nonumber \\
                                                                                                 &\le \frac{1}{2} \fnorm{\widetilde{\Sigma}^{- 1} - \Sigma^{- 1}}^2. \label{eq:red2}
\end{align}
Combining (\ref{eq:red1}) and (\ref{eq:red2}), we get:
\begin{align}
    \ex{X, T_M}{\llnorm{T_M\paren{X} - \paren{\eta_0 - m}}^2} \le 2 \ex{X, M}{\fnorm{\widetilde{\Sigma}^{-1} - \Sigma^{- 1}}^2}, \label{eq:red3}
\end{align}
Now, observe that:
\begin{align}
    \fnorm{\widetilde{\Sigma}^{- 1} - \Sigma^{- 1}} = \fnorm{\Sigma^{- \frac{1}{2}} \paren{\id - \Sigma^{\frac{1}{2}} \widetilde{\Sigma}^{- 1} \Sigma^{\frac{1}{2}}} \Sigma^{- \frac{1}{2}}} \overset{\paren{a}}&{\le} \llnorm{\Sigma^{- \frac{1}{2}}}^2 \fnorm{\id - \Sigma^{\frac{1}{2}} \widetilde{\Sigma}^{- 1} \Sigma^{\frac{1}{2}}} \nonumber \\
                                                                          \overset{\paren{b}}&{\le} \mnorm{\Sigma^{- 1}}{\widetilde{\Sigma}^{- 1} - \Sigma^{- 1}} \nonumber \\
                                                                                             &= \mnorm{\paren{\Sigma^{- \frac{1}{2}} \widetilde{\Sigma}^{- \frac{1}{2}}} \widetilde{\Sigma} \paren{\widetilde{\Sigma}^{- \frac{1}{2}} \Sigma^{- \frac{1}{2}}}}{\widetilde{\Sigma}^{- 1} - \Sigma^{- 1}} \nonumber \\
                                                                          \overset{\paren{c}}&{=} \mnorm{\widetilde{\Sigma}}{\paren{\widetilde{\Sigma}^{\frac{1}{2}} \Sigma^{\frac{1}{2}}} \paren{\widetilde{\Sigma}^{- 1} - \Sigma^{- 1}} \paren{\Sigma^{\frac{1}{2}} \widetilde{\Sigma}^{\frac{1}{2}}}} \nonumber \\
                                                                                             &= \mnorm{\widetilde{\Sigma}}{\widetilde{\Sigma} - \paren{\widetilde{\Sigma}^{\frac{1}{2}} \Sigma^{\frac{1}{2}}} \Sigma^{- 1} \paren{\Sigma^{\frac{1}{2}} \widetilde{\Sigma}^{\frac{1}{2}}}} \nonumber \\
                                                                          \overset{\paren{d}}&{=} \mnorm{\widetilde{\Sigma}}{\widetilde{\Sigma} - \Sigma} \nonumber \\
                                                                                             &= \fnorm{\widetilde{\Sigma}^{- \frac{1}{2}} \paren{\widetilde{\Sigma} - \Sigma} \widetilde{\Sigma}^{- \frac{1}{2}}} \nonumber \\
                                                                                             &\le \llnorm{\widetilde{\Sigma}^{- \frac{1}{2}}}^2 \fnorm{\widetilde{\Sigma} - \Sigma} \nonumber \\
                                                                          \overset{\paren{e}}&{\le} \fnorm{\widetilde{\Sigma} - \Sigma} \nonumber \\
                                                                                             &\le \fnorm{\widetilde{\Sigma} - \widehat{\Sigma}} + \fnorm{\widehat{\Sigma} - \Sigma} \nonumber \\
                                                                          \overset{\paren{f}}&{\le} 2 \fnorm{\widehat{\Sigma} - \Sigma} \nonumber \\
                                                                                             &= 2 \fnorm{\Sigma^{\frac{1}{2}} \paren{\Sigma^{- \frac{1}{2}} \widehat{\Sigma} \Sigma^{- \frac{1}{2}} - \id} \Sigma^{\frac{1}{2}}} \nonumber \\
                                                                                             &\le 2 \llnorm{\Sigma^{\frac{1}{2}}}^2 \fnorm{\Sigma^{- \frac{1}{2}} \widehat{\Sigma} \Sigma^{- \frac{1}{2}} - \id} \nonumber \\
                                                                                             &\le 4 \mnorm{\Sigma}{\widehat{\Sigma} - \Sigma}, \label{eq:red4}
\end{align}
where $\paren{a}$ uses Fact~\ref{fact:frob_product}, $\paren{b}$ uses the fact that $\Sigma^{- 1} \preceq \id$ (which holds by the first part of Lemma~\ref{lem:reduction_cov}) and that $\llnorm{\Sigma^{- \frac{1}{2}}} = \llnorm{\Sigma^{- 1}}^{\frac{1}{2}}$ for symmetric PSD matrices, $\paren{c}$ uses Lemma~\ref{lem:mah_rescale1} for $A = \Sigma^{- \frac{1}{2}} \widetilde{\Sigma}^{- \frac{1}{2}}$, $\paren{d}$ is by Lemma~\ref{lem:mah_rescale2}, $\paren{e}$ and $\paren{f}$ are thanks to the projection step in Line~\ref{ln:proj}, which ensures that $\widetilde{\Sigma}^{- 1} \preceq \id$ and $\fnorm{\widehat{\Sigma} - \widetilde{\Sigma}} \le \fnorm{\widehat{\Sigma} - \Sigma}$, and the last inequality uses that $\Sigma \preceq 2 \id$ (again by the first part of Lemma~\ref{lem:reduction_cov}).

Substituting based on (\ref{eq:red4}) into (\ref{eq:red3}), we get:
\[
    \ex{X, T_M}{\llnorm{T_M\paren{X} - \paren{\eta_0 - m}}^2} \le 32 \ex{X, M}{\mnorm{\Sigma}{M\paren{X} - \Sigma}^2},
\]
yielding the desired result.
\end{proof}
An immediate consequence of the above is the following:

\begin{corollary}
\label{cor:red_cov_lb}
Let $\Sigma \in \bR^{d \times d}$ be a covariance generated by Algorithm~\ref{alg:sampling}, and let $X \sim \cN\paren{0, \Sigma}^{\bigotimes n}$.
If any $\paren{\eps, \delta}$-DP mechanism $T \colon \bR^{n \times d} \to \bigotimes\limits_{j \in \brk{d^2}} I_j$ with $\ex{X, T}{\llnorm{T\paren{X} - \paren{\eta_0 - m}}^2} \le 32 \alpha^2 \le \frac{1}{4}$ requires at least $n \geq n_{\eta_0}$ samples, the same sample complexity lower bound holds for any $\paren{\eps, \delta}$-DP mechanism $M \colon \bR^{n \times d} \to \bR^{d \times d}$ that satisfies $\ex{X, M}{\mnorm{\Sigma}{M\paren{X} - \Sigma}^2} \le \alpha^2$.
\end{corollary}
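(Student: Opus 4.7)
The plan is to derive this as an immediate contrapositive consequence of Lemma~\ref{lem:reduction_cov}, which has already done all the real work. The key observation is that the parameter scaling in the corollary has been set precisely so that the reduction in Algorithm~\ref{alg:reduction} applied to a Mahalanobis-accurate $M$ yields an $\ell_2$-accurate $T_M$ at the threshold stated in the hypothesis.

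Concretely, I would argue by contradiction. Suppose some $\paren{\eps, \delta}$-DP mechanism $M \colon \bR^{n \times d} \to \bR^{d \times d}$ satisfies $\ex{X, M}{\mnorm{\Sigma}{M\paren{X} - \Sigma}^2} \le \alpha^2$ with $n < n_{\eta_0}$ samples. Invoke Algorithm~\ref{alg:reduction} on $M$ to obtain the mechanism $T_M \colon \bR^{n \times d} \to \bigotimes_{j \in \brk{d^2}} I_j$. By Lemma~\ref{lem:post_processing}, $T_M$ is also $\paren{\eps, \delta}$-DP, since $T_M$ is a deterministic post-processing of the output of $M$ (the projection in Line~\ref{ln:proj} and the subsequent decomposition into $\widetilde{U}$ use no additional randomness).

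Next, I would apply Lemma~\ref{lem:reduction_cov} with its accuracy parameter set to $\alpha'^2 \coloneqq 32 \alpha^2$, so that the hypothesis $\ex{X, M}{\mnorm{\Sigma}{M\paren{X} - \Sigma}^2} \le \frac{\alpha'^2}{32} = \alpha^2$ is exactly our assumption, and the constraint $\frac{\alpha'^2}{32} \le \frac{1}{128}$ translates to $32 \alpha^2 \le \frac{1}{4}$, which is the upper bound given in the corollary. The lemma then yields
\[
    \ex{X, T_M}{\llnorm{T_M\paren{X} - \paren{\eta_0 - m}}^2} \le 32 \alpha^2 \le \frac{1}{4}.
\]
But this means $T_M$ is an $\paren{\eps, \delta}$-DP mechanism achieving the accuracy threshold from the corollary's hypothesis on only $n < n_{\eta_0}$ samples, contradicting the assumed sample complexity lower bound $n_{\eta_0}$ for estimating $\eta_0 - m$.

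There is essentially no obstacle here; the work is all in Lemma~\ref{lem:reduction_cov}. The only thing to verify carefully is parameter bookkeeping: the factor of $32$ between the two accuracy guarantees, and matching the regime $32 \alpha^2 \le \frac{1}{4}$ (equivalently $\alpha^2 \le \frac{1}{128}$) that the lemma requires in order for the projection step to behave well. Once those align, the corollary follows from one application of the lemma plus post-processing closure.
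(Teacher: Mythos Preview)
Your proposal is correct and matches the paper's approach exactly: the paper states the corollary as ``an immediate consequence'' of Lemma~\ref{lem:reduction_cov} without further proof, and your contrapositive argument with the $\alpha' \coloneqq \sqrt{32}\,\alpha$ rescaling is precisely how that immediate consequence unpacks. The parameter bookkeeping you verify (the factor of $32$ and the constraint $32\alpha^2 \le \tfrac{1}{4} \iff \alpha^2 \le \tfrac{1}{128}$) is exactly what the paper intends.
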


Having established the reduction from estimating $\eta_0$ with respect to the $\ell_2$-norm to estimating $\Sigma$ with respect to the Mahalanobis norm, it remains now to apply Theorem~\ref{thm:lower_bound} to the former problem.
For that reason, we work towards bounding the various quantities involved in it.

First, as in Theorem~\ref{thm:lower_bound}, we assume the existence of an $\paren{\eps, \delta}$-DP mechanism $M \colon \bR^{n \times d} \to \bigotimes\limits_{j \in \brk{d^2}} I_j$ such that:
\[
    \ex{X, M}{\llnorm{M\paren{X} - \paren{\eta_0 - m}}^2} \le 32 \alpha^2 \le \frac{1}{96} \le \frac{\llnorm{R}^2}{24},
\]
where we appealed to the third part of Lemma~\ref{lem:sampling_analysis}.

The next step to establishing our lower bound is reasoning about the quantity $\ex{X_{\sim i}, M}{s^{\top} \Sigma_{T_0} s}$.
To do that, we identify an expression for $\Sigma_{T_0}$ for $\cN\paren{0, \Sigma}$ and use a result from~\cite{DiakonikolasKKLMS16}.
By Lemma~\ref{lem:exp_fam_norm_cov_alt}, we have $T_0 \equiv T \implies \Sigma_{T_0} = \Sigma_T$.
This yields:
\begin{align}
    \Sigma_T &= \ex{Y \sim \cN\paren{0, \Sigma}}{T\paren{Y} T\paren{Y}^{\top}} - \mu_T \mu_T^{\top} \nonumber \\
             &= \frac{1}{4} \ex{Y \sim \cN\paren{0, \Sigma}}{\paren{Y Y^{\top}}^{\flat} \paren{\paren{Y Y^{\top}}^{\flat}}^{\top}} - \frac{1}{4} \Sigma^{\flat} \paren{\Sigma^{\flat}}^{\top} \nonumber \\
             &= \frac{1}{4} \ex{Y \sim \cN\paren{0, \Sigma}}{\paren{Y \otimes Y} \paren{Y \otimes Y}^{\top}} - \frac{1}{4} \Sigma^{\flat} \paren{\Sigma^{\flat}}^{\top} \label{eq:cov_suff_stats}
\end{align}
To help us control the first term of (\ref{eq:cov_suff_stats}), we recall the following result:

\begin{lemma}
\label{lemma:fourth-order-symmetric}[Theorem $4.12$ from~\cite{DiakonikolasKKLMS16}]
Let $\cS_{\sym} = \brc{M^{\flat} \in \bR^{d^2} \colon M = M^{\top}}$ and $X \sim \cN\paren{0, \Sigma}$.
Let $M$ be the $d^2 \times d^2$ matrix given by $M \coloneqq \ex{}{\paren{X \otimes X} \paren{X \otimes X}^{\top}}$.
Then, as an operator on $\cS_{\sym}$, we have $M = 2 \Sigma^{\otimes 2} +  \paren{\Sigma^{\flat}} \paren{\Sigma^{\flat}}^{\top}$.
\end{lemma}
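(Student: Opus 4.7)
The plan is a direct entry-wise computation via Isserlis' theorem (Wick's formula for centered Gaussians), followed by a short argument identifying which operators act equally on the symmetric subspace $\cS_{\sym}$. Index the rows and columns of $M$ by ordered pairs $\paren{a, b}, \paren{c, d} \in \brk{d}^2$, so that $M_{\paren{a, b}, \paren{c, d}} = \ex{}{X_a X_b X_c X_d}$. Since $X \sim \cN\paren{0, \Sigma}$ is centered Gaussian, Isserlis' theorem gives
\[
    \ex{}{X_a X_b X_c X_d} = \Sigma_{a b} \Sigma_{c d} + \Sigma_{a c} \Sigma_{b d} + \Sigma_{a d} \Sigma_{b c}.
\]

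Next, I would identify each of the three summands with a concrete $d^2 \times d^2$ operator under the flattening convention of the paper. The term $\Sigma_{a b} \Sigma_{c d}$ is precisely the $\paren{\paren{a, b}, \paren{c, d}}$-entry of the rank-one matrix $\paren{\Sigma^{\flat}} \paren{\Sigma^{\flat}}^{\top}$; the term $\Sigma_{a c} \Sigma_{b d}$ is the corresponding entry of $\Sigma \otimes \Sigma$; and the term $\Sigma_{a d} \Sigma_{b c}$ is the corresponding entry of $K \paren{\Sigma \otimes \Sigma}$, where $K$ is the commutation matrix satisfying $K N^{\flat} = \paren{N^{\top}}^{\flat}$ for all $N \in \bR^{d \times d}$. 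Therefore, as operators on all of $\bR^{d^2}$,
\[
    M = \paren{\Sigma^{\flat}} \paren{\Sigma^{\flat}}^{\top} + \paren{\Sigma \otimes \Sigma} + K \paren{\Sigma \otimes \Sigma}.
\]

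To finish, I would restrict to $\cS_{\sym}$. Using the standard identity $\paren{\Sigma \otimes \Sigma} N^{\flat} = \paren{\Sigma N \Sigma}^{\flat}$ for any $N$, we obtain $K \paren{\Sigma \otimes \Sigma} N^{\flat} = \paren{\paren{\Sigma N \Sigma}^{\top}}^{\flat} = \paren{\Sigma N^{\top} \Sigma}^{\flat}$, using that $\Sigma$ is symmetric. If $N$ is symmetric then $N^{\top} = N$, so this equals $\paren{\Sigma \otimes \Sigma} N^{\flat}$. Hence the last two terms above coincide on $\cS_{\sym}$, yielding $M = 2 \Sigma^{\otimes 2} + \paren{\Sigma^{\flat}} \paren{\Sigma^{\flat}}^{\top}$ as an operator on $\cS_{\sym}$.

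The main ``obstacle'' is essentially bookkeeping: one has to pin down consistent conventions for $\flat$, the Kronecker product, and the commutation matrix $K$ in order to make the three entry-wise identifications correct. Once those conventions are fixed, the argument reduces to a single application of Isserlis' theorem plus the one-line observation that the swap term acts identically to $\Sigma \otimes \Sigma$ on symmetric inputs, so no deeper technical difficulty arises.
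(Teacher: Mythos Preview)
Your proof is correct. The entry-wise application of Isserlis' theorem cleanly produces the three-term decomposition $M = \Sigma^{\flat}\paren{\Sigma^{\flat}}^{\top} + \Sigma^{\otimes 2} + K\,\Sigma^{\otimes 2}$ on all of $\bR^{d^2}$, and your observation that $K$ acts as the identity on $\cS_{\sym}$ (via $\paren{\Sigma\otimes\Sigma}N^{\flat}=\paren{\Sigma N\Sigma}^{\flat}$ and $\Sigma N\Sigma$ being symmetric when $N$ is) correctly collapses the last two terms.

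This is a different route from the one the paper points to. The paper does not prove the lemma itself but defers to~\cite{DiakonikolasKKLMS16}, and its remark about the non-symmetric generalization (``we need to use the SVD of the matrix $v^{\#}$'') indicates that the cited proof evaluates the quadratic form $v^{\top} M v$ by spectrally decomposing $v^{\#}=\sum_i \lambda_i u_i u_i^{\top}$, writing $v^{\top}\paren{X\otimes X}=\sum_i \lambda_i \paren{u_i^{\top}X}^2$, and then computing moments of the jointly Gaussian scalars $u_i^{\top}X$. Your approach is more elementary and coordinate-based: it avoids any diagonalization and yields the full operator identity on $\bR^{d^2}$ (with the commutation matrix $K$ explicit), from which the restriction to $\cS_{\sym}$ is a one-line corollary. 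The spectral approach, by contrast, is tailored to the quadratic form and naturally extends to non-symmetric $v^{\#}$ by replacing the eigendecomposition with an SVD, which is exactly how the paper obtains its Lemma~\ref{lemma:fourth-order}. Your decomposition would also yield that generalization immediately, since $K\paren{\Sigma\otimes\Sigma}v = \paren{\Sigma\otimes\Sigma}\brk{\paren{v^{\#}}^{\top}}^{\flat}$ holds for arbitrary $v$.
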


The above implies that, given any vector $v \in \bR^{d^2}$ that is the canonical flattening of a symmetric matrix, we have:
\[
    v^{\top} \ex{Y \sim \cN\paren{0, \Sigma}}{\paren{Y \otimes Y} \paren{Y \otimes Y}^{\top}} v = 2 v^{\top} \Sigma^{\otimes 2} v + v^{\top} \paren{\Sigma^{\flat}} \paren{\Sigma^{\flat}}^{\top} v.
\]
However, based on the definition given for $s$ in Section~\ref{sec:fing_exp_fams}, as well as Lemmas~\ref{lem:exp_fam_norm_cov_alt} and~\ref{lem:sampling_analysis}, $s^{\#}$ is upper-triangular, not symmetric.\footnote{$s$ would have been symmetric if we had written $\cN\paren{0, \Sigma}$ as an exponential family in the way suggested by Fact~\ref{fact:exp_fam_norm_cov}.
However, as we argued when motivating Lemma~\ref{lem:exp_fam_norm_cov_alt}, this parameterization cannot be used in our setting because Theorem~\ref{thm:lower_bound} necessitates that we generate the components of $\eta$ independently.}
Nevertheless, this does not cause a problem, since it is possible to prove the following generalization of the previous result:

\begin{lemma}
\label{lemma:fourth-order}
For $X \sim \cN\paren{0, \Sigma}$, $M$ be the $d^2 \times d^2$ matrix given by $M \coloneqq \ex{}{\paren{X \otimes X} \paren{X \otimes X}^{\top}}$.
Then, we have:
\[
    v^{\top} M v = v^{\top} \Sigma^{\otimes 2} v + v^{\top} \Sigma^{\otimes 2} \brk{\paren{v^{\#}}^{\top}}^{\flat} + v^{\top} \paren{\Sigma^{\flat}} \paren{\Sigma^{\flat}}^{\top} v, \forall v \in \bR^{d^2}.
\]
\end{lemma}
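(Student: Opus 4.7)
The plan is to reduce the identity to a statement about a fourth moment of a Gaussian quadratic form, compute that moment with Isserlis' theorem, and then repackage the three resulting traces in Kronecker form.

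First I would note that under our row-major flattening $(XX^{\top})^{\flat} = X \otimes X$, so that
\[
    v^{\top}(X \otimes X) \;=\; \iprod{v^{\#},\, XX^{\top}} \;=\; X^{\top} v^{\#} X.
\]
Hence $v^{\top} M v = \ex{X \sim \cN(0,\Sigma)}{(X^{\top} v^{\#} X)^{2}}$, and the claim reduces to a formula for the second moment of a (not necessarily symmetric) quadratic form in a centered Gaussian.

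Next I would apply Isserlis' theorem to $\ex{}{X_i X_j X_k X_\ell}$, which decomposes as the three Wick pairings $\Sigma_{ij}\Sigma_{k\ell} + \Sigma_{ik}\Sigma_{j\ell} + \Sigma_{i\ell}\Sigma_{jk}$. Contracting with the coefficients $(v^{\#})_{ij}(v^{\#})_{k\ell}$ and carefully tracking which matrix carries the transpose (this is the only place where non-symmetry of $v^{\#}$ matters), the three pairings collapse to
\[
    \ex{}{(X^{\top} v^{\#} X)^{2}} \;=\; \bigl(\tr(v^{\#}\Sigma)\bigr)^{2} + \tr\!\bigl(v^{\#}\Sigma (v^{\#})^{\top}\Sigma\bigr) + \tr\!\bigl(v^{\#}\Sigma v^{\#}\Sigma\bigr).
\]

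Finally I would translate each term back into the target Kronecker form using the standard identity $(A \otimes B)\, N^{\flat} = (A N B^{\top})^{\flat}$ together with $\iprod{A,B}=\tr(AB^{\top})$ and the symmetry of $\Sigma$. Concretely, $\tr(v^{\#}\Sigma) = v^{\top}\Sigma^{\flat}$, so the first term gives $v^{\top}\Sigma^{\flat}(\Sigma^{\flat})^{\top}v$; for the second, $(\Sigma \otimes \Sigma)\, v = (\Sigma v^{\#} \Sigma)^{\flat}$ yields $v^{\top}\Sigma^{\otimes 2}v = \iprod{v^{\#}, \Sigma v^{\#}\Sigma} = \tr(v^{\#}\Sigma(v^{\#})^{\top}\Sigma)$; and for the third, $(\Sigma \otimes \Sigma)\,[(v^{\#})^{\top}]^{\flat} = (\Sigma (v^{\#})^{\top}\Sigma)^{\flat}$ yields $v^{\top}\Sigma^{\otimes 2}[(v^{\#})^{\top}]^{\flat} = \tr(v^{\#}\Sigma v^{\#}\Sigma)$. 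Summing, we obtain exactly the stated identity.

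The main obstacle is purely bookkeeping: keeping the flattening convention consistent with the Kronecker product convention, and resisting the temptation to symmetrize $v^{\#}$. Precisely because $v^{\#}$ need not be symmetric, the ``middle'' Wick term $\tr(v^{\#}\Sigma v^{\#}\Sigma)$ no longer coincides with $\tr(v^{\#}\Sigma(v^{\#})^{\top}\Sigma)$, which is why the earlier lemma of \cite{DiakonikolasKKLMS16} must acquire the extra cross-term $v^{\top}\Sigma^{\otimes 2}[(v^{\#})^{\top}]^{\flat}$ that vanishes for symmetric $v^{\#}$. Once the Wick expansion is set up correctly, the rest of the argument is a direct translation between index notation and Kronecker notation.
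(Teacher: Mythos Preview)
Your proof is correct and complete. The Isserlis/Wick computation and the Kronecker translations are all right, and the final identity matches the statement.

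Your route is genuinely different from the one the paper indicates. The paper does not give a self-contained argument; it defers to the proof of Theorem~4.12 in~\cite{DiakonikolasKKLMS16} and says the only change is to replace the spectral decomposition of the (symmetric) $v^{\#}$ by the SVD of the general $v^{\#}$. That approach first writes $v^{\#}=\sum_i \sigma_i u_i w_i^{\top}$, expresses $X^{\top} v^{\#} X=\sum_i \sigma_i (u_i^{\top}X)(w_i^{\top}X)$, and then computes the fourth moment term-by-term using the joint Gaussianity of the scalars $u_i^{\top}X$ and $w_j^{\top}X$. Your argument bypasses the decomposition entirely and works directly in index notation via Isserlis' theorem, which is more elementary and avoids any basis choice. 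Both arrive at the same three trace terms; the SVD approach makes the connection to the symmetric case more visibly structural, whereas yours makes clearer exactly which Wick pairing is responsible for each of the three terms in the final identity.

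One small wording issue in your closing commentary: the cross-term $v^{\top}\Sigma^{\otimes 2}\brk{\paren{v^{\#}}^{\top}}^{\flat}$ does not ``vanish'' when $v^{\#}$ is symmetric; rather it coincides with $v^{\top}\Sigma^{\otimes 2}v$, which is precisely what produces the factor $2$ in Lemma~\ref{lemma:fourth-order-symmetric}. This is only a slip in the exposition and does not affect the proof.
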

In the above, the term $2 v^{\top} \Sigma^{\otimes 2} v$ that appeared in Lemma~\ref{lemma:fourth-order-symmetric} is replaced by the sum $v^{\top} \Sigma^{\otimes 2} v + v^{\top} \Sigma^{\otimes 2} \brk{\paren{v^{\#}}^{\top}}^{\flat}$.
The second term of this sum is the bilinear form $v^{\top} \Sigma^{\otimes 2} \brk{\paren{v^{\#}}^{\top}}^{\flat}$.
Observe that $\brk{\paren{v^{\#}}^{\top}}^{\flat}$ is the vector one gets by taking the transpose of the matrix $v^{\#}$ and then flattening the resulting matrix.
The proof of this result follows the same basic steps as that of Lemma~\ref{lemma:fourth-order-symmetric}, with the only difference essentially being that we need to use the SVD of the matrix $v^{\#}$.
For that reason, we do not repeat the proof here and point readers to~\cite{DiakonikolasKKLMS16} for the full argument.

Using the above, we prove the following lemma:

\begin{lemma}
\label{lem:suff_stats_cov_ub}
Let $\Sigma \in \bR^{d \times d}$ be a covariance matrix generated by Algorithm~\ref{alg:sampling}, and let $X \sim \cN\paren{0, \Sigma}^{\bigotimes n}$ be a dataset.
Also, let $M \colon \bR^{n \times d} \to \bigotimes\limits_{j \in \brk{d^2}} I_j$ be an $\paren{\eps, \delta}$-DP mechanism such that $\ex{X, M}{\llnorm{M\paren{X} - \paren{\eta_0 - m}}^2} \le 32 \alpha^2$.
Then, we have $\ex{X_{\sim i}, M}{s^{\top} \Sigma_T s} \le \frac{4 \alpha^2}{d^4}$.
\end{lemma}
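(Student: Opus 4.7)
The plan is to combine the explicit formula for $\Sigma_T$ from equation~\eqref{eq:cov_suff_stats} with the Gaussian fourth-moment identity of Lemma~\ref{lemma:fourth-order}, and then exploit the fact that the widths $R_j$ are uniformly $\cO(1/d)$ to extract a factor of $1/d^4$ from the accuracy guarantee on $M$.

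First I would plug Lemma~\ref{lemma:fourth-order} into \eqref{eq:cov_suff_stats} to obtain
\[
    s^{\top} \Sigma_T s = \tfrac{1}{4}\paren{s^{\top} \Sigma^{\otimes 2} s \;+\; s^{\top} \Sigma^{\otimes 2} \brk{\paren{s^{\#}}^{\top}}^{\flat} \;+\; \iprod{s,\Sigma^\flat}^2} - \tfrac{1}{4}\iprod{s,\Sigma^\flat}^2,
\]
so the rank-one $\Sigma^\flat (\Sigma^\flat)^\top$ contributions cancel and only the two quadratic-in-$\Sigma$ terms survive. Because $\id \preceq \Sigma \preceq 2\id$ by Lemma~\ref{lem:sampling_analysis}, the operator norm of $\Sigma^{\otimes 2}$ is at most $4$, and Cauchy--Schwarz (noting that $[(s^\#)^\top]^\flat$ is just a permutation of $s$ and thus has the same $\ell_2$ norm) gives
\[
    s^{\top} \Sigma_T s \;\le\; \tfrac{1}{4}\paren{4\llnorm{s}^2 + 4\llnorm{s}^2} \;=\; 2\llnorm{s}^2.
\]

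Next I would bound $\llnorm{s}^2$ in terms of the estimator's error. Componentwise,
\[
    \abs{s_j} \;\le\; \tfrac{R_j^2}{4}\cdot\abs{M_j(X_{\sim i}) - (\eta_{0,j} - m_j)},
\]
since $\tfrac{R_j^2}{4} - (\eta_{0,j} - m_j)^2 \in [0, R_j^2/4]$. Squaring and summing, and using the uniform bound $R_j \le \infnorm{R} = 1/d$ from Lemma~\ref{lem:sampling_analysis}, gives
\[
    \llnorm{s}^2 \;\le\; \tfrac{1}{16 d^4}\llnorm{M(X_{\sim i}) - (\eta_0 - m)}^2.
\]

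Finally I would take expectations and invoke the accuracy hypothesis $\ex{X, M}{\llnorm{M(X) - (\eta_0 - m)}^2} \le 32 \alpha^2$ (which holds equally well for the neighboring dataset $X_{\sim i}$, since it has the same distribution as $X$ conditioned on $\eta_0$) to conclude
\[
    \ex{X_{\sim i}, M}{s^{\top}\Sigma_T s} \;\le\; 2\cdot\tfrac{32\alpha^2}{16 d^4} \;=\; \tfrac{4\alpha^2}{d^4}.
\]
The only nontrivial step is the first one: keeping track of the asymmetry of $s^\#$ (which is upper triangular rather than symmetric, because of the reparameterization in Lemma~\ref{lem:exp_fam_norm_cov_alt}) when applying Lemma~\ref{lemma:fourth-order}. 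Everything else is routine norm bookkeeping, and the $1/d^4$ savings comes entirely from the scaling factors $R_j^2/4$ that were deliberately inserted into the definition of $s$.
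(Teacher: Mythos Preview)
Your proposal is correct and follows essentially the same route as the paper's proof: apply Lemma~\ref{lemma:fourth-order} to~\eqref{eq:cov_suff_stats} so that the $\Sigma^{\flat}(\Sigma^{\flat})^{\top}$ pieces cancel, bound the remaining two terms by $\llnorm{\Sigma^{\otimes 2}}\llnorm{s}^2 \le 4\llnorm{s}^2$ via Cauchy--Schwarz and the equality $\llnorm{[(s^{\#})^{\top}]^{\flat}} = \llnorm{s}$, and then use $R_j \le \infnorm{R} = 1/d$ together with the accuracy hypothesis on $X_{\sim i}$. The only cosmetic difference is that the paper carries the expectation through each step rather than first bounding pointwise, but the argument is identical.
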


\begin{proof}
Let $v \coloneqq \brk{\paren{s^{\#}}^{\top}}^{\flat}$.
By Lemma~\ref{lemma:fourth-order} and (\ref{eq:cov_suff_stats}), we get that:
\begin{align}
    \ex{X_{\sim i}, M}{s^{\top} \Sigma_T s} = \frac{1}{4} \ex{X_{\sim i}, M}{s^{\top} \Sigma^{\otimes 2} s} + \frac{1}{4} \ex{X_{\sim i}, M}{s^{\top} \Sigma^{\otimes 2} v}. \label{eq:suff_stats_cov_ub1}
\end{align}
We argue that both terms of (\ref{eq:suff_stats_cov_ub1}) are upper-bounded by the same quantity.
For the first term, given that $\Sigma^{\otimes 2}$ is symmetric, the definition of the spectral norm yields:
\begin{align}
    \ex{X_{\sim i}, M}{s^{\top} \Sigma^{\otimes 2} s} = \ex{X_{\sim i}, M}{\llnorm{s}^2 \frac{s^{\top}}{\llnorm{s}} \Sigma^{\otimes 2} \frac{s}{\llnorm{s}}} \le \llnorm{\Sigma^{\otimes 2}} \ex{X_{\sim i}, M}{\llnorm{s}^2}. \label{eq:suff_stats_cov_ub2}
\end{align}
Now, for the second term, the first version of Fact~\ref{fact:cs} for $x = s, y = \Sigma^{\otimes 2} v$, as well as the definition of the spectral norm, yield:
\begin{align}
    \ex{X_{\sim i}, M}{s^{\top} \Sigma^{\otimes 2} v} = \ex{X_{\sim i}, M}{\iprod{s, \Sigma^{\otimes 2} v}} \le \ex{X_{\sim i}, M}{\llnorm{s} \llnorm{\Sigma^{\otimes 2} v}} &= \ex{X_{\sim i}, M}{\llnorm{s} \llnorm{v} \llnorm{\Sigma^{\otimes 2} \frac{v}{\llnorm{v}}}} \nonumber \\
     &\le \llnorm{\Sigma^{\otimes 2}} \ex{X_{\sim i}, M}{\llnorm{s} \llnorm{v}} \nonumber \\
     &= \llnorm{\Sigma^{\otimes 2}} \ex{X_{\sim i}, M}{\llnorm{s}^2}, \label{eq:suff_stats_cov_ub3}
\end{align}
where the last equality uses the observation that $\llnorm{v} = \llnorm{\brk{\paren{s^{\#}}^{\top}}^{\flat}} = \fnorm{\paren{s^{\#}}^{\top}} = \fnorm{s^{\#}} = \llnorm{s}$.

Substituting to (\ref{eq:suff_stats_cov_ub1}) based on (\ref{eq:suff_stats_cov_ub2}) and (\ref{eq:suff_stats_cov_ub3}), we get:
\begin{align*}
    \ex{X_{\sim i}, M}{s^{\top} \Sigma_T s} &\le \frac{1}{2} \llnorm{\Sigma^{\otimes 2}} \ex{X_{\sim i}, M}{\llnorm{s}^2} \\
                         \overset{\paren{a}}&{=} \frac{1}{2} \llnorm{\Sigma}^2 \ex{X_{\sim i}, M}{\llnorm{s}^2} \\
                         \overset{\paren{b}}&{\le} 2 \ex{X_{\sim i}, M}{\sum\limits_{j \in \brk{d^2}} \brk{\frac{R_j^2}{4} - \paren{\eta_{0, j} - m_j}^2}^2 \brk{M_j\paren{X_{\sim i}} - \paren{\eta_{0, j} - m_j}}^2} \\
                                            &\le \frac{\infnorm{R}^4}{16} \cdot 2 \ex{X_{\sim i}, M}{\llnorm{M\paren{X_{\sim i}} - \paren{\eta_0 - m}}^2} \\
                         \overset{\paren{c}}&{\le} \frac{\infnorm{R}^4}{8} \cdot 32 \alpha^2 \\
                                            &= 4 \infnorm{R}^4 \alpha^2 \\
                                            &= \frac{4 \alpha^2}{d^4},
\end{align*}
where $\paren{a}$ is by Fact~\ref{fact:kronecker_norm}, $\paren{b}$ is by the first part of Lemma~\ref{lem:sampling_analysis}, $\paren{c}$ is by the assumption that $\ex{X, M}{\llnorm{M\paren{X} - \paren{\eta_0 - m}}^2} \le 32 \alpha^2$, and the last equality is by the third part of Lemma~\ref{lem:sampling_analysis}.
\end{proof}

Now, it remains to reason about the term involving the tail probabilities of $\llnorm{T\paren{X_i} - \mu_T}$.
Our approach closely follows  Remark~\ref{rem:exp_chernoff}.
The following lemma is devoted to implementing this.

\begin{lemma}
\label{lem:tail_prob_cov_ub1}
Let $\Sigma \in \bR^{d \times d}$ be a covariance matrix generated by Algorithm~\ref{alg:sampling}, and $X_i \sim \cN\paren{0, \Sigma}$.
For $T \geq \frac{1}{3 d^2}$, we have:
\[
    \int\limits_T^{\infty} \pr{X_i}{\llnorm{T\paren{X_i} - \mu_T} > 4 d^2 t} \, dt \le \frac{2}{3 c d^2} e^{- d^2 \paren{3 c T - 2 \ln\paren{3}}},
\]
where $c \approx 0.036425$.
\end{lemma}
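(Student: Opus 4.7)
The strategy is to apply the $\epsilon$-net approach from Remark~\ref{rem:exp_chernoff} (with $k = d^2$) and then invoke a Hanson-Wright type tail bound for Gaussian quadratic forms.

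First, I would rewrite the tail event in a more tractable form. Since $T\paren{x} = -\frac{1}{2}\paren{xx^{\top}}^{\flat}$ and $\mu_T = -\frac{1}{2}\Sigma^{\flat}$, we have $\llnorm{T\paren{X_i}-\mu_T} = \frac{1}{2}\fnorm{X_iX_i^{\top}-\Sigma}$. For a fixed unit vector $v \in \bS^{d^2-1}$, one can verify that $\iprod{v,\, T\paren{X_i}-\mu_T} = -\frac{1}{2}\paren{X_i^{\top} v^{\#} X_i - \tr\paren{v^{\#}\Sigma}}$, so bounding the tail of $\llnorm{T\paren{X_i}-\mu_T}$ reduces (via a $\frac{1}{4}$-net of $\bS^{d^2-1}$ of cardinality at most $9^{d^2}$, exactly as in Remark~\ref{rem:exp_chernoff}) to bounding
\[
\pr{X_i}{\abs{X_i^{\top} v^{\#} X_i - \tr\paren{v^{\#}\Sigma}} > 6 d^2 t}
\]
for each such $v$, incurring a multiplicative loss of $9^{d^2} = e^{2 d^2 \ln 3}$.

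Next I would apply a Hanson-Wright inequality to this quadratic form. Setting $W \coloneqq \Sigma^{-1/2}X_i \sim \cN\paren{0,\id}$ and $A \coloneqq \Sigma^{1/2} v^{\#} \Sigma^{1/2}$, the quadratic form equals $W^{\top} A W$, and it depends only on the symmetric part $\sym\paren{A} = \tfrac{1}{2}\paren{A+A^{\top}}$. Using $\llnorm{\Sigma} \le 2$ from Lemma~\ref{lem:sampling_analysis}(1) and $\fnorm{v^{\#}} = \llnorm{v} = 1$, both $\fnorm{\sym\paren{A}}$ and $\llnorm{\sym\paren{A}}$ are bounded by absolute constants. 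Hanson-Wright then gives a bound of the form $2\exp\paren{-c'\min\paren{s^2, s}}$ on the probability that $\abs{W^{\top}AW - \tr\paren{A}}>s$, where $\tr\paren{A} = \tr\paren{v^{\#}\Sigma}$.

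The key observation for matching the target expression is the regime. Since $t \ge T \ge \frac{1}{3d^2}$, one has $3 c d^2 t \ge 1$ once $c'$ is absorbed appropriately, so the linear (sub-exponential) branch of Hanson-Wright dominates the quadratic one at $s = 6 d^2 t$. Choosing the constant $c \approx 0.036425$ consistently with the Hanson-Wright constant used in the paper, this yields
\[
\pr{X_i}{\llnorm{T\paren{X_i}-\mu_T} > 4 d^2 t} \le 9^{d^2} \cdot 2\exp\paren{-3 c d^2 t} = 2\exp\paren{-d^2\paren{3ct - 2\ln 3}}.
\]
Finally, integrating the resulting geometrically-decaying bound from $T$ to $\infty$ gives
\[
\int_T^{\infty} 2\exp\paren{-d^2\paren{3ct - 2\ln 3}} \, dt = \frac{2}{3 c d^2} \exp\paren{-d^2\paren{3cT - 2\ln 3}},
\]
which is exactly the claimed bound.

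The main obstacle, and the reason the lemma is stated only for $T \ge \frac{1}{3 d^2}$, is balancing the net blow-up $9^{d^2}$ against the Hanson-Wright tail. In the sub-Gaussian (quadratic) branch of Hanson-Wright, the exponent is $\Theta\paren{d^4 t^2}$, which defeats $e^{2 d^2 \ln 3}$ only once $t \gtrsim 1/d$, too weak for our purposes; it is the sub-exponential branch that produces the right $\exp\paren{-\Omega\paren{d^2 t}}$ rate. The threshold $T \ge \frac{1}{3 d^2}$ is exactly what guarantees we stay in that branch, so the technical care goes into verifying the norm bounds on $A$ and pinning down the absolute constant $c$ that arises from this branch of Hanson-Wright.
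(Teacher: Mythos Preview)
Your proposal is correct and follows essentially the same route as the paper's proof: an $\tfrac{1}{4}$-net on $\bS^{d^2-1}$ followed by the Gaussian Hanson--Wright inequality on the symmetrized quadratic form, with the threshold $T \ge \tfrac{1}{3d^2}$ ensuring the sub-exponential branch $3d^2 t$ dominates $9d^4 t^2$. The only cosmetic difference is the order of operations---the paper first standardizes via $Y=\Sigma^{-1/2}X_i$ (using $\llnorm{\Sigma}\le 2$ to get $\fnorm{X_iX_i^\top-\Sigma}\le 2\fnorm{YY^\top-\id}$) and then applies the net to obtain a unit-Frobenius matrix, whereas you net first and standardize after to get $\fnorm{A}\le 2$; the resulting Hanson--Wright exponent $c\min\{9d^4t^2,3d^2t\}$ and the final integration are identical.
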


\begin{proof}
Setting $Y \coloneqq \Sigma^{- \frac{1}{2}} X_i \sim \cN\paren{0, \id}$, we observe that:
\begin{align*}
    \llnorm{T\paren{X_i} - \mu_T} = \frac{1}{2} \llnorm{\paren{X_i X_i^{\top}}^{\flat} - \Sigma^{\flat}} &= \frac{1}{2} \fnorm{X_i X_i^{\top} - \Sigma} \\
                                                                                                         &= \frac{1}{2} \fnorm{\Sigma^{\frac{1}{2}} \brk{\paren{\Sigma^{- \frac{1}{2}} X_i} \paren{\Sigma^{- \frac{1}{2}} X_i}^{\top} - \id} \Sigma^{\frac{1}{2}}} \\
                                                                                      \overset{\paren{a}}&{\le} \frac{\llnorm{\Sigma^{\frac{1}{2}}}^2}{2} \fnorm{Y Y^{\top} - \id} \\
                                                                                      \overset{\paren{b}}&\le \fnorm{Y Y^{\top} - \ex{}{Y Y^{\top}}},
\end{align*}
where $\paren{a}$ uses Fact~\ref{fact:frob_product}, and $\paren{b}$ uses the assumption that $\Sigma \preceq 2 \id$ and the remark that, for symmetric PSD matrices, $\llnorm{\Sigma^{\frac{1}{2}}} = \sqrt{\llnorm{\Sigma}}$.

Thus, we have:
\[
    \int\limits_T^{\infty} \pr{X_i}{\llnorm{T\paren{X_i} - \mu_T} > 4 d^2 t} \, dt \le \int\limits_T^{\infty} \pr{}{\fnorm{Y Y^{\top} - \ex{}{Y Y^{\top}}} > 4 d^2 t} \, dt.
\]
Working as we described in Remark~\ref{rem:exp_chernoff}, we get:
\begin{align}
    \int\limits_T^{\infty} \pr{X_i}{\llnorm{T\paren{X_i} - \mu_T} > 4 d^2 t} \, dt &\le \int\limits_T^{\infty} \pr{}{\llnorm{\paren{Y Y^{\top} - \ex{}{Y Y^{\top}}}^{\flat}} > 4 d^2 t} \, dt \nonumber \\
                                                                                   &\le 9^{d^2} \int\limits_T^{\infty} \pr{}{\abs{\iprod{M^{\flat}, \paren{Y Y^{\top} - \ex{}{Y Y^{\top}}}^{\flat}}} > 3 d^2 t} \, dt \nonumber \\
                                                                                   &= 9^{d^2} \int\limits_T^{\infty} \pr{}{\abs{\iprod{M, Y Y^{\top} - \ex{}{Y Y^{\top}}}} > 3 d^2 t} \, dt \nonumber \\
                                                                                   &= 9^{d^2} \int\limits_T^{\infty} \pr{}{\abs{Y^{\top} M Y - \ex{}{Y^{\top} M Y}} > 3 d^2 t} \, dt \nonumber \\
                                                                                   &= 9^{d^2} \int\limits_T^{\infty} \pr{}{\abs{Y^{\top} \frac{M + M^{\top}}{2} Y - \ex{}{Y^{\top} \frac{M + M^{\top}}{2} Y}} > 3 d^2 t} \, dt, \label{eq:tail_prob_cov_equiv1}
\end{align}
where $M$ is a matrix with $\fnorm{M} = 1$.

We have that $Y \sim \cN\paren{0, \id}$, and that $\frac{M + M^{\top}}{2}$ is symmetric, so we can apply Fact~\ref{fact:hanson-wright} to (\ref{eq:tail_prob_cov_equiv1}).
Since $\fnorm{\frac{M + M^{\top}}{2}} \le \frac{\fnorm{M} + \fnorm{M^{\top}}}{2} = 1$ and $\llnorm{\frac{M + M^{\top}}{2}} \le \fnorm{\frac{M + M^{\top}}{2}} \le 1$, we get:
\[
    \int\limits_T^{\infty} \pr{X_i}{\llnorm{T\paren{X_i} - \mu_T} > 4 d^2 t} \, dt \le 2 \cdot 9^{d^2} \int\limits_T^{\infty} \exp\paren{- c \min\brc{9 d^4 t^2, 3 d^2 t}} \, dt.
\]
By our assumption that on $T$, we get $\min\brc{9 d^4 t^2, 3 d^2 t} = 3 d^2 t, \forall t \geq T$.
This leads to:
\[
    \int\limits_T^{\infty} \pr{X_i}{\llnorm{T\paren{X_i} - \mu_T} > 4 d^2 t} \, dt \le 2 \cdot 9^{d^2} \int\limits_T^{\infty} \exp\paren{- 3 c d^2 t} \, dt = \frac{2 \cdot 9^{d^2}}{3 c d^2} e^{- 3 c d^2 T} = \frac{2}{3 c d^2} e^{- d^2 \paren{3 c T - 2 \ln\paren{3}}}.
\]
\end{proof}

We are now ready to prove the main result of this section.

\begin{proof}[Proof of Theorem~\ref{thm:lb-maha}]
We assume that the process generating $\Sigma$ is that of Algorithm~\ref{alg:sampling}.
By the first part of Lemma~\ref{lem:sampling_analysis}, the condition $\id \preceq \Sigma \preceq 2 \id$ is satisfied.
We assume that there exists an $\paren{\eps, \delta}$-DP mechanism $M \colon \bR^{n \times d} \to \bigotimes\limits_{j \in \brk{d^2}} I_j$ that satisfies $\ex{X, M}{\llnorm{M\paren{X} - \paren{\eta_0 - m}}^2} \le 32 \alpha^2 \le \frac{1}{96}$.
Combining the results of Lemmas~\ref{lem:suff_stats_cov_ub}, and~\ref{lem:tail_prob_cov_ub1}, with Theorem~\ref{thm:lower_bound}, we get:
\begin{align}
    n \paren{2 \delta T + 4 \frac{\alpha \eps}{d^2} + \frac{4}{3 c d^2} e^{- d^2 \paren{3 c T - 2 \ln\paren{3}}}} \geq \frac{1}{48} \paren{1 - \frac{1}{2 d}} \geq \frac{1}{96}, \label{eq:ineq1}
\end{align}
where $T$ must be at least $\frac{1}{3 d^2}$ and $c \approx 0.036425$ (by Lemma~\ref{lem:tail_prob_cov_ub1}).

It remains to set the values of $T$ and $\delta$ appropriately so that:
\[
    \delta T \geq \frac{4}{3 c d^2} e^{- d^2 \paren{3 c T - 2 \ln\paren{3}}} \text{ and } 3 n \delta T \le \frac{1}{192}.
\]
The first of these two conditions yields:
\[
    \delta T \geq \frac{4}{3 c d^2} e^{- d^2 \paren{3 c T - 2 \ln\paren{3}}} \iff T e^{d^2 \paren{3 c T - 2 \ln\paren{3}}} \geq \frac{4}{3 c d^2} \cdot \frac{1}{\delta}.
\]
We set $T = \frac{1}{3 c} \paren{2 \ln\paren{3} + \frac{1}{d^2} \ln\paren{\frac{1}{\delta}}}$.
Assuming that $\frac{1}{c} \ln\paren{\frac{1}{\delta}} \geq 1 \iff \delta \le e^{- c} \approx 0.96423040846$, this satisfies the constraint of Lemma~\ref{lem:tail_prob_cov_ub1}.
Then, the above becomes:
\[
    \frac{1}{3 c} \paren{2 \ln\paren{3} + \frac{1}{d^2} \ln\paren{\frac{1}{\delta}}} \frac{1}{\delta} \geq \frac{1}{c d^2} \cdot \frac{1}{\delta} \iff \ln\paren{\frac{1}{\delta}} \geq 3 - 2 \ln\paren{3} d^2.
\]
The RHS of the last inequality is $< 0$ for $d \geq 2$.
For $d = 1$, we get the constraint $\delta \le e^{2 \ln\paren{3} - 3} \approx 0.44$, which is stricter than $\delta \le e^{- c}$.
Respecting this constraint, we proceed to identify a range of values for $\delta$ such that $3 n \delta T \le \frac{1}{192}$.
This is equivalent to:
\begin{align}
    \frac{\delta}{c} \paren{2 \ln\paren{3} + \frac{1}{d^2} \ln\paren{\frac{1}{\delta}}} \le \frac{1}{192 n} \iff \delta \paren{1 + \frac{\ln\paren{\frac{1}{\delta}}}{2 \ln\paren{3} d^2}} \le \frac{c}{384 \ln\paren{3} n}. \label{eq:ineq2}
\end{align}
The above inequality yields a constraint on $\delta$.
To ensure that $\ln\paren{\frac{1}{\delta}} \geq 0$, we need $\frac{c}{384 \ln\paren{3} n \delta} - 1 \geq 0 \iff \delta \le \frac{c}{384 \ln\paren{3} n}$.
We know from Lemma~\ref{lem:tail_prob_cov_ub1} that $c \approx 0.036425$.
Thus, this last constraint is stricter than $\delta \le 0.44$ for all $n \geq 1$.

For (\ref{eq:ineq2}) to hold, we will show that it suffices to have:
\[
    \delta \le \min\brc{\frac{c}{768 \ln\paren{3} n}, \frac{c d^2}{768 n \ln\paren{\frac{384 n}{c d^2}}}},
\]
which trivially satisfies the previous constraint.

We set $\phi \coloneqq \frac{c d^2}{384 n}$, so the proposed range for $\delta$ becomes $\delta \le \min\brc{\frac{\phi}{2 \ln\paren{3} d^2}, \frac{\phi}{2 \ln\paren{\frac{1}{\phi}}}}$.
Additionally, (\ref{eq:ineq2}) can be equivalently written as:
\begin{align}
    \delta \paren{1 + \frac{\ln\paren{\frac{1}{\delta}}}{2 \ln\paren{3} d^2}} \le \frac{\phi}{\ln\paren{3} d^2}. \label{eq:ineq3}
\end{align}
Observe that, depending on how $\delta$ compares with $e^{- 2 \ln\paren{3} d^2}$ determines which of the terms $1$ and $\frac{\ln\paren{\frac{1}{\delta}}}{2 \ln\paren{3} d^2}$ dominates.
Thus, our strategy to verify our claim is by considering cases based on whether the proposed range for $\delta$ includes $e^{- 2 \ln\paren{3} d^2}$.
This yields:
\begin{enumerate}
    \item $\min\brc{\frac{\phi}{2 \ln\paren{3} d^2}, \frac{\phi}{2 \ln\paren{\frac{1}{\phi}}}} < e^{- 2 \ln\paren{3} d^2}$.
    We start by arguing that:
    \begin{align}
        \min\brc{\frac{\phi}{2 \ln\paren{3} d^2}, \frac{\phi}{2 \ln\paren{\frac{1}{\phi}}}} < e^{- 2 \ln\paren{3} d^2} \iff \frac{\phi}{2 \ln\paren{\frac{1}{\phi}}} < e^{- 2 \ln\paren{3} d^2}. \label{eq:equivalence_cov}
    \end{align}
    The $\impliedby$ direction is trivial, so we focus on the $\implies$ direction.
    
    We assume that $\min\brc{\frac{\phi}{2 \ln\paren{3} d^2}, \frac{\phi}{2 \ln\paren{\frac{1}{\phi}}}} < e^{- 2 \ln\paren{3} d^2}$ but $\frac{\phi}{2 \ln\paren{\frac{1}{\phi}}} \geq e^{- 2 \ln\paren{3} d^2}$.
    Then, it must be the case that $\frac{\phi}{2 \ln\paren{3} d^2} < e^{- 2 \ln\paren{3} d^2}$.
    Setting $y \coloneqq \frac{1}{\phi}$, the above system of inequalities can be written in the form:
    \[
        \begin{cases}
            \frac{e^{2 \ln\paren{3} d^2}}{2} \geq y \ln\paren{y} \\
            y > \frac{e^{2 \ln\paren{3} d^2}}{2 \ln\paren{3} d^2}
        \end{cases}.
    \]
    The function $y \ln\paren{y}$ is increasing for $y \geq e^{- 1}$.
    We have $\frac{e^{2 \ln\paren{3} d^2}}{2 \ln\paren{3} d^2} \geq e^{- 1}, \forall d \geq 1$.
    Thus, for $y > \frac{e^{2 \ln\paren{3} d^2}}{2 \ln\paren{3} d^2}$, we get:
    \[
        \frac{e^{2 \ln\paren{3} d^2}}{2} \geq y \ln\paren{y} > \frac{e^{2 \ln\paren{3} d^2}}{2 \ln\paren{3} d^2} \ln\paren{\frac{e^{2 \ln\paren{3} d^2}}{2 \ln\paren{3} d^2}} \implies \ln\paren{3} d^2 < \ln\paren{2 \ln\paren{3} d^2}.
    \]
    However, the above inequality cannot be satisfied, which leads to a contradiction.
    
    Consequently, we have shown (\ref{eq:equivalence_cov}).
    Since we have assumed that $\min\brc{\frac{\phi}{2 \ln\paren{3} d^2}, \frac{\phi}{2 \ln\paren{\frac{1}{\phi}}}} < e^{- 2 \ln\paren{3} d^2}$, it must be the case that $\frac{\phi}{2 \ln\paren{\frac{1}{\phi}}} < e^{- 2 \ln\paren{3} d^2}$.
    
    We now turn our attention again to (\ref{eq:ineq3}).
    We remark that, for $\delta < e^{- 2 \ln\paren{3} d^2}$, we get $\frac{\ln\paren{\frac{1}{\delta}}}{2 \ln\paren{3} d^2} > 1$.
    As a result, in order to satisfy (\ref{eq:ineq3}), it suffices to have:
    \begin{align}
        2 \delta \frac{\ln\paren{\frac{1}{\delta}}}{2 \ln\paren{3} d^2} \le \frac{\phi}{\ln\paren{3} d^2} \iff \delta \ln\paren{\frac{1}{\delta}} \le \phi. \label{eq:ineq4}
    \end{align}
    Observe that the function $\delta \ln\paren{\frac{1}{\delta}}$ is increasing for $\delta < e^{- 2 \ln\paren{3} d^2} < e^{- 1}$.
    Combined with our previous remarks, this implies that, in order to verify (\ref{eq:ineq4}) for the values of $\delta$ we picked, it suffices to show that it holds for $\delta = \frac{\phi}{2 \ln\paren{\frac{1}{\phi}}}$.
    So, we have to verify the inequality:
    \[
        \paren{\frac{\phi}{2 \ln\paren{\frac{1}{\phi}}}} \ln\paren{\frac{2 \ln\paren{\frac{1}{\phi}}}{\phi}} \le \phi \iff \frac{\ln\paren{2 \ln\paren{\frac{1}{\phi}}}}{\ln\paren{\frac{1}{\phi}}} \le 1.
    \]
    Setting $z \coloneqq \ln\paren{\frac{1}{\phi}}$, this becomes equivalent to $\frac{\ln\paren{2 z}}{z} < 1$, which holds for all $z > 0 \iff \ln\paren{\frac{1}{\phi}} > 0 \iff \phi < 1$.
    This last condition is satisfied by all $\phi$ such that $\frac{\phi}{2 \ln\paren{\frac{1}{\phi}}} < e^{- 2 \ln\paren{3} d^2}$, so the desired result has been established.
    \item $\min\brc{\frac{\phi}{2 \ln\paren{3} d^2}, \frac{\phi}{2 \ln\paren{\frac{1}{\phi}}}} \geq e^{- 2 \ln\paren{3} d^2}$.
    For this case, we consider two sub-cases, depending on how $\delta$ compares with $e^{- 2 \ln\paren{3} d^2}$.
    We have:
    \begin{itemize}
        \item $\delta < e^{- 2 \ln\paren{3} d^2}$.
        As before, we argue that it suffices to have $\delta \ln\paren{\frac{1}{\delta}} \le \phi$.
        Since $\delta \ln\paren{\frac{1}{\delta}}$ is increasing for $\delta < e^{- 2 \ln\paren{3} d^2} < e^{- 1}$, all we have to do is show that $e^{- 2 \ln\paren{3} d^2} 2 \ln\paren{3} d^2 \le \phi$.
        This follows by our assumption that $\min\brc{\frac{\phi}{2 \ln\paren{3} d^2}, \frac{\phi}{2 \ln\paren{\frac{1}{\phi}}}} \geq e^{- 2 \ln\paren{3} d^2}$.
        \item $\delta \geq e^{- 2 \ln\paren{3} d^2}$.
        This implies $\frac{\ln\paren{\frac{1}{\delta}}}{2 \ln\paren{3} d^2} \le 1$.
        Thus, for (\ref{eq:ineq3}) to hold, it suffices that:
        \[
            2 \delta \le \frac{\phi}{\ln\paren{3} d^2} \iff \delta \le \frac{\phi}{2 \ln\paren{3} d^2}.
        \]
        This is satisfied, because our proposed range for $\delta$ is $\delta \le \min\brc{\frac{\phi}{2 \ln\paren{3} d^2}, \frac{\phi}{2 \ln\paren{\frac{1}{\phi}}}}$.
    \end{itemize}
\end{enumerate}
We have established that our proposed values of $\delta$ and $T$ imply:
\[
    \delta T \geq \frac{4}{3 c d^2} e^{- d^2 \paren{3 c T - 2 \ln\paren{3}}} \text{ and } 3 n \delta T \le \frac{1}{192},
\]
while respecting all the constraints.

We substitute this to (\ref{eq:ineq1}) and get $n \geq \frac{d^2}{768 \alpha \eps}$.
Appealing to Corollary~\ref{cor:red_cov_lb} completes the proof.
\end{proof}

\subsection{Estimation with Respect to the Spectral Norm}
\label{subsec:spec_norm_lb}

Here, we prove a lower bound for covariance estimation of high-dimensional Gaussians in spectral norm under the constraint of approximate DP.
Employing a reduction-based approach from Mahalanobis estimation to spectral estimation, we directly leverage the results of Section~\ref{subsec:mah_norm_lb}.

\begin{theorem}[Gaussian Covariance Estimation in Spectral Norm]\label{thm:lb-spectral}
There exists a distribution $\cD$ over covariance matrices $\Sigma \in \bR^{d \times d}$ with $\id \preceq \Sigma \preceq 2 \id$ such that, given $\Sigma \sim \cD$ and $X \sim \cN\paren{0, \Sigma}^{\bigotimes n}$, for any $\alpha = \cO\paren{\frac{1}{\sqrt{d}}}$ and any $\paren{\eps, \delta}$-DP mechanism $M \colon \bR^{n \times d} \to \bR^{d \times d}$ with $\eps, \delta \in \brk{0, 1}$, and $\delta \le \cO\paren{\min\brc{\frac{1}{n}, \frac{d^2}{n \log\paren{\frac{n}{d^2}}}}}$ that satisfies $\ex{X, M}{\llnorm{\Sigma^{- \frac{1}{2}} \paren{M\paren{X} - \Sigma} \Sigma^{- \frac{1}{2}}}^2} \le \alpha^2$, it must hold that $n \geq \Omega\paren{\frac{d^{1.5}}{\alpha \eps}}$.
\end{theorem}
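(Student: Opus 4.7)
The plan is to prove Theorem~\ref{thm:lb-spectral} via a simple reduction to Theorem~\ref{thm:lb-maha}, exploiting the standard dimension-dependent relationship between the Frobenius and spectral norms.

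First, I would use the same distribution $\cD$ over covariance matrices produced by Algorithm~\ref{alg:sampling}, since Lemma~\ref{lem:sampling_analysis} already guarantees $\id \preceq \Sigma \preceq 2\id$, which is the support we need. Suppose, for contradiction, that there exists an $(\eps,\delta)$-DP mechanism $M \colon \bR^{n \times d} \to \bR^{d \times d}$ with $\ex{X,M}{\llnorm{\Sigma^{-1/2}(M(X) - \Sigma)\Sigma^{-1/2}}^2} \le \alpha^2$, for some sample size $n$. The key observation is that for any $A \in \bR^{d \times d}$ we have the standard inequality $\fnorm{A}^2 \le d \cdot \llnorm{A}^2$. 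Applying this to $A = \Sigma^{-1/2}(M(X) - \Sigma)\Sigma^{-1/2}$ and taking expectations yields
\[
    \ex{X,M}{\mnorm{\Sigma}{M(X) - \Sigma}^2} \le d \cdot \ex{X,M}{\llnorm{\Sigma^{-1/2}(M(X) - \Sigma)\Sigma^{-1/2}}^2} \le d\alpha^2.
\]
Thus $M$ is an $(\eps,\delta)$-DP mechanism achieving Mahalanobis MSE at most $(\alpha\sqrt{d})^2$.

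Next, I would apply Theorem~\ref{thm:lb-maha} with accuracy parameter $\alpha' \coloneqq \alpha \sqrt{d}$. The hypothesis $\alpha = \cO(1/\sqrt{d})$ ensures $\alpha' = \cO(1)$, so the accuracy requirement of Theorem~\ref{thm:lb-maha} is satisfied. The privacy parameters $\eps,\delta$ and the upper bound on $\delta$ are identical between the two statements, so no additional conditions need to be checked. Theorem~\ref{thm:lb-maha} then immediately gives
\[
    n \geq \Omega\paren{\frac{d^2}{\alpha' \eps}} = \Omega\paren{\frac{d^2}{\alpha \sqrt{d}\, \eps}} = \Omega\paren{\frac{d^{1.5}}{\alpha \eps}},
\]
which is exactly the desired lower bound.

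There is no real obstacle here: the entire argument is a one-line norm comparison followed by invocation of the previously established Mahalanobis bound. The only subtlety worth flagging is the role of the range restriction $\alpha = \cO(1/\sqrt{d})$; this is not an artifact of the reduction but reflects the fact that the construction in Algorithm~\ref{alg:sampling} has total Frobenius diameter $\cO(1)$ (see Remark~\ref{rem:accuracy}), so after rescaling by $\sqrt{d}$ we exhaust the regime in which Theorem~\ref{thm:lb-maha} provides information. Because post-processing (Lemma~\ref{lem:post_processing}) and linearity of expectation are the only tools invoked beyond the norm inequality, the argument is entirely self-contained once Theorem~\ref{thm:lb-maha} is in hand.
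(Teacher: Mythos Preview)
Your proposal is correct and matches the paper's own proof essentially line for line: both use the distribution from Algorithm~\ref{alg:sampling}, apply the inequality $\fnorm{A}\le\sqrt{d}\,\llnorm{A}$ to convert the spectral guarantee into a Mahalanobis guarantee with accuracy $\alpha\sqrt{d}$, and then invoke Theorem~\ref{thm:lb-maha}. The only cosmetic difference is that the paper phrases it as a contradiction from $n=o(d^{1.5}/(\alpha\eps))$, whereas you derive the bound directly; also, your mention of post-processing is unnecessary here since no transformation of $M$ is performed.
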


\begin{proof}
First, assume that $\Sigma$ is generated by Algorithm~\ref{alg:sampling} (as was assumed in all of Section~\ref{subsec:mah_norm_lb}).
Additionally, assume that there exists an $\paren{\eps, \delta}$-DP mechanism $M \colon \bR^{n \times d} \to \bR^{d \times d}$ with $\eps \in \brk{0, 1}$ and $\delta \le \cO\paren{\min\brc{\frac{1}{n}, \frac{d^2}{n \log\paren{\frac{n}{d^2}}}}}$ that satisfies $\ex{X, M}{\llnorm{\Sigma^{- \frac{1}{2}} \paren{M\paren{X} - \Sigma} \Sigma^{- \frac{1}{2}}}^2} \le \alpha^2$ and uses $n = o\paren{\frac{d^{1.5}}{\alpha \eps}}$ samples.
Using the standard matrix-norm inequality $\fnorm{A} \le \sqrt{d} \llnorm{A}, \forall A \in \bR^{k \times \ell}$, we get:
\begin{align*}
    \ex{X, M}{\mnorm{\Sigma}{M\paren{X} - \Sigma}^2} = \ex{X, M}{\fnorm{\Sigma^{- \frac{1}{2}} \paren{M\paren{X} - \Sigma} \Sigma^{- \frac{1}{2}}}^2} &\le d \ex{X, M}{\llnorm{\Sigma^{- \frac{1}{2}} \paren{M\paren{X} - \Sigma} \Sigma^{- \frac{1}{2}}}^2} \\
                                                                                                                                                      &\le d \alpha^2.
\end{align*}
By Theorem~\ref{thm:lb-maha} for $\alpha \to \sqrt{d} \alpha$, we get that any mechanism with the characteristics of $M$ that estimates a Gaussian covariance matrix in Mahalanobis error $\sqrt{d} \alpha$ requires $n \geq \Omega\paren{\frac{d^{1.5}}{\alpha \eps}}$ samples.
This leads to a contradiction, implying the desired sample complexity lower bound.
\end{proof}

\section{Mean Estimation of Heavy-Tailed Distributions}
\label{sec:heavy-tailed}
    
In this section, we prove a lower bound for mean estimation of high-dimensional distributions with second moments bounded by $1$.
For this, we use DP Assouad's method from~\cite{AcharyaSZ21} (Lemma~\ref{lem:dp-assouad}).

\begin{theorem}[Heavy-Tailed Mean Estimation]\label{thm:lb-second-moment}
Let $\cD$ be an arbitrary distribution over $\bR^d$ with second moments bounded by $1$ and unknown mean $\mu$.
Then, for any $\alpha \le 1$, any $\paren{\eps, \delta}$-DP mechanism with $\delta \le \eps$ that takes $X \sim \cD^{\bigotimes n}$ as input and outputs $M\paren{X}$ satisfying $\ex{X, M}{\llnorm{M\paren{X} - \mu}^2} \le \alpha^2$ requires $n \geq \Omega\paren{\frac{d}{\alpha^2 \eps}}$.
\end{theorem}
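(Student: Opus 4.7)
\textbf{Proof Proposal for Theorem~\ref{thm:lb-second-moment}.}

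The plan is to apply the DP Assouad's method (Lemma~\ref{lem:dp-assouad}) to a ``spiky'' family of distributions indexed by the hypercube. For each $v \in \{\pm 1\}^d$ and a parameter $p \in (0,1]$ to be chosen, I would define the product distribution $p_v$ on $\bR^d$ whose $i$-th marginal places mass $p$ on $v_i/\sqrt{p}$ and mass $1-p$ on $0$. The mean is $\mu_v = \sqrt{p}\, v$, so the loss $\ell(\mu_u,\mu_v) = \llnorm{\mu_u - \mu_v}^2 = 4p\cdot d_{\mathrm{Ham}}(u,v)$, which matches the Assouad hypothesis with $\tau = 2p$. The second moment condition follows by independence of coordinates: for any $w \in \bS^{d-1}$, $\ex{X \sim p_v}{\iprod{X-\mu_v, w}^2} = \sum_i w_i^2 \var{}{X_i} = (1-p)\sum_i w_i^2 \le 1$.

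Next I would build the coupling required by the second bullet of Lemma~\ref{lem:dp-assouad}. Fix a coordinate $i$, and couple $p_{+i}$ and $p_{-i}$ sample by sample. For each sample, generate all coordinates $j \ne i$ from a common draw, and for coordinate $i$, use a joint draw in which, with probability $1-p$, both samples take the value $0$, and with probability $p$ they take opposite values $\pm 1/\sqrt{p}$. Extending this to $n$ i.i.d.\ samples, the expected Hamming distance between the coupled datasets is $D = np$.

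With these pieces in place, Lemma~\ref{lem:dp-assouad} gives
\[
R(\cP,\ell,\eps,\delta) \;\geq\; \frac{d\tau}{2}\brc{0.9\, e^{-10\eps D} - 10\delta D} \;=\; dp\brc{0.9\, e^{-10\eps n p} - 10\delta n p}.
\]
I would set $p = c/(n\eps)$ for a small absolute constant $c$ (e.g., $c = 1/100$), so that $\eps D = c$ is a small constant and, using the hypothesis $\delta \le \eps$, $\delta D = c\,\delta/\eps \le c$. This makes the bracketed quantity $\Omega(1)$, so $R = \Omega(dp) = \Omega\paren{\frac{d}{n\eps}}$. Combining with the assumed accuracy $R \le \alpha^2$ and rearranging yields $n \geq \Omega\paren{\frac{d}{\alpha^2\eps}}$, as required.

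There is no deep technical obstacle; the care needed is in choosing the right ``spike'' distribution, verifying the moment bound under the chosen parameterization $1/\sqrt{p}$, and tuning the constant $c$ so that both the $e^{-10\eps D}$ and $10\delta D$ terms are controlled simultaneously under $\delta \le \eps$. One small boundary check is that $p = c/(n\eps) \le 1$, which is automatic in the interesting regime $n \geq \Omega(d/(\alpha^2\eps))$ with $\alpha \le 1$ and $d \geq 1$; otherwise the lower bound is vacuous and the construction can be trivialized by taking $p=1$.
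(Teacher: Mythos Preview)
Your proposal is correct and is essentially the same as the paper's proof: the same spike family (the paper writes $t$ for your $1/\sqrt{p}$ and imposes $pt^2=1$), the same coupling with $D=np$, and the same invocation of Lemma~\ref{lem:dp-assouad}. The only cosmetic difference is that the paper fixes $p=2\alpha^2/d$ and solves for $D$ (hence $n$), whereas you fix $p=c/(n\eps)$ and solve for $R$; the two orderings yield the same bound.
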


\begin{proof}
We define the following family of distributions, such that the second moment of each distribution is upper-bounded by $1$.
The loss function between the true distribution and the estimate is the squared $\ell_2$-distance between their means.
For each $v \in \cE_d = \brc{\pm 1}^d$, we define the distribution $\cD_v$ over $\bR^d$ as follows.
Let $t > 0$ and $0 < p < 1$.
For any $X \sim \cD_v$, we have:
\[
    X_i =
    \begin{cases}
        v_i t, & \text{with probability } p \\
        0    , & \text{with probability } 1-p
    \end{cases}
    , \forall i \in \brk{d}.
\]
Then $\ex{X \sim \cD_v}{X} = p t v$.
We show that the second moment of this distribution is upper-bounded by $p t^2$, that is, $\ex{X \sim \cD_v}{\iprod{X - p t v, u}^2} \le p t^2, \forall u \in \bS^{d-1}$.
Note that, for any $i \in \brk{d}$, we have $\ex{}{\paren{X_i - p t v_i}^2} = p \paren{1 - p} t^2$.
We omit the subscript of the expectation when the context is clear.
We have:
\begin{align*}
    \ex{}{\iprod{X - p t v, u}^2} &= \ex{}{\paren{\sum\limits_{i = 1}^d \paren{X_i - p t v_i} u_i}^2} \\
                                  &= \sum\limits_{i = 1}^d {\ex{}{\paren{X_i - p t v_i}^2} u_i^2} + \sum\limits_{i \neq j}{\ex{}{X_i - p t v_i} \ex{}{X_j - p t v_j} u_i u_j} \\
                                  &= \sum\limits_{i = 1}^d p \paren{1 - p} t^2 u_i^2 \\
                                  &= p \paren{1 - p} t^2 \\
                                  &\le p t^2.
\end{align*}
We want this to be upper-bounded by $1$:
\begin{align}
    p t^2 = 1. \label{eq:bounded-moment}
\end{align}
    
Next, for any $u, v \in \cE_d$, we bound the loss function between $\cD_{v_1}$ and $\cD_{v_2}$.
As mentioned before, we define $\ell\paren{\theta\paren{\cD_u},\theta\paren{\cD_v}} = \llnorm{\theta\paren{\cD_u} - \theta\paren{\cD_v}}^2$.
We have the following:
\begin{align*}
    \llnorm{\theta\paren{\cD_u} - \theta\paren{\cD_v}}^2 = \sum\limits_{i = 1}^d p^2 t^2 \paren{v_i - u_i}^2 = p^2 t^2 \sum\limits_{i \colon u_i \neq v_i} \paren{v_i - u_i}^2 = 4 p^2 t^2 \sum\limits_{i = 1}^d \mathds{1}\brc{v_i \neq u_i}.
\end{align*}
Using the notation in Lemma~\ref{lem:dp-assouad}, we have $\tau = 2 p^2 t^2$.
Using the same lemma, we get:
\begin{align*}
    R\paren{\cP, \ell, \eps, \delta} \geq \frac{d \tau}{2} \paren{0.9 e^{- 10 \eps D} - 10 \delta D} \geq \frac{d \tau}{2} \brk{0.9 \paren{1 - 10 \eps D} - 10 \delta D} &= \frac{d \tau}{2} \brk{0.9 - D \paren{9 \eps + 10 \delta}} \\
    &\geq \frac{d \tau}{2} \brk{0.9 - 10 D \paren{\eps + \delta}}.
\end{align*}
Setting $R\paren{\cP, \ell, \eps, \delta} \le \alpha^2$ and rearranging the above, we get:
\begin{align}
    D \geq \frac{1}{10 \paren{\eps + \delta}} \cdot \paren{0.9 - \frac{2 \alpha^2}{d \tau}}. \label{eq:second-moment-assouad}
\end{align}
Now, from (\ref{eq:bounded-moment}), $\tau = 2 p^2 t^2 = 2p$.
Using this and (\ref{eq:second-moment-assouad}), and setting $p = \frac{2 \alpha^2}{d}$, we have the following:
\begin{align}
    D \geq \frac{0.04}{\eps + \delta}. \label{eq:second-moment-assouad-2}
\end{align}
For $i \in \brk{d}$, we define the mixtures $p_{+i}$ and $p_{-i}$ as in Lemma~\ref{lem:dp-assouad}.
We define a coupling $\paren{X, Y}$ between $p_{+ i}$ and $p_{- i}$ as follows.
For every row $X_j$ in $X$, if $X_j^i \neq 0$, then $Y_j^i = - X_j^i$, otherwise $Y_j^i = X_j^i$.
For all other coordinates $k \neq i$, $Y_j^k = X_j^k$.
Notice that, for all $k \neq i$, the distributions for coordinate $k$ for both $p_{+ i}$ and $p_{- i}$ are identical because all coordinates are independent.
Therefore, setting $Y_j^k = X_j^k$, these coordinates of $X_j$ and $Y_j$ have identical distributions for all $j \in \brk{n}$.
The coordinate $i$ of $X_j$ is $t$ with probability $p$, and $0$ with probability $1 - p$.
Hence, the coordinate $i$ of $Y_j$ is $- t$ with probability $p$, and $0$ with probability $1 - p$.
This is true for all $j \in \brk{n}$.
Since all $n$ rows in both $X$ and $Y$ are chosen independently at random, the distribution of $X$ is $p_{+ i}$ and the distribution of $Y$ is $p_{- i}$.
Therefore, $\paren{X, Y}$ is a valid coupling of $p_{+ i}$ and $p_{- i}$.
It remains to determine the value of $D = \ex{}{\dHam\paren{X, Y}}$.
Note that, for each $j \in \brk{n}$, $X_j$ can only differ from $Y_j$ when $X_j^i \neq 0$.
This can only happen with probability $p$.
Thus, by linearity of expectation, $D = n p$.
    
Finally, using the fact that $p = \frac{2 \alpha^2}{d}$ and $t= \frac{\sqrt{d}}{\sqrt{2} \alpha}$, and substituting this in (\ref{eq:second-moment-assouad-2}), we have:
\[
    n p \geq \frac{0.04}{\eps + \delta} \implies n \geq \frac{d}{50 \paren{\eps + \delta} \alpha^2}.
\]
When $\eps \geq \delta$, this gives us the desired lower bound.
\end{proof}

\section{Conclusions and Open Problems}
\label{sec:conclusions}

In this work, we gave nearly tight lower bounds for private covariance estimation with respect to the Frobenius and spectral norms under the Gaussian distribution and for mean estimation with respect to the $\ell_2$-norm under distributions with bounded second moments.
This constitutes important progress in the lower bounds literature under approximate differential privacy.
On a technical level, we significantly generalized the fingerprinting method.
An obvious question is whether the fingerprinting method can be further generalized and what are its limitations.
A second obvious question is to generalize Theorem~\ref{thm:lb-second-moment} for higher-order moments.

Another question involves expanding upon the results of Theorems~\ref{thm:lb-maha} and~\ref{thm:lb-spectral} to hold in the regimes $\alpha = \cO\paren{\sqrt{d}}$ and $\alpha = \cO\paren{1}$, respectively.
For Frobenius estimation, the condition $\alpha = \cO\paren{1}$ is a consequence of the distribution over covariance matrices that we used to prove the result, whereas for spectral estimation the condition $\alpha = \cO\paren{\frac{1}{\sqrt{d}}}$ is an artifact of our reduction-based approach.
Especially in the latter case, it would be interesting to come up with a ``direct" way to obtain the lower bound, instead of reducing from Frobenius estimation.

Finally, an interesting question may be to obtain lower bounds for the problems considered here in the \emph{high-probability regime}.
Specifically, the guarantees in our lower bounds are expressed in terms of the MSE and in Section~\ref{subsec:results} we sketched the argument to convert them to constant probability guarantees.
It would be interesting to come up with techniques to prove tight lower bounds when the probability of success is $1 - \beta$ for $\beta \in \paren{0, 1}$.
This was done in the context of pure DP in~\cite{HopkinsKM21}.

\section*{Acknowledgments}

The authors would like to thank Jonathan Ullman for helpful feedback on the manuscript, Haoshu Xu for identifying a bug in the proof of Lemma~\ref{lem:upp_bound2} which affected how the range of $\delta$ is identified in Theorems~\ref{thm:lb-maha} and~\ref{thm:gaussian-mean}, as well as the anonymous reviewers at NeurIPS for their comments.
Finally, we would like to thank Yichen Wang for clarifications regarding some aspects of~\cite{CaiWZ23}.

\bibliographystyle{alpha}
\bibliography{biblio.bib}

\newcommand{\etalchar}[1]{$^{#1}$}
\begin{thebibliography}{WXDX20}

\bibitem[AAK21]{AdenAliAK21}
Ishaq {Aden-Ali}, Hassan Ashtiani, and Gautam Kamath.
\newblock On the sample complexity of privately learning unbounded
  high-dimensional gaussians.
\newblock In {\em Proceedings of the 32nd International Conference on
  Algorithmic Learning Theory}, ALT '21, pages 185--216. JMLR, Inc., 2021.

\bibitem[AAL21]{AdenAliAL21}
Ishaq {Aden-Ali}, Hassan Ashtiani, and Christopher Liaw.
\newblock Privately learning mixtures of axis-aligned gaussians.
\newblock In {\em Advances in Neural Information Processing Systems 34},
  NeurIPS '21. Curran Associates, Inc., 2021.

\bibitem[ADK19]{AventDK20}
Brendan Avent, Yatharth Dubey, and Aleksandra Korolova.
\newblock The power of the hybrid model for mean estimation.
\newblock {\em Proceedings on Privacy Enhancing Technologies}, 2020(4):48--68,
  2019.

\bibitem[AL21]{AshtianiL21}
Hassan Ashtiani and Christopher Liaw.
\newblock Private and polynomial time algorithms for learning {G}aussians and
  beyond.
\newblock {\em arXiv preprint arXiv:2111.11320}, 2021.

\bibitem[AMB19]{AvellaMedinaB19}
Marco Avella-Medina and Victor-Emmanuel Brunel.
\newblock Differentially private sub-{G}aussian location estimators.
\newblock {\em arXiv preprint arXiv:1906.11923}, 2019.

\bibitem[ASZ18]{AcharyaSZ18}
Jayadev Acharya, Ziteng Sun, and Huanyu Zhang.
\newblock Differentially private testing of identity and closeness of discrete
  distributions.
\newblock In {\em Advances in Neural Information Processing Systems 31},
  NeurIPS '18, pages 6878--6891. Curran Associates, Inc., 2018.

\bibitem[ASZ21]{AcharyaSZ21}
Jayadev Acharya, Ziteng Sun, and Huanyu Zhang.
\newblock Differentially private {A}ssouad, {F}ano, and {L}e {C}am.
\newblock In {\em Proceedings of the 32nd International Conference on
  Algorithmic Learning Theory}, ALT '21, pages 48--78. JMLR, Inc., 2021.

\bibitem[BD14]{BarberD14}
Rina~Foygel Barber and John~C Duchi.
\newblock Privacy and statistical risk: Formalisms and minimax bounds.
\newblock {\em arXiv preprint arXiv:1412.4451}, 2014.

\bibitem[BDKU20]{BiswasDKU20}
Sourav Biswas, Yihe Dong, Gautam Kamath, and Jonathan Ullman.
\newblock Coinpress: Practical private mean and covariance estimation.
\newblock In {\em Advances in Neural Information Processing Systems 33},
  NeurIPS '20, pages 14475--14485. Curran Associates, Inc., 2020.

\bibitem[BGS{\etalchar{+}}21]{BrownGSUZ21}
Gavin Brown, Marco Gaboardi, Adam Smith, Jonathan Ullman, and Lydia
  Zakynthinou.
\newblock Covariance-aware private mean estimation without private covariance
  estimation.
\newblock {\em arXiv preprint arXiv:2106.13329}, 2021.

\bibitem[BKSW19]{BunKSW19}
Mark Bun, Gautam Kamath, Thomas Steinke, and Zhiwei~Steven Wu.
\newblock Private hypothesis selection.
\newblock In {\em Advances in Neural Information Processing Systems 32},
  NeurIPS '19, pages 156--167. Curran Associates, Inc., 2019.

\bibitem[BS98]{BonehS95}
Dan Boneh and James Shaw.
\newblock Collusion-secure fingerprinting for digital data.
\newblock {\em IEEE Transactions on Information Theory}, 44(5):1897--1905,
  1998.

\bibitem[BS19]{BunS19}
Mark Bun and Thomas Steinke.
\newblock Average-case averages: Private algorithms for smooth sensitivity and
  mean estimation.
\newblock In {\em Advances in Neural Information Processing Systems 32},
  NeurIPS '19, pages 181--191. Curran Associates, Inc., 2019.

\bibitem[BST14]{BassilyST14}
Raef Bassily, Adam Smith, and Abhradeep Thakurta.
\newblock Private empirical risk minimization: Efficient algorithms and tight
  error bounds.
\newblock In {\em Proceedings of the 55th Annual IEEE Symposium on Foundations
  of Computer Science}, FOCS '14, pages 464--473, Washington, DC, USA, 2014.
  IEEE Computer Society.

\bibitem[BSU17]{BunSU17}
Mark Bun, Thomas Steinke, and Jonathan Ullman.
\newblock Make up your mind: The price of online queries in differential
  privacy.
\newblock In {\em Proceedings of the 28th Annual ACM-SIAM Symposium on Discrete
  Algorithms}, SODA '17, pages 1306--1325, Philadelphia, PA, USA, 2017. SIAM.

\bibitem[BUV14]{BunUV14}
Mark Bun, Jonathan Ullman, and Salil Vadhan.
\newblock Fingerprinting codes and the price of approximate differential
  privacy.
\newblock In {\em Proceedings of the 46th Annual ACM Symposium on the Theory of
  Computing}, STOC '14, pages 1--10, New York, NY, USA, 2014. ACM.

\bibitem[CWZ19]{CaiWZ19}
T.~Tony Cai, Yichen Wang, and Linjun Zhang.
\newblock The cost of privacy: Optimal rates of convergence for parameter
  estimation with differential privacy.
\newblock {\em arXiv preprint arXiv:1902.04495}, 2019.

\bibitem[CWZ20]{CaiWZ20}
T.~Tony Cai, Yichen Wang, and Linjun Zhang.
\newblock The cost of privacy in generalized linear models: Algorithms and
  minimax lower bounds.
\newblock {\em arXiv preprint arXiv:2011.03900}, 2020.

\bibitem[CWZ23]{CaiWZ23}
T.~Tony Cai, Yichen Wang, and Linjun Zhang.
\newblock Score attack: A lower bound technique for optimal differentially
  private learning.
\newblock {\em arXiv preprint arXiv:2303.07152}, 2023.

\bibitem[DFM{\etalchar{+}}20]{DuFMBG20}
Wenxin Du, Canyon Foot, Monica Moniot, Andrew Bray, and Adam Groce.
\newblock Differentially private confidence intervals.
\newblock {\em arXiv preprint arXiv:2001.02285}, 2020.

\bibitem[DHS15]{DiakonikolasHS15}
Ilias Diakonikolas, Moritz Hardt, and Ludwig Schmidt.
\newblock Differentially private learning of structured discrete distributions.
\newblock In {\em Advances in Neural Information Processing Systems 28}, NIPS
  '15, pages 2566--2574. Curran Associates, Inc., 2015.

\bibitem[DJW13]{DuchiJW13}
John~C. Duchi, Michael~I. Jordan, and Martin~J. Wainwright.
\newblock Local privacy and statistical minimax rates.
\newblock In {\em Proceedings of the 54th Annual IEEE Symposium on Foundations
  of Computer Science}, FOCS '13, pages 429--438, Washington, DC, USA, 2013.
  IEEE Computer Society.

\bibitem[DKK{\etalchar{+}}16]{DiakonikolasKKLMS16}
Ilias Diakonikolas, Gautam Kamath, Daniel~M. Kane, Jerry Li, Ankur Moitra, and
  Alistair Stewart.
\newblock Robust estimators in high dimensions without the computational
  intractability.
\newblock In {\em Proceedings of the 57th Annual IEEE Symposium on Foundations
  of Computer Science}, FOCS '16, pages 655--664, Washington, DC, USA, 2016.
  IEEE Computer Society.

\bibitem[DKM{\etalchar{+}}06]{DworkKMMN06}
Cynthia Dwork, Krishnaram Kenthapadi, Frank McSherry, Ilya Mironov, and Moni
  Naor.
\newblock Our data, ourselves: Privacy via distributed noise generation.
\newblock In {\em Proceedings of the 24th Annual International Conference on
  the Theory and Applications of Cryptographic Techniques}, EUROCRYPT '06,
  pages 486--503, Berlin, Heidelberg, 2006. Springer.

\bibitem[DMNS06]{DworkMNS06}
Cynthia Dwork, Frank McSherry, Kobbi Nissim, and Adam Smith.
\newblock Calibrating noise to sensitivity in private data analysis.
\newblock In {\em Proceedings of the 3rd Conference on Theory of Cryptography},
  TCC '06, pages 265--284, Berlin, Heidelberg, 2006. Springer.

\bibitem[DMR18]{DevroyeMR18b}
Luc Devroye, Abbas Mehrabian, and Tommy Reddad.
\newblock The total variation distance between high-dimensional {G}aussians.
\newblock {\em arXiv preprint arXiv:1810.08693}, 2018.

\bibitem[DSS{\etalchar{+}}15]{DworkSSUV15}
Cynthia Dwork, Adam Smith, Thomas Steinke, Jonathan Ullman, and Salil Vadhan.
\newblock Robust traceability from trace amounts.
\newblock In {\em Proceedings of the 56th Annual IEEE Symposium on Foundations
  of Computer Science}, FOCS '15, pages 650--669, Washington, DC, USA, 2015.
  IEEE Computer Society.

\bibitem[DTTZ14]{DworkTTZ14}
Cynthia Dwork, Kunal Talwar, Abhradeep Thakurta, and Li~Zhang.
\newblock Analyze {G}auss: Optimal bounds for privacy-preserving principal
  component analysis.
\newblock In {\em Proceedings of the 46th Annual ACM Symposium on the Theory of
  Computing}, STOC '14, pages 11--20, New York, NY, USA, 2014. ACM.

\bibitem[Ger31]{Gershgorin31}
S.~Gershgorin.
\newblock {\"U}ber die abgrenzung der {E}igenwerte einer matrix.
\newblock {\em Izv. Akad. Nauk. SSSR Ser. Mat.}, 1:749--754, 1931.

\bibitem[HKM21]{HopkinsKM21}
Samuel~B Hopkins, Gautam Kamath, and Mahbod Majid.
\newblock Efficient mean estimation with pure differential privacy via a
  sum-of-squares exponential mechanism.
\newblock {\em arXiv preprint arXiv:2111.12981}, 2021.

\bibitem[HLY21]{HuangLY21}
Ziyue Huang, Yuting Liang, and Ke~Yi.
\newblock Instance-optimal mean estimation under differential privacy.
\newblock In {\em Advances in Neural Information Processing Systems 34},
  NeurIPS '21. Curran Associates, Inc., 2021.

\bibitem[HT10]{HardtT10}
Moritz Hardt and Kunal Talwar.
\newblock On the geometry of differential privacy.
\newblock In {\em Proceedings of the 42nd Annual ACM Symposium on the Theory of
  Computing}, STOC '10, pages 705--714, New York, NY, USA, 2010. ACM.

\bibitem[Jor10]{Jordan10}
Michael Jordan.
\newblock Lecture notes for {B}ayesian modeling and inference, 2010.

\bibitem[KLSU19]{KamathLSU19}
Gautam Kamath, Jerry Li, Vikrant Singhal, and Jonathan Ullman.
\newblock Privately learning high-dimensional distributions.
\newblock In {\em Proceedings of the 32nd Annual Conference on Learning
  Theory}, COLT '19, pages 1853--1902, 2019.

\bibitem[KLZ21]{KamathLZ21}
Gautam Kamath, Xingtu Liu, and Huanyu Zhang.
\newblock Improved rates for differentially private stochastic convex
  optimization with heavy-tailed data.
\newblock {\em arXiv preprint arXiv:2106.01336}, 2021.

\bibitem[KMS{\etalchar{+}}21]{KamathMSSU21}
Gautam Kamath, Argyris Mouzakis, Vikrant Singhal, Thomas Steinke, and Jonathan
  Ullman.
\newblock A private and computationally-efficient estimator for unbounded
  gaussians.
\newblock {\em arXiv preprint arXiv:2111.04609}, 2021.

\bibitem[KSSU19]{KamathSSU19}
Gautam Kamath, Or~Sheffet, Vikrant Singhal, and Jonathan Ullman.
\newblock Differentially private algorithms for learning mixtures of separated
  {G}aussians.
\newblock In {\em Advances in Neural Information Processing Systems 32},
  NeurIPS '19, pages 168--180. Curran Associates, Inc., 2019.

\bibitem[KSU20]{KamathSU20}
Gautam Kamath, Vikrant Singhal, and Jonathan Ullman.
\newblock Private mean estimation of heavy-tailed distributions.
\newblock In {\em Proceedings of the 33rd Annual Conference on Learning
  Theory}, COLT '20, pages 2204--2235, 2020.

\bibitem[KU20]{KamathU20}
Gautam Kamath and Jonathan Ullman.
\newblock A primer on private statistics.
\newblock {\em arXiv preprint arXiv:2005.00010}, 2020.

\bibitem[KV18]{KarwaV18}
Vishesh Karwa and Salil Vadhan.
\newblock Finite sample differentially private confidence intervals.
\newblock In {\em Proceedings of the 9th Conference on Innovations in
  Theoretical Computer Science}, ITCS '18, pages 44:1--44:9, Dagstuhl, Germany,
  2018. Schloss Dagstuhl--Leibniz-Zentrum fuer Informatik.

\bibitem[Li19]{Gopi19}
Jerry Li.
\newblock Lecture 18: Private covariance estimation in high dimensions.
\newblock \url{https://jerryzli.github.io/robust-ml-fall19/lec18.pdf}, December
  2019.
\newblock Lecture notes.

\bibitem[LKO21]{LiuKO21}
Xiyang Liu, Weihao Kong, and Sewoong Oh.
\newblock Differential privacy and robust statistics in high dimensions.
\newblock {\em arXiv preprint arXiv:2111.06578}, 2021.

\bibitem[LM00]{LaurentM00}
Beatrice Laurent and Pascal Massart.
\newblock Adaptive estimation of a quadratic functional by model selection.
\newblock {\em The Annals of Statistics}, 28(5):1302--1338, 2000.

\bibitem[LSA{\etalchar{+}}21]{LevySAKKMS21}
Daniel Levy, Ziteng Sun, Kareem Amin, Satyen Kale, Alex Kulesza, Mehryar Mohri,
  and Ananda~Theertha Suresh.
\newblock Learning with user-level privacy.
\newblock In {\em Advances in Neural Information Processing Systems 34},
  NeurIPS '21. Curran Associates, Inc., 2021.

\bibitem[LSY{\etalchar{+}}20]{LiuSYKR20}
Yuhan Liu, Ananda~Theertha Suresh, Felix Yu, Sanjiv Kumar, and Michael Riley.
\newblock Learning discrete distributions: User vs item-level privacy.
\newblock In {\em Advances in Neural Information Processing Systems 33},
  NeurIPS '20. Curran Associates, Inc., 2020.

\bibitem[Mos21]{Moshksar21}
Kamyar Moshksar.
\newblock On the absolute constant in hanson-wright inequality.
\newblock {\em arXiv preprint arXiv:2111.00557}, 2021.

\bibitem[SU15]{SteinkeU15}
Thomas Steinke and Jonathan Ullman.
\newblock Interactive fingerprinting codes and the hardness of preventing false
  discovery.
\newblock In {\em Proceedings of the 28th Annual Conference on Learning
  Theory}, COLT '15, pages 1588--1628, 2015.

\bibitem[SU17a]{SteinkeU17a}
Thomas Steinke and Jonathan Ullman.
\newblock Between pure and approximate differential privacy.
\newblock {\em The Journal of Privacy and Confidentiality}, 7(2):3--22, 2017.

\bibitem[SU17b]{SteinkeU17b}
Thomas Steinke and Jonathan Ullman.
\newblock Tight lower bounds for differentially private selection.
\newblock In {\em Proceedings of the 58th Annual IEEE Symposium on Foundations
  of Computer Science}, FOCS '17, pages 552--563, Washington, DC, USA, 2017.
  IEEE Computer Society.

\bibitem[Tar08]{Tardos03}
Gabor Tardos.
\newblock Optimal probabilistic fingerprint codes.
\newblock {\em Journal of the ACM}, 55(2), 2008.

\bibitem[TVGZ20]{TzamosVZ20}
Christos Tzamos, Emmanouil-Vasileios Vlatakis-Gkaragkounis, and Ilias Zadik.
\newblock Optimal private median estimation under minimal distributional
  assumptions.
\newblock In {\em Advances in Neural Information Processing Systems 33},
  NeurIPS '20, pages 3301--3311. Curran Associates, Inc., 2020.

\bibitem[Ver18]{Vershynin18}
Roman Vershynin.
\newblock {\em High-Dimensional Probability: An Introduction with Applications
  in Data Science}.
\newblock Cambridge University Press, 2018.

\bibitem[WXDX20]{WangXDX20}
Di~Wang, Hanshen Xiao, Srinivas Devadas, and Jinhui Xu.
\newblock On differentially private stochastic convex optimization with
  heavy-tailed data.
\newblock In {\em Proceedings of the 37th International Conference on Machine
  Learning}, ICML '20, pages 10081--10091. JMLR, Inc., 2020.

\bibitem[Yu97]{Yu97}
Bin Yu.
\newblock Assouad, {F}ano, and {L}e {C}am.
\newblock In {\em Festschrift for Lucien Le Cam: Research Papers in Probability
  and Statistics}, pages 423--435. Springer New York, 1997.

\bibitem[ZKKW20]{ZhangKKW20}
Huanyu Zhang, Gautam Kamath, Janardhan Kulkarni, and Zhiwei~Steven Wu.
\newblock Privately learning {M}arkov random fields.
\newblock In {\em Proceedings of the 37th International Conference on Machine
  Learning}, ICML '20, pages 11129--11140. JMLR, Inc., 2020.

\end{thebibliography}

\appendix
\renewcommand{\thesection}{\Alph{section}}

\section{Facts from Linear Algebra and Geometry}
\label{sec:facts_lin_alg}

In this appendix we collect some results from linear algebra and geometry which are referenced throughout this work.

\begin{definition}[$\eps$-net]
\label{def:net}
Let $\paren{T, d}$ be a metric space.
Consider a subset $K \subset T$ and let $\eps > 0$.
A subset $\cN_{\eps} \subseteq K$ is called an \emph{$\eps$-net} of $K$ if the following holds:
\[
    \forall x \in K, \exists y \in \cN_{\eps} \colon d\paren{x, x_0} \le \eps.
\]
\end{definition}

\begin{definition}[Covering Number]
\label{def:cov_num}
The smallest possible cardinality of an $\eps$-net of $K$ is called the \emph{covering number} of $K$ and denoted by $\cN\paren{K, d, \eps}$. Equivalently, $\cN\paren{K, d, \eps}$ is the smallest number of closed balls with centers in $K$ and radii $\eps$ whose union covers $K$. 
\end{definition}

\begin{fact}[Covering numbers of $\bS^{d - 1}$]
\label{fact:cov_num_unit_sphere}
The covering numbers of the unit Euclidean sphere $\bS^{d - 1}$ satisfy the following for any $\eps > 0$:
\[
    \paren{\frac{1}{\eps}}^d \le \cN\paren{\bS^{d - 1}, \eps} \le \paren{1 + \frac{1}{\eps}}^d.
\]
\end{fact}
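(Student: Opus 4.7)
The plan is to prove both bounds by a standard volume-comparison argument in $\bR^d$.

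For the upper bound, I would construct an $\eps$-net as a \emph{maximal} $\eps$-separated subset $N \subseteq \bS^{d-1}$, i.e., a set with $\norm{x - y} \geq \eps$ for all distinct $x, y \in N$, to which no further point of $\bS^{d-1}$ can be added while preserving this property. Maximality forces $N$ to be an $\eps$-net, since otherwise one could add a violating point. By the triangle inequality, the open Euclidean balls $B\paren{x, \eps/2}$ indexed by $x \in N$ are pairwise disjoint, and because every center lies on the unit sphere, each such ball is contained in the annulus $\brc{z \in \bR^d \colon 1 - \eps/2 \le \norm{z} \le 1 + \eps/2}$. Writing $V_d$ for the volume of the unit Euclidean ball and comparing totals yields
\[
    \abs{N} \paren{\eps/2}^d V_d \le \brk{\paren{1 + \eps/2}^d - \paren{1 - \eps/2}^d} V_d,
\]
and expanding and rearranging the right-hand side delivers the claimed $\paren{1 + 1/\eps}^d$ upper bound (the coarser $\paren{1 + 2/\eps}^d$ form would instead arise if one dropped the subtracted inner-ball volume).

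For the lower bound, I would start from the defining property of any $\eps$-net $N$ of $\bS^{d-1}$, namely $\bS^{d-1} \subseteq \bigcup_{y \in N} B\paren{y, \eps}$. Since $\bS^{d-1}$ itself has $d$-dimensional Lebesgue measure zero, I would radially thicken: for any $x$ with $1 - \eps \le \norm{x} \le 1$, the projection $x / \norm{x}$ lies on the sphere and hence within distance $\eps$ of some $y \in N$, so $\norm{y - x} \le \eps + \abs{1 - \norm{x}} \le 2\eps$. Hence $\bigcup_{y \in N} B\paren{y, 2\eps}$ covers the entire outer shell between radii $1 - \eps$ and $1$, and a direct volume comparison against this shell produces a lower bound on $\abs{N}$ which, after carefully tracking the powers of $\eps$ that appear in the thickening, collapses to the stated $\paren{1/\eps}^d$.

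The main obstacle will be matching the precise constants in both bounds. The upper bound $\paren{1 + 1/\eps}^d$ is sharper than the standard textbook $\paren{1 + 2/\eps}^d$ and requires retaining the subtracted inner-annulus term and collapsing the resulting binomial expression cleanly, while the lower bound requires choosing the thickening radius carefully (and being precise about how the radial projection argument interacts with the ball radius $2\eps$) so as not to lose a factor of $\eps$ and degrade the bound to $\paren{1/\eps}^{d-1}$.
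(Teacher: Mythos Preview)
The paper does not prove this statement; it is recorded as a background fact without argument, and the paper only ever invokes the upper bound (in Remark~\ref{rem:exp_chernoff}). More to the point, the difficulties you anticipate are not deficiencies in your plan but reflections of the fact that both inequalities, exactly as stated, are inaccurate.

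For the upper bound, your maximal $\eps$-separated set plus volume comparison is the standard proof and it cleanly gives $\paren{1 + 2/\eps}^d$, by packing the disjoint balls $B\paren{x,\eps/2}$ into $B\paren{0, 1+\eps/2}$. The sharper constant $\paren{1 + 1/\eps}^d$ is actually false: already for $d = 1$ one has $\bS^0 = \brc{-1,1}$ and $\cN\paren{\bS^0,\eps} = 2$ for every $\eps < 2$, while $1 + 1/\eps < 2$ once $\eps > 1$. Your annulus refinement therefore cannot close the gap, because there is no true statement there to reach. Consistently, the paper's sole use of this fact plugs in $\eps = 1/4$ and writes $9^k = \paren{1 + 2/\eps}^k$, not $5^k = \paren{1 + 1/\eps}^k$.

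For the lower bound, your concern about degrading to $\paren{1/\eps}^{d-1}$ is exactly right and is unavoidable: the sphere is $(d-1)$-dimensional and its covering number genuinely scales as $\Theta\paren{\eps^{-(d-1)}}$. For example, $\cN\paren{\bS^1,\eps} = \Theta\paren{1/\eps}$ for small $\eps$, so the claimed bound $\paren{1/\eps}^2$ already fails at $d=2$. The inequality $\cN \ge \paren{1/\eps}^d$ is the standard volume lower bound for the unit \emph{ball} $B_2^d$ (from $\mathrm{vol}\paren{B_2^d} \le \abs{N}\cdot \mathrm{vol}\paren{\eps B_2^d}$), not for the sphere; it appears to have been transcribed here by accident. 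Since the paper never uses the lower bound, none of its results are affected.
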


\begin{fact}[Exercise $4.4.2$ from~\cite{Vershynin18}]
\label{fact:l2_norm_bound}
Let $x \in \bR^d$ and $\cN_{\eps}$ be an $\eps$-net on $\bS^{d - 1}$.
It holds that:
\[
    \sup\limits_{y \in \cN_{\eps}} \abs{\iprod{x, y}} \le \llnorm{x} \le \frac{1}{1 - \eps} \cdot \sup\limits_{y \in \cN_{\eps}} \abs{\iprod{x, y}}.
\]
\end{fact}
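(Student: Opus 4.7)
The plan is to prove the two inequalities separately. The lower bound $\sup_{y \in \cN_{\eps}} \abs{\iprod{x, y}} \le \llnorm{x}$ is immediate: for every $y \in \cN_{\eps} \subseteq \bS^{d-1}$, Cauchy-Schwarz gives $\abs{\iprod{x, y}} \le \llnorm{x} \llnorm{y} = \llnorm{x}$, and taking the supremum over $y \in \cN_{\eps}$ preserves the bound.

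For the upper bound, the approach is the standard approximation argument. The case $x = 0$ is trivial (both sides vanish), so assume $x \neq 0$ and set $y^{\star} \coloneqq x / \llnorm{x} \in \bS^{d - 1}$, noting $\iprod{x, y^{\star}} = \llnorm{x}$. By the definition of an $\eps$-net (Definition~\ref{def:net}) applied to $y^{\star}$, there exists $y_0 \in \cN_{\eps}$ with $\llnorm{y^{\star} - y_0} \le \eps$. Writing
\[
    \llnorm{x} = \iprod{x, y^{\star}} = \iprod{x, y_0} + \iprod{x, y^{\star} - y_0}
\]
and bounding the first term by $\sup_{y \in \cN_{\eps}} \abs{\iprod{x, y}}$ and the second term by $\llnorm{x} \llnorm{y^{\star} - y_0} \le \eps \llnorm{x}$ via Cauchy-Schwarz, we obtain $\llnorm{x} \le \sup_{y \in \cN_{\eps}} \abs{\iprod{x, y}} + \eps \llnorm{x}$. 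Rearranging yields $\paren{1 - \eps} \llnorm{x} \le \sup_{y \in \cN_{\eps}} \abs{\iprod{x, y}}$, which is equivalent to the claim (assuming $\eps < 1$, so that the factor $\frac{1}{1 - \eps}$ is well-defined and positive).

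There is no real obstacle here; the proof is a two-line application of the $\eps$-net property combined with Cauchy-Schwarz. The only subtlety worth flagging is the implicit assumption $\eps < 1$ needed to divide through, which is standard for covering-number arguments (and implicitly required for $\cN_{\eps}$ to be a nontrivial net of the unit sphere).
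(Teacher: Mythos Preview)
Your proof is correct and is the standard argument for this fact. Note that the paper does not actually supply a proof of this statement: it is listed in the appendix as a cited exercise from Vershynin's book, so there is nothing to compare against beyond confirming that your argument is sound, which it is.
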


\begin{fact}[Spectral Norm of Kronecker Product]
\label{fact:kronecker_norm}
Let $A \in \bR^{n \times m}$ and $B \in \bR^{k \times \ell}$.
Then, it holds that $\llnorm{A \otimes B} = \llnorm{A} \llnorm{B}$.
\end{fact}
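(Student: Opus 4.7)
The plan is to prove both inequalities $\llnorm{A \otimes B} \ge \llnorm{A} \llnorm{B}$ and $\llnorm{A \otimes B} \le \llnorm{A} \llnorm{B}$ by exploiting the fundamental identity $(A \otimes B)(x \otimes y) = (A x) \otimes (B y)$ together with the fact that $\llnorm{u \otimes v} = \llnorm{u}\llnorm{v}$ for all vectors $u, v$ (which follows immediately from expanding $\llnorm{u \otimes v}^2 = \sum_{i,j} u_i^2 v_j^2 = \llnorm{u}^2 \llnorm{v}^2$).

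For the lower bound, I would choose unit vectors $x \in \bR^m$ and $y \in \bR^{\ell}$ attaining $\llnorm{Ax} = \llnorm{A}$ and $\llnorm{By} = \llnorm{B}$ (i.e., top right singular vectors of $A$ and $B$). Then $x \otimes y \in \bR^{m \ell}$ is a unit vector, and
\[
    \llnorm{(A \otimes B)(x \otimes y)} = \llnorm{(Ax) \otimes (By)} = \llnorm{Ax}\llnorm{By} = \llnorm{A}\llnorm{B},
\]
so by definition of the spectral norm $\llnorm{A \otimes B} \ge \llnorm{A}\llnorm{B}$.

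For the matching upper bound, the cleanest route is via the SVD. Write $A = U_A \Sigma_A V_A^{\top}$ and $B = U_B \Sigma_B V_B^{\top}$, and apply the mixed-product property $(P_1 \otimes Q_1)(P_2 \otimes Q_2) = (P_1 P_2) \otimes (Q_1 Q_2)$ twice to obtain
\[
    A \otimes B = (U_A \otimes U_B)(\Sigma_A \otimes \Sigma_B)(V_A \otimes V_B)^{\top}.
\]
Since the Kronecker product of orthogonal matrices is orthogonal (verified by the same mixed-product identity applied to $(U_A \otimes U_B)^{\top}(U_A \otimes U_B)$), this factorization is an SVD of $A \otimes B$ up to reordering of the diagonal. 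The nonzero diagonal entries of $\Sigma_A \otimes \Sigma_B$ are exactly the products $\sigma_i(A)\sigma_j(B)$, whose maximum is $\llnorm{A}\llnorm{B}$. Hence $\llnorm{A \otimes B} = \llnorm{A}\llnorm{B}$.

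There is essentially no obstacle: the only subtlety is verifying the mixed-product and orthogonality facts, which are routine coordinate computations. One could alternatively avoid invoking SVD by expanding an arbitrary unit vector $v \in \bR^{m \ell}$ in a tensor-product basis and bounding $\llnorm{(A \otimes B) v}$ via two applications of the operator norm inequality in each factor, but the SVD route is shorter and more transparent.
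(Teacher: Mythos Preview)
Your proof is correct. The paper states this as a known fact without proof, so there is nothing to compare against; your SVD-based argument (plus the easy lower bound via top singular vectors) is the standard proof and is complete as written.
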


\begin{fact}[Frobenius Norm of Matrix Product]
\label{fact:frob_product}
Let $M, N \in \bR^{d \times d}$.
Then, $\fnorm{M N} \le \llnorm{M} \fnorm{N}$, and $\fnorm{M N} \le \llnorm{N} \fnorm{M}$.
\end{fact}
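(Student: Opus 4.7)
The plan is to prove the first inequality by decomposing the Frobenius norm of $MN$ column-by-column and reducing to the defining inequality $\llnorm{Mv} \le \llnorm{M} \llnorm{v}$ for the spectral norm, and then to derive the second inequality from the first by a transposition argument.

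Concretely, I would first write $\fnorm{MN}^2 = \sum_{i=1}^d \llnorm{(MN)_{\cdot i}}^2$ where $(MN)_{\cdot i} = M N_{\cdot i}$ denotes the $i$-th column of $MN$. By the definition of the spectral norm, for each column we have $\llnorm{M N_{\cdot i}} \le \llnorm{M} \llnorm{N_{\cdot i}}$, so squaring and summing over $i$ gives
\[
\fnorm{MN}^2 \;=\; \sum_{i=1}^d \llnorm{M N_{\cdot i}}^2 \;\le\; \llnorm{M}^2 \sum_{i=1}^d \llnorm{N_{\cdot i}}^2 \;=\; \llnorm{M}^2 \fnorm{N}^2,
\]
which yields $\fnorm{MN} \le \llnorm{M} \fnorm{N}$ after taking square roots.

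For the second inequality, I would use the two standard identities $\fnorm{A} = \fnorm{A^\top}$ and $\llnorm{A} = \llnorm{A^\top}$ (the latter because $A$ and $A^\top$ share singular values). Applying the first inequality to $N^\top M^\top$ in place of $M N$ gives
\[
\fnorm{MN} \;=\; \fnorm{(MN)^\top} \;=\; \fnorm{N^\top M^\top} \;\le\; \llnorm{N^\top} \fnorm{M^\top} \;=\; \llnorm{N} \fnorm{M},
\]
as desired.

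There is no real obstacle here: the only subtlety is remembering that the spectral norm is submultiplicative against the Euclidean norm of any vector (which is the content of the column-wise step), and that transposition leaves both norms invariant (so the roles of $M$ and $N$ can be swapped). One could equivalently carry out the argument row-wise for the second bound, but the transposition route is the shortest.
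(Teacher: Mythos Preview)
Your proposal is correct and essentially matches the paper's approach: the paper refers to a column-wise argument (Lemma~4.26 of~\cite{DiakonikolasKKLMS16}) for the first inequality and suggests the analogous row-wise argument for the second, which is exactly equivalent to your transposition route (as you yourself note).
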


For a proof of the first of the two inequalities, see Lemma $4.26$ of~\cite{DiakonikolasKKLMS16}.
The proof for the second one is analogous but we have to work with the rows of $M$ instead of the columns of $N$.

\begin{lemma}
\label{lem:mah_rescale1}
Let $\Sigma, A \in \bR^{d \times d}$ be a PSD and a non-singular matrix, respectively, and let $X \in \bR^{d \times d}$.
Then, it holds that $\mnorm{A \Sigma A^{\top}}{X} = \mnorm{\Sigma}{A^{- 1} X \paren{A^{\top}}^{- 1}}$.
\end{lemma}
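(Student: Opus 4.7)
The plan is to verify the identity by passing to squared norms and using the cyclic property of the trace. Recall that $\mnorm{\Sigma}{X}^2 = \fnorm{\Sigma^{-1/2} X \Sigma^{-1/2}}^2 = \tr\paren{(\Sigma^{-1/2} X \Sigma^{-1/2})^{\top} (\Sigma^{-1/2} X \Sigma^{-1/2})}$. The main (very mild) subtlety is that, in general, $\paren{A \Sigma A^{\top}}^{-1/2} \neq \Sigma^{-1/2} A^{-1}$, since the left-hand side is the symmetric positive square root while the right-hand side need not even be symmetric. Working at the level of squared norms sidesteps this issue: only $\paren{A \Sigma A^{\top}}^{-1}$ appears, and it factors cleanly as $\paren{A^{\top}}^{-1} \Sigma^{-1} A^{-1}$.

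First, I would expand the left-hand side. Writing $P \coloneqq \paren{A \Sigma A^{\top}}^{-1/2}$, which is symmetric, I get
\[
    \mnorm{A \Sigma A^{\top}}{X}^2 = \tr\paren{P X^{\top} P \cdot P X P} = \tr\paren{P^2 X^{\top} P^2 X},
\]
by applying the cyclic property of the trace to move the rightmost $P$ to the front. Using $P^2 = \paren{A \Sigma A^{\top}}^{-1} = \paren{A^{\top}}^{-1} \Sigma^{-1} A^{-1}$, this becomes
\[
    \tr\paren{\paren{A^{\top}}^{-1} \Sigma^{-1} A^{-1} X^{\top} \paren{A^{\top}}^{-1} \Sigma^{-1} A^{-1} X}.
\]

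Next, I would expand the right-hand side in the same way. Setting $M \coloneqq \Sigma^{-1/2} A^{-1} X \paren{A^{\top}}^{-1} \Sigma^{-1/2}$, a direct multiplication gives
\[
    M^{\top} M = \Sigma^{-1/2} A^{-1} X^{\top} \paren{A^{\top}}^{-1} \Sigma^{-1} A^{-1} X \paren{A^{\top}}^{-1} \Sigma^{-1/2},
\]
where the two adjacent $\Sigma^{-1/2}$ factors in the middle were combined into $\Sigma^{-1}$. Taking the trace and using the cyclic property to move the trailing $\paren{A^{\top}}^{-1} \Sigma^{-1/2}$ to the front, and combining with the leading $\Sigma^{-1/2}$, yields exactly the same expression as above for the LHS. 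Since both squared norms are equal and nonnegative, the lemma follows by taking square roots. No real obstacle arises beyond being careful about which square roots one is allowed to split apart.
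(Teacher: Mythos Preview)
Your proof is correct. The paper itself does not give a self-contained argument for this lemma; it simply cites Section~2.2 of~\cite{Gopi19} (stated there for symmetric $A$, with the remark that the proof is unchanged in general). Your route via squared norms and the cyclic property of the trace is exactly the natural one: it cleanly avoids the pitfall you flagged (that $\paren{A\Sigma A^{\top}}^{-1/2}$ need not factor as $\Sigma^{-1/2}A^{-1}$) by only ever using $\paren{A\Sigma A^{\top}}^{-1} = \paren{A^{\top}}^{-1}\Sigma^{-1}A^{-1}$. Nothing to add.
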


For a proof of this statement for symmetric $A$ (the proof does not change), we refer the reader to Section $2.2$ of~\cite{Gopi19}.

\begin{lemma}
\label{lem:mah_rescale2}
Let $A, B \in \bR^{d \times d}$ be symmetric positive definite matrices.
Then, it holds that:
\[
    \mnorm{A}{A - A^{\frac{1}{2}} B^{\frac{1}{2}} A^{- 1} B^{\frac{1}{2}} A^{\frac{1}{2}}} = \mnorm{A}{A - B}.
\]
\end{lemma}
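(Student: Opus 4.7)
The plan is to unpack the Mahalanobis norm using its definition $\mnorm{A}{M} = \fnorm{A^{-1/2} M A^{-1/2}}$ on both sides and reduce the identity to an equality of Frobenius norms of two symmetric matrices that share a spectrum. On the right, conjugation by $A^{-1/2}$ gives $\fnorm{I - A^{-1/2} B A^{-1/2}}$. On the left, the cross-terms $A^{-1/2} A^{1/2} = I$ and $A^{1/2} A^{-1/2} = I$ collapse nicely, leaving $\fnorm{I - B^{1/2} A^{-1} B^{1/2}}$. So the lemma reduces to showing
\[
    \fnorm{I - A^{-1/2} B A^{-1/2}} = \fnorm{I - B^{1/2} A^{-1} B^{1/2}}.
\]

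The key observation is that both matrices inside the norms are of the form $I$ minus something similar: setting $M \coloneqq A^{-1/2} B^{1/2}$ (which makes sense because $A$ and $B$ are symmetric PSD, so $A^{-1/2}$ and $B^{1/2}$ are symmetric), we have $M^{\top} M = B^{1/2} A^{-1} B^{1/2}$ and $M M^{\top} = A^{-1/2} B A^{-1/2}$. Since $M^{\top} M$ and $M M^{\top}$ always share their nonzero eigenvalues and here both are $d \times d$ (so they also share the zero eigenvalues, with multiplicity), the two matrices have identical spectra. Consequently, $I - M^{\top} M$ and $I - M M^{\top}$ are symmetric matrices with the same eigenvalues.

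The proof then concludes by recalling that, for symmetric matrices, the squared Frobenius norm equals the sum of squared eigenvalues. Since $I - A^{-1/2} B A^{-1/2}$ and $I - B^{1/2} A^{-1} B^{1/2}$ have identical eigenvalues, their Frobenius norms coincide, which yields the desired identity. There is no real obstacle here; the only care required is in keeping track of the symmetry of $A^{\pm 1/2}$ and $B^{\pm 1/2}$ (so that $M^{\top} = B^{1/2} A^{-1/2}$) when expanding $M^{\top} M$ and $M M^{\top}$, and in invoking the standard fact that $M^{\top} M$ and $M M^{\top}$ are isospectral for square $M$.
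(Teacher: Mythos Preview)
Your proof is correct. Both you and the paper reduce the identity to
\[
    \fnorm{I - A^{-1/2} B A^{-1/2}} = \fnorm{I - B^{1/2} A^{-1} B^{1/2}},
\]
but the final step differs slightly in flavor. The paper expands each squared Frobenius norm as a trace, multiplies out the square, and applies the cyclic property of the trace term by term to match the two expressions. You instead package the same computation by setting $M = A^{-1/2} B^{1/2}$, recognizing the two inner matrices as $M M^{\top}$ and $M^{\top} M$, and invoking their isospectrality. Your argument is a clean conceptual repackaging of the paper's trace manipulation (indeed, the isospectrality of $M^{\top} M$ and $M M^{\top}$ is itself proved via the cyclic property), and it has the mild advantage of making transparent \emph{why} the two Frobenius norms agree rather than just verifying it algebraically.
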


\begin{proof}
We have that:
\begin{align*}
    \mnorm{A}{A - A^{\frac{1}{2}} B^{\frac{1}{2}} A^{- 1} B^{\frac{1}{2}} A^{\frac{1}{2}}}^2 = \fnorm{\id - B^{\frac{1}{2}} A^{- 1} B^{\frac{1}{2}}}^2 &= \tr\paren{\paren{\id - B^{\frac{1}{2}} A^{- 1} B^{\frac{1}{2}}}^2} \\
                                                                                                                                                       &= \tr\paren{\id - 2 B^{\frac{1}{2}} A^{- 1} B^{\frac{1}{2}} + B^{\frac{1}{2}} A^{- 1} B A^{- 1} B^{\frac{1}{2}}} \\
                                                                                                                                                       &\overset{\paren{a}}{=} \tr\paren{\id - 2 A^{- \frac{1}{2}} B A^{- \frac{1}{2}} + \paren{A^{- \frac{1}{2}} B A^{- \frac{1}{2}}}^2} \\
                                                                                                                                                       &= \tr\paren{\paren{\id - A^{- \frac{1}{2}} B A^{- \frac{1}{2}}}^2} \\
                                                                                                                                                       &= \fnorm{\id - A^{- \frac{1}{2}} B A^{- \frac{1}{2}}}^2 \\
                                                                                                                                                       &= \mnorm{A}{A - B}^2,
\end{align*}
where in $\paren{a}$ we use the cyclic property of the trace in individual terms.
\end{proof}

\begin{theorem}[Gershgorin Circle Theorem (see~\cite{Gershgorin31})]
\label{thm:gershg}
Let $A \in \bR^{d \times d}$.
For any eigenvalue $\lambda$ of $A$, there exists an $i \in \brk{d}$, such that $\abs{\lambda - a_{i i}} \le \sum\limits_{j \neq i} \abs{a_{i j}}$.
\end{theorem}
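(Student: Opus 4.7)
The plan is to give the standard eigenvector-based argument for Gershgorin's theorem. Starting from an eigenvalue $\lambda$ of $A$, I would fix a corresponding eigenvector $v \in \bR^d \setminus \brc{0}$ satisfying $A v = \lambda v$, and then select the index $i \in \brk{d}$ where $v$ attains its maximum absolute value, i.e., $\abs{v_i} = \max_{j \in \brk{d}} \abs{v_j}$. Since $v$ is nonzero, this maximum is strictly positive, so $v_i \neq 0$ and dividing by $\abs{v_i}$ later will be legitimate.

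Next, I would write out the $i$-th coordinate of the eigenvalue equation $A v = \lambda v$, which reads $\sum_{j \in \brk{d}} a_{i j} v_j = \lambda v_i$. Isolating the diagonal contribution gives $\paren{\lambda - a_{i i}} v_i = \sum_{j \neq i} a_{i j} v_j$. Taking absolute values and applying the triangle inequality yields $\abs{\lambda - a_{i i}} \abs{v_i} \le \sum_{j \neq i} \abs{a_{i j}} \abs{v_j}$. Finally, I would use the maximality of $\abs{v_i}$ to bound $\abs{v_j} \le \abs{v_i}$ for every $j$, giving $\abs{\lambda - a_{i i}} \abs{v_i} \le \abs{v_i} \sum_{j \neq i} \abs{a_{i j}}$, and dividing through by $\abs{v_i} > 0$ produces the claimed inequality.

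There is no real obstacle here; the only subtlety is choosing the index $i$ to be one where $\abs{v_j}$ is maximized so that the normalization step is valid and so that the triangle inequality can be tightened to a bound involving only $\abs{v_i}$. Once that choice is made, the argument is a direct two-line manipulation of the eigenvalue equation, so the writeup should be quite short.
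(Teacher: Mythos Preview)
Your argument is correct and is exactly the standard proof of Gershgorin's theorem. Note, however, that the paper does not actually prove this statement: it is listed in the appendix of background facts with a citation to \cite{Gershgorin31} and no accompanying proof, so there is no paper proof to compare against. Your write-up would serve as a self-contained proof should one be desired.
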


\section{Facts from Probability \& Statistics}
\label{sec:facts_prob_stats}

In this appendix we collect some results from probability and statistics which are referenced throughout this work.

\begin{fact}
\label{fact:exp_fam_bernoulli}
The Bernoulli distribution $\Be\paren{p}$ is an exponential family with support $S = \brc{0, 1}$ and $h\paren{x} = 1, \eta = \ln\paren{\frac{1 - p}{p}}, T\paren{x} = x, Z\paren{\eta} = \ln\paren{1 + e^{\eta}}$.
\end{fact}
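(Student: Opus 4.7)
The plan is to verify the claimed exponential-family representation by direct substitution into Definition~\ref{def:exp_fams}. For $p \in (0,1)$, I would begin by writing the Bernoulli probability mass function in the convention matched to the stated natural parameter, namely $p_\eta(x) = p^{1 - x}(1 - p)^x$ for $x \in \brc{0,1}$. Taking logarithms yields
\[
    p_\eta(x) = \exp\paren{x \ln\paren{\frac{1 - p}{p}} + \ln p},
\]
which is exactly of the form $h(x)\exp(\eta^{\top} T(x) - Z(\eta))$ with $h(x) = 1$, $T(x) = x$, $\eta = \ln\paren{(1 - p)/p}$, and $Z(\eta) = -\ln p$. Two consistency checks then complete the verification: at $x = 0$ the right-hand side gives $p$ and at $x = 1$ it gives $1 - p$, matching the PMF under this convention.

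Next, I would verify that the stated closed form $Z(\eta) = \ln(1 + e^{\eta})$ agrees with $-\ln p$. From $\eta = \ln\paren{(1-p)/p}$ we get $e^{\eta} = (1 - p)/p$, hence $1 + e^{\eta} = 1/p$, so $\ln(1 + e^{\eta}) = -\ln p$, confirming the identity. Alternatively, one can recompute $Z$ from scratch using its defining formula from Definition~\ref{def:exp_fams}, specialized to a discrete support:
\[
    Z(\eta) = \ln\paren{\sum_{x \in \brc{0,1}} h(x) \exp\paren{\eta T(x)}} = \ln\paren{1 + e^{\eta}},
\]
which shows that the log-partition function is automatically the one claimed, regardless of the reparameterization.

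There is essentially no obstacle: this is a routine verification. The only place where one must be careful is matching the sign convention in $\eta$ to the convention for the PMF (i.e., whether $p$ denotes $\Pr[X = 0]$ or $\Pr[X = 1]$); the stated formula $\eta = \ln\paren{(1-p)/p}$ forces the convention $p = \Pr[X = 0]$, and once this is fixed, the remaining computations are a one-line algebraic check.
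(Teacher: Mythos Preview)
Your verification is correct. The paper states this as a bare fact in Appendix~\ref{sec:facts_prob_stats} without proof, so there is no argument to compare against; your direct substitution into Definition~\ref{def:exp_fams} is exactly the standard check one would expect, and your observation that the stated sign of $\eta$ forces the convention $p = \Pr[X = 0]$ is both correct and worth flagging.
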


\begin{fact}
\label{fact:exp_fam_norm_mean}
The $d$-dimensional Gaussian distribution with unit covariance and unknown mean $\cN\paren{\mu, \id}$ is an exponential family with support $S = \bR^d$ and $h\paren{x} = \frac{1}{\paren{2 \pi}^{\frac{d}{2}}} e^{- \frac{\llnorm{x}^2}{2}}, \eta = \mu, T\paren{x} = x, Z\paren{\eta} = \frac{\llnorm{\eta}^2}{2}$.
\end{fact}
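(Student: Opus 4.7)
The plan is a direct verification by manipulating the Gaussian density into the canonical exponential family form of Definition~\ref{def:exp_fams}. Starting from
\[
    p_\mu(x) = \frac{1}{(2\pi)^{d/2}} \exp\left(-\tfrac{1}{2}\llnorm{x-\mu}^2\right), \quad x\in\bR^d,
\]
I would expand the quadratic using $\llnorm{x-\mu}^2 = \llnorm{x}^2 - 2\mu^\top x + \llnorm{\mu}^2$, which immediately separates the density into a factor depending only on $x$, a linear interaction term between $x$ and $\mu$, and a term depending only on $\mu$:
\[
    p_\mu(x) = \underbrace{\frac{1}{(2\pi)^{d/2}} e^{-\llnorm{x}^2/2}}_{h(x)} \cdot \exp\!\left( \underbrace{\mu^\top}_{\eta^\top} \underbrace{x}_{T(x)} \;-\; \underbrace{\tfrac{1}{2}\llnorm{\mu}^2}_{Z(\eta)} \right).
\]
Reading off the factors gives exactly the proposed $h$, $\eta$, $T$, and $Z$.

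The only genuine check is that the value $Z(\eta) = \llnorm{\eta}^2/2$ is consistent with the normalization requirement from Definition~\ref{def:exp_fams}, namely $Z(\eta) = \ln\int_{\bR^d} h(x)\exp(\eta^\top T(x))\,dx$. To verify this, I would compute
\[
    \int_{\bR^d} h(x)\, e^{\eta^\top x}\, dx = \int_{\bR^d} \frac{1}{(2\pi)^{d/2}} e^{-\llnorm{x}^2/2 + \eta^\top x}\, dx,
\]
complete the square to rewrite the exponent as $-\tfrac{1}{2}\llnorm{x-\eta}^2 + \tfrac{1}{2}\llnorm{\eta}^2$, and recognize the remaining integral as that of a $\cN(\eta, \id)$ density, which equals $1$. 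This yields $\int h(x) e^{\eta^\top T(x)}\,dx = e^{\llnorm{\eta}^2/2}$, so $Z(\eta) = \llnorm{\eta}^2/2$ as claimed, and the natural parameter range is $\cH = \bR^d$ since the integral converges for every $\eta \in \bR^d$.

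There is no real obstacle here: the statement is purely a rewriting exercise and a single Gaussian integral. As a sanity check consistent with Proposition~\ref{prop:exp_prop}, one can observe $\nabla_\eta Z(\eta) = \eta = \mu = \mu_T$ and $\nabla^2_\eta Z(\eta) = \id = \Sigma_T$, which matches the known mean and covariance of $\cN(\mu,\id)$ and confirms the identification $T(x) = x$.
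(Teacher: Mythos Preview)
Your proposal is correct. The paper states this as a Fact without proof, so your direct verification by expanding the quadratic and reading off the exponential-family components (together with the normalization check and the sanity check via Proposition~\ref{prop:exp_prop}) is entirely appropriate and more detailed than what the paper itself provides.
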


\begin{fact}
\label{fact:exp_fam_norm_cov}
The $d$-dimensional Gaussian distribution with mean $0$ and unknown covariance $\cN\paren{0, \Sigma}$ is an exponential family with support $S = \bR^d$ and $h\paren{x} = 1, \eta = \paren{\Sigma^{- 1}}^{\flat}, T\paren{x} = - \frac{1}{2} \paren{x x^{\top}}^{\flat}, Z\paren{\eta} = \frac{d}{2} \ln\paren{2 \pi} - \frac{1}{2} \ln\paren{\det\paren{\eta^{\#}}}$.
\end{fact}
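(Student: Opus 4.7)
The plan is to verify the claim by directly manipulating the standard Gaussian density into the canonical exponential family form stated in Definition~\ref{def:exp_fams}. The density of $\cN\paren{0, \Sigma}$ on $\bR^d$ is $p\paren{x} = \paren{2\pi}^{-d/2} \det\paren{\Sigma}^{-1/2} \exp\paren{-\tfrac{1}{2} x^{\top} \Sigma^{-1} x}$, and the goal is to identify the carrier measure $h$, sufficient statistics $T$, natural parameter $\eta$, and log-partition function $Z$ that make this expression match $h\paren{x} \exp\paren{\eta^{\top} T\paren{x} - Z\paren{\eta}}$.

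The key step is to rewrite the quadratic form $x^{\top} \Sigma^{-1} x$ as an inner product between flattenings. Using the identity $x^{\top} A x = \tr\paren{A x x^{\top}} = \iprod{A, x x^{\top}} = \iprod{A^{\flat}, \paren{x x^{\top}}^{\flat}}$ (which follows from the matrix inner product definition given in the preliminaries), we get $-\tfrac{1}{2} x^{\top} \Sigma^{-1} x = \iprod{\paren{\Sigma^{-1}}^{\flat}, -\tfrac{1}{2} \paren{x x^{\top}}^{\flat}} = \eta^{\top} T\paren{x}$ with $\eta \coloneqq \paren{\Sigma^{-1}}^{\flat}$ and $T\paren{x} \coloneqq -\tfrac{1}{2} \paren{x x^{\top}}^{\flat}$. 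Next I would absorb the normalization constant into the log-partition function by writing $\paren{2\pi}^{-d/2} \det\paren{\Sigma}^{-1/2} = \exp\paren{-\tfrac{d}{2} \ln\paren{2\pi} + \tfrac{1}{2} \ln\paren{\det\paren{\Sigma^{-1}}}}$, and since $\eta^{\#} = \Sigma^{-1}$ by the inverse of the flattening operation, this yields $\det\paren{\Sigma}^{-1/2} = \det\paren{\eta^{\#}}^{1/2}$. Setting $Z\paren{\eta} \coloneqq \tfrac{d}{2} \ln\paren{2\pi} - \tfrac{1}{2} \ln\paren{\det\paren{\eta^{\#}}}$ and $h\paren{x} \coloneqq 1$ immediately gives the claimed form.

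It remains to confirm that $Z\paren{\eta}$ as defined agrees with the definition $Z\paren{\eta} = \ln \int h\paren{x} \exp\paren{\eta^{\top} T\paren{x}} \, dx$ from Definition~\ref{def:exp_fams}. This is automatic: since $p\paren{x}$ integrates to $1$ by virtue of being a Gaussian density, taking logarithms of the identity $\int \exp\paren{\eta^{\top} T\paren{x} - Z\paren{\eta}} \, dx = 1$ forces the $Z\paren{\eta}$ we derived to equal the log-partition function. The natural parameter range $\cH$ corresponds to the set of $\eta \in \bR^{d^2}$ for which $\eta^{\#}$ is symmetric and positive definite, since these are exactly the $\eta$ for which the integral converges.

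There is essentially no obstacle here beyond bookkeeping; the only subtlety worth flagging is that this parameterization uses the redundant (non-minimal) representation in which $\eta^{\#}$ is required to be symmetric, in contrast to the asymmetric upper-triangular parameterization adopted in Lemma~\ref{lem:exp_fam_norm_cov_alt} for the purposes of Theorem~\ref{thm:lower_bound}. The present fact states the most natural parameterization, which is precisely the starting point that motivates the reparameterization carried out in Section~\ref{subsec:mah_norm_lb}.
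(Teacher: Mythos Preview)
Your proposal is correct and is the standard direct verification; the paper itself states this result as a Fact without proof, so there is nothing to compare against beyond noting that your derivation is exactly the intended reading of the statement.
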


\begin{fact}[Cauchy-Schwarz]
\label{fact:cs}
We consider the following variants of the Cauchy-Schwarz inequality:
\begin{enumerate}
    \item For every $x, y \in \bR^d$, we have $\abs{\iprod{x, y}} \le \llnorm{x} \llnorm{y}$.
    \item Let $X, Y$ be random variables over $\bR$.
    Then:
    \[
        \ex{}{\abs{X Y}} \le \sqrt{\ex{}{X^2}} \sqrt{\ex{}{Y^2}},
    \]
    assuming all the quantities involved are well-defined.
\end{enumerate}
\end{fact}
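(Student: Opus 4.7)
Both items of this fact are classical, and I would prove them uniformly via the standard discriminant argument. The underlying structural observation is that item (1) is exactly Cauchy--Schwarz for $\bR^d$ equipped with the Euclidean inner product, while item (2) is Cauchy--Schwarz for $L^2\paren{\Omega}$ applied to the non-negative random variables $\abs{X}$ and $\abs{Y}$. Thus a single two-line computation, carried out in an inner-product space, handles both cases.

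For item (1), my plan is the following. Expand $\llnorm{x - t y}^2 = \llnorm{x}^2 - 2 t \iprod{x, y} + t^2 \llnorm{y}^2$ and observe that this is non-negative for every $t \in \bR$. If $y = 0$ then the inequality trivially holds, so assume $y \ne 0$. Then the expression, viewed as a quadratic in $t$ with positive leading coefficient, is non-negative for all real $t$ if and only if its discriminant is non-positive, which gives $4 \iprod{x, y}^2 - 4 \llnorm{x}^2 \llnorm{y}^2 \le 0$. Rearranging and taking square roots yields $\abs{\iprod{x, y}} \le \llnorm{x} \llnorm{y}$, as desired. (Equivalently, one can plug in the minimizer $t^{\star} = \iprod{x, y} / \llnorm{y}^2$ into the above expansion and rearrange.)

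For item (2), I would use the same argument applied pointwise to $\abs{X} - t \abs{Y}$ and then integrated. Specifically, the quantity $\paren{\abs{X} - t \abs{Y}}^2$ is pointwise non-negative, and taking expectations, using linearity together with the identity $\abs{X}^2 = X^2$, yields
\[
    \ex{}{X^2} - 2 t \ex{}{\abs{X Y}} + t^2 \ex{}{Y^2} \ge 0
\]
for every $t \in \bR$. The degenerate case $\ex{}{Y^2} = 0$ forces $Y = 0$ almost surely, so $\ex{}{\abs{X Y}} = 0$ and the inequality holds trivially. Otherwise, the discriminant-non-positive condition again gives $\ex{}{\abs{X Y}}^2 \le \ex{}{X^2} \ex{}{Y^2}$, whence the result follows by taking square roots.

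Since Cauchy--Schwarz is a textbook inequality, there is essentially no real obstacle. The only subtlety worth flagging is that in item (2) it is necessary to apply the argument to $\abs{X}, \abs{Y}$ (rather than $X, Y$) in order to obtain $\ex{}{\abs{X Y}}$ on the left-hand side instead of $\abs{\ex{}{X Y}}$, and to handle the degenerate cases ($y = 0$ for item (1), $\ex{}{Y^2} = 0$ for item (2)) separately from the generic case where the discriminant argument applies.
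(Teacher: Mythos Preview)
Your proof is correct and entirely standard. The paper itself does not prove this fact at all: it is stated in Appendix~\ref{sec:facts_prob_stats} as a well-known result from probability and statistics, with no accompanying argument, so there is nothing to compare against.
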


\begin{fact}
\label{fact:erfc}
We denote the \emph{complementary error function} by:
\[
    \erfc\paren{x} \coloneqq \frac{2}{\sqrt{\pi}} \int\limits_x^{\infty} e^{- t^2} \, dt, x \geq 0.
\]
It holds that $\erfc\paren{x} \le e^{- x^2}, \forall x \geq 0$.
\end{fact}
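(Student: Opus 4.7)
The plan is to prove the bound by a direct estimate of the integral defining $\erfc(x)$, exploiting the elementary inequality $t^2 \ge x^2 + (t-x)^2$ valid for $t \ge x \ge 0$. This inequality follows from expanding $(t-x)^2 = t^2 - 2xt + x^2$, which gives $t^2 - x^2 - (t-x)^2 = 2x(t-x) \ge 0$ since both factors are nonnegative in the relevant range.

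With this in hand, I would first factor the integrand: for $t \ge x \ge 0$,
\[
    e^{-t^2} \;\le\; e^{-x^2 - (t-x)^2} \;=\; e^{-x^2}\, e^{-(t-x)^2}.
\]
Integrating over $t \in [x, \infty)$ and pulling the constant $e^{-x^2}$ outside yields
\[
    \int_x^{\infty} e^{-t^2}\,dt \;\le\; e^{-x^2} \int_x^{\infty} e^{-(t-x)^2}\,dt \;=\; e^{-x^2} \int_0^{\infty} e^{-u^2}\,du \;=\; \frac{\sqrt{\pi}}{2}\, e^{-x^2},
\]
after the substitution $u = t - x$ and the standard Gaussian integral evaluation. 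Multiplying through by $\frac{2}{\sqrt{\pi}}$ gives exactly $\erfc(x) \le e^{-x^2}$.

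There is no real obstacle here: the only nontrivial ingredient is the pointwise inequality $t^2 \ge x^2 + (t-x)^2$ for $t \ge x \ge 0$, which is a two-line algebraic check. An alternative route would be to consider $f(x) = e^{-x^2} - \erfc(x)$, observe $f(0) = 0$ and $f(x) \to 0$ as $x \to \infty$, and verify via the sign of $f'(x) = e^{-x^2}\bigl(\tfrac{2}{\sqrt{\pi}} - 2x\bigr)$ that $f$ has a single interior maximum on $[0,\infty)$, hence $f \ge 0$ throughout; but the direct integral comparison above is cleaner and self-contained.
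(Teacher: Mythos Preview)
Your proof is correct. The paper itself does not provide a proof of this fact; it simply states it as a known inequality in the appendix of auxiliary results, so there is no argument to compare against. Your direct integral comparison via the pointwise bound $t^2 \ge x^2 + (t-x)^2$ for $t \ge x \ge 0$ is clean and fully rigorous, and your alternative derivative argument is also valid.
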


\begin{fact}[Lemma 1 of~\cite{LaurentM00}]
\label{fact:hd-gaussian-tail}
If $X \sim \chi^2\paren{k}$, and $\beta \in \brk{0, 1}$, then:
\[
    \pr{}{X - k \geq 2 \sqrt{k \ln\paren{\frac{1}{\beta}}} + 2 \ln\paren{\frac{1}{\beta}}} \le \beta,
\]
and:
\[
    \pr{}{k - X \geq 2 \sqrt{k\ln\paren{\frac{1}{\beta}}}} \le \beta.
\]
Equivalently, the above can be written as:
\[
    \pr{}{X \geq t} \le e^{- \frac{\paren{\sqrt{2 t - k} - \sqrt{k}}^2}{4}}, \forall t \geq k,
\]
and:
\[
    \pr{}{X \le t} \le e^{- \frac{\paren{k - t}^2}{4}}, \forall t \le k.
\]
Thus, if $Y \sim \cN\paren{0, \id}$, then $\pr{}{\llnorm{Y}^2 \geq d + 2 \sqrt{d \ln\paren{\frac{1}{\beta}}}+ 2 \ln\paren{\frac{1}{\beta}}} \le \beta$.
\end{fact}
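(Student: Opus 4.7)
The plan is to prove both tail bounds via the standard Chernoff method applied to the moment generating function of $\chi^2\paren{k}$, and then derive the equivalent exponential forms and the Gaussian corollary as routine consequences. The starting point is that $X \sim \chi^2\paren{k}$ decomposes as $X = \sum_{i = 1}^k Z_i^2$ with $Z_i \sim \cN\paren{0, 1}$ i.i.d., and a standard Gaussian integral yields $\ex{}{e^{\lambda X}} = \paren{1 - 2 \lambda}^{- k / 2}$ for $\lambda < 1/2$, together with $\ex{}{e^{- \lambda X}} = \paren{1 + 2 \lambda}^{- k / 2}$ for $\lambda > 0$.

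For the upper tail, I would apply Markov's inequality to $e^{\lambda X}$ for $\lambda \in \paren{0, 1/2}$ and minimize over $\lambda$. The critical point $\lambda^{*} = u / \paren{2 \paren{k + u}}$ yields the Cramér transform bound $\pr{}{X \geq k + u} \le \exp\paren{- u / 2 + \paren{k / 2} \log\paren{1 + u / k}}$. To match the claimed form $\pr{}{X - k \geq 2 \sqrt{k t} + 2 t} \le e^{- t}$, I would substitute $u = 2 \sqrt{k t} + 2 t$ and reduce the task to showing $k \log\paren{1 + u / k} \le 2 \sqrt{k t}$. Writing $r = \sqrt{t / k}$, this becomes $\log\paren{1 + 2 r + 2 r^2} \le 2 r$, which follows immediately from the partial Taylor expansion $e^{2 r} \geq 1 + 2 r + 2 r^2$.

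For the lower tail, the symmetric Chernoff argument applied to $\ex{}{e^{- \lambda X}}$ gives, after optimization, $\pr{}{k - X \geq u} \le \exp\paren{u / 2 + \paren{k / 2} \log\paren{1 - u / k}}$ for $0 \le u < k$. Using the elementary bound $\log\paren{1 - x} \le - x - x^2 / 2$ on $\brk{0, 1}$, the right-hand side is at most $\exp\paren{- u^2 / \paren{4 k}}$, and substituting $u = 2 \sqrt{k t}$ yields the claim. The equivalent exponential forms then follow by change of variables: for the upper tail, writing the threshold $t$ as $k + 2 \sqrt{k s} + 2 s$ and solving the resulting quadratic in $\sqrt{s}$ gives $\sqrt{s} = \paren{\sqrt{2 t - k} - \sqrt{k}} / 2$; the lower-tail form follows analogously. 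The final corollary on $\llnorm{Y}^2$ is immediate because $\llnorm{Y}^2 \sim \chi^2\paren{d}$ when $Y \sim \cN\paren{0, \id}$, instantiated with $\beta = e^{- t}$.

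The main obstacle is purely algebraic: carrying through the Chernoff optimization cleanly, and then verifying the elementary inequality $\log\paren{1 + 2 r + 2 r^2} \le 2 r$ that converts the Cramér transform into the Laurent--Massart form. Nothing conceptually new is required beyond the MGF computation and standard calculus facts about $\log$ and $\exp$; the whole argument is a textbook Chernoff application, which is presumably why the paper cites Laurent--Massart rather than supplying a proof.
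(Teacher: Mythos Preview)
Your proposal is correct and complete. The paper does not actually supply a proof of this fact: it is listed in Appendix~B as a cited result from Laurent--Massart, with no argument given. Your Chernoff-method derivation is exactly the standard proof of that lemma, and every step you outline (the MGF computation, the optimal $\lambda^{*}$, the reduction to $\log\paren{1 + 2r + 2r^2} \le 2r$, and the lower-tail bound via $\log\paren{1 - x} \le - x - x^2/2$) is correct. Your change-of-variables for the ``equivalent'' exponential forms is also right; in fact the paper's stated lower-tail form $e^{-\paren{k - t}^2 / 4}$ appears to drop a factor of $k$ in the denominator relative to what the first inequality actually gives, namely $e^{-\paren{k - t}^2 / \paren{4 k}}$, which is what your argument (correctly) produces.
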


\begin{fact}[Gaussian Hanson-Wright Inequality - Theorem $1$ and Proposition $1$ from~\cite{Moshksar21}]
\label{fact:hanson-wright}
Let $X \in \bR^d$ be a random vector with $X \sim \cN\paren{0, \id}$.
For any non-zero symmetric matrix $A \in \bR^{d \times d}$:
\[
    \pr{}{\abs{X^{\top} A X - \ex{}{X^{\top} A X}} \geq t} \le 2 \exp\paren{- c \min\brc{\frac{t^2}{\fnorm{A}^2}, \frac{t}{\llnorm{A}}}}, \forall t \geq 0,
\]
where $c$ is an absolute constant with $c \approx 0.036425$.
\end{fact}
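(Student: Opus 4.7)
The plan is to diagonalize $A$ to reduce the quadratic form to a weighted sum of independent centered chi-squared random variables, then apply the Chernoff method using an explicit bound on the moment generating function. I will only handle the upper tail $\pr{}{X^{\top} A X - \ex{}{X^{\top} A X} \geq t}$; the lower tail follows by applying the upper-tail bound to $-A$ (which has the same Frobenius and spectral norms as $A$), and a union bound produces the factor of $2$.

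First I would write $A = U^{\top} D U$ with $U$ orthogonal and $D = \diag(\lambda_1, \dots, \lambda_d)$, using symmetry of $A$. Rotational invariance of the standard Gaussian gives $Y \coloneqq U X \sim \cN(0, \id)$, so
\[
    X^{\top} A X - \ex{}{X^{\top} A X} = \sum_{i=1}^d \lambda_i \paren{Y_i^2 - 1},
\]
with $Y_i$ i.i.d.\ standard normals, and moreover $\fnorm{A}^2 = \sum_i \lambda_i^2$ and $\llnorm{A} = \max_i \abs{\lambda_i}$. The classical MGF computation $\ex{}{\exp\paren{s\paren{Y_i^2 - 1}}} = e^{-s}/\sqrt{1 - 2s}$ for $\abs{s} < 1/2$, together with independence, yields
\[
    \log \ex{}{\exp\paren{s \sum_i \lambda_i \paren{Y_i^2 - 1}}} = -\frac{1}{2} \sum_i \brk{\log\paren{1 - 2 s \lambda_i} + 2 s \lambda_i},
\]
valid whenever $2 s \abs{\lambda_i} < 1$ for all $i$.

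Next I would invoke the elementary inequality $-\log(1-u) - u \leq u^2/(1 - \abs{u})$ for $\abs{u} < 1$, applied termwise with $u = 2 s \lambda_i$, to get
\[
    \log \ex{}{\exp\paren{s \sum_i \lambda_i \paren{Y_i^2 - 1}}} \leq \frac{s^2 \fnorm{A}^2}{1 - 2 \abs{s} \llnorm{A}}.
\]
Restricting to $\abs{s} \leq 1/(4 \llnorm{A})$ makes the right-hand side at most $2 s^2 \fnorm{A}^2$. Markov's inequality then gives $\pr{}{\sum_i \lambda_i \paren{Y_i^2 - 1} \geq t} \leq \exp\paren{-s t + 2 s^2 \fnorm{A}^2}$, and I would optimize by taking $s^\ast = \min\brc{t/(4 \fnorm{A}^2), 1/(4 \llnorm{A})}$. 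In the sub-Gaussian regime $t \leq \fnorm{A}^2/\llnorm{A}$ this gives $\exp\paren{-t^2/(8 \fnorm{A}^2)}$, and in the sub-exponential regime it gives $\exp\paren{-t/(8 \llnorm{A})}$, which combined yield a bound of the stated form $\exp\paren{-c \min\brc{t^2/\fnorm{A}^2, t/\llnorm{A}}}$.

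The main obstacle is purely bookkeeping: my sketch gives $c = 1/8$, whereas the stated value $c \approx 0.036425$ presumably comes from a different choice of truncation parameter and a slightly different elementary bound on $-\frac{1}{2}\log(1-u) - u/2$. Tightening the constant to match~\cite{Moshksar21} is a routine one-variable optimization that does not affect the structure of the argument, so I would leave it as a numerical postprocessing step rather than redoing the proof.
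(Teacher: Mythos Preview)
The paper does not prove this fact; it is quoted from~\cite{Moshksar21} and used as a black box in Lemma~\ref{lem:tail_prob_cov_ub1}. So there is no in-paper argument to compare against, and your diagonalize-plus-Chernoff route is exactly the standard proof of the Gaussian Hanson--Wright inequality.

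Two small remarks. First, your displayed MGF bound drops a factor of $2$: applying $-\log(1-u)-u \le u^2/(1-|u|)$ with $u = 2s\lambda_i$ and summing gives $\frac{2 s^2 \fnorm{A}^2}{1 - 2|s|\llnorm{A}}$, not $\frac{s^2 \fnorm{A}^2}{1 - 2|s|\llnorm{A}}$; carrying this through yields $c = 1/16$ rather than $1/8$. Second, and more importantly, you have the direction of the ``obstacle'' backwards: a larger $c$ gives a stronger inequality, so once your argument produces any $c \ge 0.036425$ (and $1/16$ qualifies), the stated bound with $c \approx 0.036425$ follows immediately. There is nothing to tighten---the specific numerical value in the statement is simply the constant that~\cite{Moshksar21} happens to extract, and your argument already implies it.
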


\section{Basic Applications: Recovering Existing Lower Bounds}
\label{sec:existing}

We show how to use Theorem~\ref{thm:lower_bound} to recover existing lower bounds from \cite{KamathLSU19}:
$\paren{1}$ mean estimation of binary product distributions,
and $\paren{2}$ mean estimation of high-dimensional Gaussians.
Note that both classes of distributions are exponential families.
The error metric for these lower bounds would be the mean-squared-error (MSE), as opposed to constant-probability-error, as in \cite{KamathLSU19}, but the bounds could be converted to constant-probability bounds at the cost of a $\log\paren{d}$ factor in the sample complexity (as was sketched in Section~\ref{subsec:results}).

\subsection{Mean Estimation of Binary Product Distributions}
\label{subsec:bin_prod}

We start by stating the theorem for mean estimation of binary product distributions over $\brc{0,1}^d$.

\begin{theorem}[Product Distribution Mean Estimation]
\label{thm:product-dist-mean}
There exists a distribution $\cD$ over vectors $p \in \brk{\frac{1}{3}, \frac{2}{3}}^d$ such that, given $p \sim \cD$ and $n$ independent samples $X \coloneqq \paren{X_1, \dots, X_n}$ from a binary product distribution $P$ over $\brc{0, 1}^d$ with mean $p$, for any $\alpha = \cO\paren{\sqrt{d}}$ and any $\paren{\eps, \delta}$-DP mechanism $M \colon \brc{0, 1}^{n \times d} \to \brk{\frac{1}{3}, \frac{2}{3}}^d$ with $\eps, \delta \in \brk{0, 1}$, and $\delta = \cO\paren{\frac{1}{n}}$ that satisfies $\ex{X, M}{\llnorm{M\paren{X} - p}^2} \le \alpha^2$, it holds that $n = \Omega\paren{\frac{d}{\alpha \eps}}$.
\end{theorem}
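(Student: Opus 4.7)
The plan is to apply Theorem~\ref{thm:lower_bound} to the product-of-Bernoullis exponential family, after reducing mean estimation to natural parameter estimation via post-processing. This application is considerably simpler than the Gaussian covariance case, since the sufficient statistic $T\paren{x} = x$ is deterministically bounded and $\Sigma_T$ is trivially controlled.

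First, I would record the exponential family structure. By Fact~\ref{fact:exp_fam_bernoulli} and independence of coordinates, a product of $d$ Bernoullis with mean $p \in \paren{0, 1}^d$ has sufficient statistics $T\paren{x} = x$, natural parameters $\eta_j = \ln\paren{p_j / \paren{1 - p_j}}$, mean $\mu_T = p$, and diagonal covariance $\Sigma_T = \diag\paren{p_j \paren{1 - p_j}}$, so $\Sigma_T \preceq \frac{1}{4} \id$. I would then choose the prior $\eta_j \sim \cU\brk{- \ln 2, \ln 2}$ independently, so that $\llnorm{R}^2 = 4 d \ln^2 2$, $\infnorm{R} = 2 \ln 2$, and $m = 0$, and push this prior forward through $\eta_j \mapsto 1 / \paren{1 + e^{- \eta_j}}$ to obtain a distribution $\cD$ supported on $\brk{1/3, 2/3}^d$.

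Second, I would reduce $p$-estimation to $\eta$-estimation. Given an $\paren{\eps, \delta}$-DP estimator $M$ with $\ex{X, M}{\llnorm{M\paren{X} - p}^2} \le \alpha^2$ and outputs in $\brk{1/3, 2/3}^d$, post-composing with the coordinate-wise logit---which is Lipschitz with constant $\le 9/2$ on $\brk{1/3, 2/3}$, since $\ell'\paren{p} = 1 / \paren{p \paren{1 - p}} \le 9/2$ on this interval---yields, by Lemma~\ref{lem:post_processing}, an $\paren{\eps, \delta}$-DP estimator $T_M$ with values in $\brk{- \ln 2, \ln 2}^d$ and $\ex{X, T_M}{\llnorm{T_M\paren{X} - \eta}^2} \le \frac{81}{4} \alpha^2$. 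Since $m = 0$, this is exactly the accuracy hypothesis of Theorem~\ref{thm:lower_bound}, and it is consistent with the constraint $\alpha^2 \le \llnorm{R}^2 / 24$ whenever $\alpha = \cO\paren{\sqrt{d}}$.

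Third, I would plug into Theorem~\ref{thm:lower_bound}. Using $\abs{R_j^2 / 4 - \eta_j^2} \le \ln^2 2$ on the prior's support together with $\Sigma_T \preceq \frac{1}{4} \id$, I obtain $s^{\top} \Sigma_T s \le \frac{\ln^4 2}{4} \llnorm{T_M\paren{X_{\sim i}} - \eta}^2$, so $\ex{\eta}{\sqrt{\ex{X_{\sim i}, T_M}{s^{\top} \Sigma_T s}}} \le \cO\paren{\alpha}$ by Jensen's inequality. The deterministic bound $\llnorm{T\paren{X_i} - \mu_T} \le \sqrt{d}$ makes the tail integral identically zero once the threshold exceeds $\Theta\paren{d}$, so choosing $T = \Theta\paren{d}$ together with $\delta \le \cO\paren{1/n}$ suffices to absorb the $\delta T$ contribution into at most half of $\llnorm{R}^2 / 24 = \Theta\paren{d}$. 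The remaining inequality $2 \eps n \cdot \cO\paren{\alpha} \geq \Omega\paren{d}$ rearranges to $n \geq \Omega\paren{d / \paren{\alpha \eps}}$, as claimed. The main obstacle, such as it is, lies in keeping the two parameterizations straight: Theorem~\ref{thm:lower_bound} insists that $\eta$ (not $p$) be uniform over a product of intervals, so the prior on $p$ is defined only implicitly through the logit, and the reduction must carefully transfer error bounds across this non-linear change of coordinates. Everything else is substantially lighter than in the Gaussian covariance application, precisely because $T\paren{X_i}$ is bounded.
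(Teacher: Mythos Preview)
Your proposal is correct and matches the paper's proof essentially step for step: the same uniform prior on $\eta_j$ over $\brk{\pm \ln 2}$, the same post-processing reduction via the coordinatewise logit with Lipschitz constant $9/2$ on $\brk{1/3,2/3}$, the same bound $\Sigma_T \preceq \tfrac{1}{4}\id$, and the same observation that boundedness of $T\paren{X_i}$ kills the tail integral for $T = \Theta\paren{d}$. The only cosmetic difference is that you parameterize by the positive logit $\ln\paren{p_j/(1-p_j)}$ while the paper (per Fact~\ref{fact:exp_fam_bernoulli}) uses its negative, which is immaterial since the prior interval is symmetric.
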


Fact~\ref{fact:exp_fam_bernoulli} establishes that Bernoulli distributions are an exponential family.
Using that, we have that the probability mass function of binary product distributions can be written as:
\begin{align*}
    p_{\eta}\paren{x} &= h\paren{x} \exp\paren{\eta^{\top} T\paren{x} - Z\paren{\eta}}, \forall x \in \brc{0, 1}^d, \\
           h\paren{x} &= 1, \\
           T\paren{x} &= x, \\
                 \eta &= \paren{\ln\paren{\frac{1 - p_1}{p_1}}, \dots, \ln\paren{\frac{1 - p_d}{p_d}}}^{\top}, \\
        Z\paren{\eta} &= \prod\limits_{j \in \brk{d}} \ln\paren{1 + e^{\eta_j}}.
\end{align*}
Our process to generate $\eta$ involves drawing independently $\eta_j \sim \cU\brk{\pm \ln\paren{2}}$.
We have that $\eta_j = \ln\paren{\frac{1}{p_j} - 1} \iff p_j = \frac{1}{1 + e^{\eta_j}}$, yielding $\eta_j \in \brk{\pm \ln\paren{2}} \iff p_j \in \brk{\frac{1}{3}, \frac{2}{3}}$.
Thus, we have $I_j = \brk{\pm \ln\paren{2}}, R_j = 2 \ln\paren{2}, \forall j \in \brk{d} \implies \llnorm{R}^2 = 4 \ln^2\paren{2} d, \infnorm{R} = 2 \ln\paren{2}$, and $m = 0$.

We now show how to reduce estimating $\eta$ with an $\ell_2$-guarantee under $\paren{\eps, \delta}$-DP to estimating $p$ with an $\ell_2$-guarantee under the same constraint.

\begin{lemma}
\label{lem:reduction_bin_prod}
Let $p \in \brk{\frac{1}{3}, \frac{2}{3}}^d$ be a randomly generated vector and let $X \sim P^{\bigotimes n}$ be a dataset drawn from a binary product distribution over $\brc{0, 1}^d$ with mean $p$.
If $M \colon \brc{0, 1}^{n \times d} \to \brk{\frac{1}{3}, \frac{2}{3}}^d$ is an $\paren{\eps, \delta}$-DP mechanism satisfying $\ex{X, M}{\llnorm{M\paren{X} - p}^2} \le \alpha^2 \le \frac{d}{9}$, then there exists a $T_M \colon \brc{0, 1}^{n \times d} \to \brk{\pm \ln\paren{2}}^d$ that is also $\paren{\eps, \delta}$-DP and satisfies $\ex{X, T_M}{\llnorm{T_M\paren{X} - \eta}^2} \le \frac{81}{4} \alpha^2$.
\end{lemma}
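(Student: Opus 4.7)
The plan is to build $T_M$ by post-processing $M$ with the coordinate-wise transformation $\phi(q) \coloneqq \ln\paren{\frac{1 - q}{q}}$ that converts a mean parameter $p_j \in \brk{\frac{1}{3}, \frac{2}{3}}$ into the associated natural parameter $\eta_j \in \brk{\pm \ln\paren{2}}$. Concretely, on input $X$, we compute $\widehat{p} \coloneqq M\paren{X} \in \brk{\frac{1}{3}, \frac{2}{3}}^d$ and return $T_M\paren{X} \in \brk{\pm \ln\paren{2}}^d$ with $\brk{T_M\paren{X}}_j \coloneqq \phi\paren{\widehat{p}_j}$. The range constraint on $M$ guarantees that $T_M$'s output is in $\brk{\pm \ln\paren{2}}^d$, so the target set is correct by construction.

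The privacy claim is immediate: since $T_M = \Phi \circ M$ for the deterministic map $\Phi \colon \brk{\frac{1}{3}, \frac{2}{3}}^d \to \brk{\pm \ln\paren{2}}^d$ that applies $\phi$ component-wise, Lemma~\ref{lem:post_processing} gives that $T_M$ inherits $M$'s $\paren{\eps, \delta}$-DP guarantee.

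For accuracy, I will use a Lipschitz bound on $\phi$ over $\brk{\frac{1}{3}, \frac{2}{3}}$. Differentiating yields $\phi'\paren{q} = -\frac{1}{q \paren{1 - q}}$, and on this interval $q\paren{1 - q} \geq \frac{1}{3} \cdot \frac{2}{3} = \frac{2}{9}$, so $\abs{\phi'\paren{q}} \le \frac{9}{2}$. By the mean value theorem applied coordinate-wise,
\[
    \paren{\phi\paren{\widehat{p}_j} - \phi\paren{p_j}}^2 \le \frac{81}{4} \paren{\widehat{p}_j - p_j}^2, \forall j \in \brk{d}.
\]
Summing over $j$ and taking expectations over $X$ and the randomness of $M$ gives
\[
    \ex{X, T_M}{\llnorm{T_M\paren{X} - \eta}^2} \le \frac{81}{4} \ex{X, M}{\llnorm{M\paren{X} - p}^2} \le \frac{81}{4} \alpha^2,
\]
as desired.

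There is no real obstacle here; the only thing to verify is that the Lipschitz constant of $\phi$ is controlled on the chosen interval, which is exactly what the choice of $p_j \in \brk{\frac{1}{3}, \frac{2}{3}}$ (equivalently $\eta_j \in \brk{\pm \ln\paren{2}}$) was designed to enable. The side constraint $\alpha^2 \le \frac{d}{9}$ is simply a trivial-range condition reflecting that the $\ell_2$-diameter of $\brk{\frac{1}{3}, \frac{2}{3}}^d$ is $\frac{\sqrt{d}}{3}$, and it is preserved under the reduction since the Lipschitz scaling is dimension-free.
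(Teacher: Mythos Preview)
The proposal is correct and follows essentially the same approach as the paper: define $T_M$ by applying the logit map $\phi(q) = \ln\paren{\frac{1-q}{q}}$ coordinate-wise to $M(X)$, invoke post-processing for privacy, and use the Mean Value Theorem with the bound $\abs{\phi'(q)} \le \frac{9}{2}$ on $\brk{\frac{1}{3}, \frac{2}{3}}$ to control the error.
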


\begin{proof}
We define $T_M$ to be $T_{M, j}\paren{X} = \ln\paren{\frac{1 - M_j\paren{X}}{M_j\paren{X}}}, \forall j \in \brk{d}$.
$T_M$ satisfies $\paren{\eps, \delta}$-DP by Lemma~\ref{lem:post_processing}, and the only randomness in $T_M$ is that of $M$.
Consider now the function $g \colon \paren{0, 1} \to \bR$ with $g\paren{x} = \ln\paren{\frac{1}{x} - 1}$ and $g'\paren{x} = - \frac{1}{x \paren{1 - x}}$.
We have $\eta_j = g\paren{p_j}$ and $T_{M, j}\paren{X} = g\paren{T_j\paren{X}}$.
By the Mean Value Theorem, we have for some $\xi_j$ between $\eta_j$ and $M_j\paren{X}$ (implying that $\xi_j \in \brk{\frac{1}{3}, \frac{2}{3}}$):
\[
    \abs{g\paren{M_j\paren{X}} - g\paren{p_j}} = \abs{g'\paren{\xi_j}} \abs{M_j\paren{X} - p_j} \le \frac{9}{2} \abs{M_j\paren{X} - p_j}.
\]
Applying this coordinate-wise, we get $\ex{X, T_M}{\llnorm{T_M\paren{X} - \eta}^2} \le \frac{81}{4} \ex{X, M}{\llnorm{M\paren{X} - p}^2} \le \frac{81}{4} \alpha^2$.
\end{proof}

We now present a restatement of the previous lemma from the point of view of lower bounds.

\begin{corollary}
\label{cor:red_bin_prod_lb}
Let $p \in \brk{\frac{1}{3}, \frac{2}{3}}^d$ be a randomly generated vector and let $X \sim P^{\bigotimes n}$ be a dataset drawn from a binary product distribution over $\brc{0, 1}^d$ with mean $p$.
If any $\paren{\eps, \delta}$-DP mechanism $T \colon \brc{0, 1}^{n \times d} \to \brk{\pm \ln\paren{2}}^d$ satisfying $\ex{X, T}{\llnorm{T\paren{X} - \eta}^2} \le \frac{81}{4} \alpha^2 \le \frac{9}{4} d$ requires at least $n \geq n_{\eta}$ samples, the same sample complexity lower bound holds for any $\paren{\eps, \delta}$-DP mechanism $M \colon \brc{0, 1}^{n \times d} \to \brk{\frac{1}{3}, \frac{2}{3}}^d$ that satisfies $\ex{X, M}{\llnorm{M\paren{X} - p}^2} \le \alpha^2$.
\end{corollary}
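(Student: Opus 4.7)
The plan is to prove this corollary by a direct contrapositive argument using Lemma~\ref{lem:reduction_bin_prod}. The corollary is essentially a restatement of that lemma in the language of sample complexity lower bounds, so the hard work has already been done in establishing the reduction; what remains is to transfer the quantifiers correctly.

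More precisely, I would argue as follows. Suppose for contradiction that there exists an $(\eps,\delta)$-DP mechanism $M \colon \brc{0,1}^{n \times d} \to \brk{\frac{1}{3}, \frac{2}{3}}^d$ with $n < n_\eta$ samples satisfying $\ex{X, M}{\llnorm{M(X) - p}^2} \le \alpha^2$, where $\alpha^2 \le \frac{d}{9}$ (this last condition matches the hypothesis of Lemma~\ref{lem:reduction_bin_prod}, and follows from the assumption $\frac{81}{4}\alpha^2 \le \frac{9}{4} d$ in the corollary statement). Apply Lemma~\ref{lem:reduction_bin_prod} to $M$ to obtain an $(\eps,\delta)$-DP mechanism $T_M \colon \brc{0,1}^{n \times d} \to \brk{\pm \ln(2)}^d$ with $\ex{X, T_M}{\llnorm{T_M(X) - \eta}^2} \le \frac{81}{4}\alpha^2$, using the same $n$ samples. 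This directly contradicts the assumed lower bound $n_\eta$ on the sample complexity of any such natural-parameter estimator, so we must have $n \ge n_\eta$, as claimed.

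The only non-obvious aspect of this write-up is verifying the compatibility of the accuracy conditions between the hypothesis of the corollary (phrased in terms of $\eta$ estimation) and the hypothesis of Lemma~\ref{lem:reduction_bin_prod} (phrased in terms of $p$ estimation). The chain is: the corollary's MSE bound of $\frac{81}{4}\alpha^2 \le \frac{9}{4}d$ on $\eta$-estimation allows $\alpha^2 \le \frac{d}{9}$, which is exactly the regime in which Lemma~\ref{lem:reduction_bin_prod} applies to the $p$-estimator $M$ and produces an $\eta$-estimator with MSE at most $\frac{81}{4}\alpha^2$. So the accuracy regimes match up by construction.

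I don't expect a main obstacle here since the corollary is essentially bookkeeping on top of Lemma~\ref{lem:reduction_bin_prod}; the only mild subtlety is making sure the privacy parameters $(\eps,\delta)$ carry through unchanged (which they do, because $T_M$ is obtained from $M$ by coordinate-wise post-processing via $x \mapsto \ln((1-x)/x)$, and post-processing preserves differential privacy by Lemma~\ref{lem:post_processing}). The proof therefore consists of one or two sentences invoking Lemma~\ref{lem:reduction_bin_prod} and noting the contradiction, with no new technical content needed.
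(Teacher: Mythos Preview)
Your proposal is correct and matches the paper's approach: the paper presents this corollary immediately after Lemma~\ref{lem:reduction_bin_prod} as ``a restatement of the previous lemma from the point of view of lower bounds'' and gives no separate proof, so your contrapositive argument invoking that lemma is exactly what is intended. Your check that the accuracy regimes $\frac{81}{4}\alpha^2 \le \frac{9}{4}d \iff \alpha^2 \le \frac{d}{9}$ line up with the hypothesis of Lemma~\ref{lem:reduction_bin_prod} is the only bookkeeping needed.
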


In contrast to other lower bounds proven in this work using Theorem~\ref{thm:lower_bound}, we do not need to study the concentration properties of $\llnorm{T\paren{X_i} - \mu_T} = \llnorm{X_i - p}$.
Indeed, since $T\paren{X} = X \in \brc{0, 1}^d$, it suffices to pick a large enough threshold $T > 0$ so that $\pr{X_i}{\llnorm{X_i - p} > \frac{t}{2 \ln^3\paren{2} \sqrt{d}}} = 0, \forall t \geq T$.

Based on the above, we are now ready to prove the main theorem of this section.

\begin{proof}[Proof of Theorem~\ref{thm:product-dist-mean}]
Let $M \colon \brc{0, 1}^{n \times d} \to \brk{\pm \ln\paren{2}}^d$ be an $\paren{\eps, \delta}$-DP mechanism with:
\[
    \ex{X, M}{\llnorm{M\paren{X} - \eta}^2} \le \frac{81}{4} \alpha^2 \le \frac{\ln^2\paren{2} d}{6} = \frac{\llnorm{R}^2}{24}.
\]
By Theorem~\ref{thm:lower_bound} for product distributions with $\eta_j \in \brk{\pm \ln\paren{2}}, \forall j \in \brk{d}$, we get:
\begin{align}
    n \brc{2 \delta T + 2 \eps \ex{\eta}{\sqrt{\ex{X_{\sim i}, M}{s^{\top} \Sigma_T s}}} + 2 \ex{\eta}{\int\limits_T^{\infty} \pr{X_i}{\llnorm{X_i - p} > \frac{t}{2 \ln^3\paren{2} \sqrt{d}}} \, dt}} \geq \frac{\ln^2\paren{2} d}{6}. \label{eq:prod1}
\end{align}
Observe that, since we have a product distribution and $T\paren{x} = x$, it holds that $\mu_T = p$ and:
\[
    \Sigma_T = \Sigma = \diag\brc{p_1 \paren{1 - p_1}, \dots, p_d \paren{1 - p_d}} \preceq \frac{1}{4} \id,
\]
for $p_i \in \brk{\frac{1}{3}, \frac{2}{3}}, \forall i \in \brk{d}$.

Thus, we have:
\begin{align}
    \ex{X_{\sim i}, M}{s^{\top} \Sigma_T s} = \frac{1}{4} \ex{X_{\sim i}, M}{\llnorm{s}^2} \le \frac{\infnorm{R}^4}{64} \alpha^2 = \frac{\ln^4\paren{2} \alpha^2}{4}. \label{eq:prod2}
\end{align}
Furthermore, we remark that $\llnorm{X_i - p} \le \frac{2}{3} \sqrt{d}$.
Thus, setting $T = \frac{4 \ln^3\paren{2} d}{3}$ yields:
\begin{align}
    \int\limits_T^{\infty} \pr{X_i}{\llnorm{X_i - p} > \frac{t}{2 \ln^3\paren{2} \sqrt{d}}} \, dt = 0. \label{eq:prod3}
\end{align}
Substituting (\ref{eq:prod2}) and (\ref{eq:prod3}) into (\ref{eq:prod1}), we get:
\[
    n \paren{\frac{8 \ln^3\paren{2} d \delta}{3} + \frac{\ln^2\paren{2} \alpha \eps}{2}} \geq \frac{\ln^2\paren{2} d}{6}.
\]
Setting $\delta \le \frac{1}{32 \ln\paren{2} n}$ results in $n \geq \frac{d}{12 \alpha \eps}$, so appealing to Corollary~\ref{cor:red_bin_prod_lb} completes the proof.
\end{proof}

\subsection{Mean Estimation of High-Dimensional Gaussians}

Our techniques also apply to recovering lower bounds for mean estimation of a Gaussian with known covariance.
By Fact~\ref{fact:exp_fam_norm_mean}, we have that this class of distributions is an exponential family with $\eta = \mu$ and $T\paren{x} = x$ (implying that $\mu_T = \mu, \Sigma_T = \Sigma$).
Therefore, we do not need to resort to a reduction-based approach (as we did with binary product distributions).

\begin{theorem}[Gaussian Mean Estimation]\label{thm:gaussian-mean}
Given $\mu \sim \cU\paren{\brk{\pm 1}^d}$ and $X \sim \cN\paren{\mu, \bI}^{\bigotimes n}$, for any $\alpha = \cO\paren{\sqrt{d}}$ and any $\paren{\eps, \delta}$-DP mechanism $M \colon \bR^{n \times d} \to \brk{\pm 1}^d$ with $\eps, \delta \in \brk{0, 1}$, and $\delta \le \cO\paren{\min\brc{\frac{1}{n}, \frac{\sqrt{d}}{n \sqrt{\log\paren{\frac{n}{\sqrt{d}}}}}}}$ that satisfies $\ex{X, M}{\llnorm{M\paren{X} - \mu}^2} \le \alpha^2$, it holds that $n = \Omega\paren{\frac{d}{\alpha \eps}}$.
\end{theorem}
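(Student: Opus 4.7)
The plan is to apply Theorem~\ref{thm:lower_bound} directly without any reduction, since by Fact~\ref{fact:exp_fam_norm_mean} the distribution $\cN\paren{\mu, \id}$ is an exponential family with $\eta = \mu$ and $T\paren{x} = x$, which means $\mu_T = \mu$ and $\Sigma_T = \id$. Taking the prior $\mu \sim \cU\paren{\brk{\pm 1}^d}$ corresponds to independent coordinates with $I_j = \brk{-1, 1}$, $R_j = 2$, and $m = 0$, so $\llnorm{R}^2 = 4 d$ and $\infnorm{R} = 2$. The hypothesis $\alpha = \cO\paren{\sqrt{d}}$ ensures $\alpha^2 \le \frac{d}{6} = \frac{\llnorm{R}^2}{24}$ for an appropriate absolute constant, so the precondition of Theorem~\ref{thm:lower_bound} is met.

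The next step is to bound the two non-trivial terms in the master inequality. Since $\Sigma_T = \id$, I get $s^{\top} \Sigma_T s = \llnorm{s}^2$. From $s_j = \paren{1 - \mu_j^2} \paren{M_j\paren{X_{\sim i}} - \mu_j}$ and $\abs{1 - \mu_j^2} \le 1$, it follows that $\llnorm{s}^2 \le \llnorm{M\paren{X_{\sim i}} - \mu}^2$, so $\ex{X_{\sim i}, M}{s^{\top} \Sigma_T s} \le \alpha^2$. For the tail term, $T\paren{X_i} - \mu_T = X_i - \mu \sim \cN\paren{0, \id}$, so I must control $\int_T^{\infty} \pr{}{\llnorm{Y} > \frac{t}{2 \sqrt{d}}} \, dt$ for $Y \sim \cN\paren{0, \id}$. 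Using the chi-square concentration of Fact~\ref{fact:hd-gaussian-tail} (equivalently, that $\llnorm{Y} - \sqrt{d}$ is sub-Gaussian with constant parameter), a direct change of variables gives an upper bound of the form $\frac{c_1 d}{T} \exp\paren{- \frac{T^2}{c_2 d}}$ for absolute constants $c_1, c_2$, valid once $T = \Omega\paren{d}$.

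Following the recipe after Theorem~\ref{thm:lower_bound}, I choose $T = \Theta\paren{\sqrt{d \log\paren{1/\delta}}}$ (adjusted so that $T \geq \Omega\paren{d}$ holds, which forces an additional $\delta \le \cO\paren{1/n}$ constraint) so that the $\delta T$ term dominates the tail integral. I then impose $3 n \delta T \le \frac{\llnorm{R}^2}{48} = \frac{d}{12}$, which rearranges to $\delta \sqrt{\log\paren{1/\delta}} \le \cO\paren{\sqrt{d}/n}$ and is satisfied for $\delta \le \cO\paren{\sqrt{d}/\paren{n \sqrt{\log\paren{n/\sqrt{d}}}}}$, matching the theorem statement. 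With these choices, the first and third terms on the LHS of Theorem~\ref{thm:lower_bound} sum to at most $\frac{\llnorm{R}^2}{48}$, so the remaining term must satisfy $2 n \eps \cdot \alpha \geq \frac{\llnorm{R}^2}{48} = \frac{d}{12}$, yielding $n \geq \Omega\paren{\frac{d}{\alpha \eps}}$ as required.

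The main obstacle is managing the interdependent constraints on $T$ and $\delta$ so that the bound holds for $\delta$ as large as $\widetilde{\cO}\paren{\sqrt{d}/n}$, rather than the weaker $\widetilde{\cO}\paren{1/n^2}$ one would get from a Chebyshev-type bound on the tail integral (cf.\ Remark~\ref{rem:exp_chernoff}). Conveniently, unlike the covariance case of Section~\ref{subsec:mah_norm_lb}, here the sufficient statistic $T\paren{x} = x$ is linear, so $T\paren{X_i} - \mu_T$ is itself a standard Gaussian vector and no $\eps$-net or Hanson--Wright detour is needed: the direct Gaussian norm concentration already produces the exponential decay in $T^2/d$ that drives the final $\delta$ range.
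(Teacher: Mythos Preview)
Your proposal is correct and follows essentially the same route as the paper: apply Theorem~\ref{thm:lower_bound} directly (no reduction is needed since $\eta = \mu$ and $T(x)=x$ here), bound $s^{\top}\Sigma_T s \le \alpha^2$ via $\Sigma_T = \id$, control the tail integral through $\chi^2$/Gaussian-norm concentration (the paper packages this as Lemma~\ref{lem:tail_prob_mean_ub1}), and take $T \asymp d + \sqrt{d\log(1/\delta)}$ so that $\delta T$ dominates the tail and $3n\delta T \le d/12$ yields the stated $\delta$ range. The only cosmetic differences are that the paper writes the threshold explicitly as $T = 2\sqrt{d}\sqrt{\ln(1/\delta) + (\sqrt{\ln(1/\delta)} + \sqrt{d})^2}$ and carries out the case analysis on $\delta$ in full, and that the correct exponent in your tail bound is $-(T-\Theta(d))^2/(c_2 d)$ rather than $-T^2/(c_2 d)$, a distinction that is immaterial once $T \ge C d$ for a large enough constant $C$.
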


Before proving Theorem~\ref{thm:gaussian-mean}, we start with a lemma that is concerned with upper-bounding the term involving the tail probability of $\llnorm{T\paren{X_i} - \mu_T}$.
The lemma does not rely on the approach of Remark~\ref{rem:exp_chernoff}, instead using the concentration properties of the $\chi^2$-distribution (Fact~\ref{fact:hd-gaussian-tail}).

\begin{lemma}
\label{lem:tail_prob_mean_ub1}
Let $\mu \in \brk{\pm 1}^d$, and $X_i \sim \cN\paren{\mu, \id}$.
Assuming that $T \geq 2 d$, we have:
\[
    \int\limits_T^{\infty} \pr{X_i}{\llnorm{T\paren{X_i} - \mu_T} > \frac{t}{2 \sqrt{d}}} \, dt \le \sqrt{2 \pi d} e^{- \frac{\paren{\sqrt{T^2 - 2 d^2} - \sqrt{2} d}^2}{8 d}}.
\]
\end{lemma}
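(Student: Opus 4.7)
The plan is to reduce the integral to a Gaussian tail integral that we can dominate using Fact~\ref{fact:erfc}. Since $T(x) = x$ and $\mu_T = \mu$ for the Gaussian mean exponential family (by Fact~\ref{fact:exp_fam_norm_mean}), the random variable $\llnorm{T(X_i) - \mu_T}^2 = \llnorm{X_i - \mu}^2$ follows a $\chi^2(d)$ distribution. For $t \geq T \geq 2d$ we have $t^2/(4d) \geq d$, so Fact~\ref{fact:hd-gaussian-tail} is applicable with threshold $s = t^2/(4d)$, yielding
\[
    \pr{X_i}{\llnorm{X_i - \mu} > \frac{t}{2\sqrt{d}}} \le \exp\paren{-\frac{\paren{\sqrt{t^2/(2d) - d} - \sqrt{d}}^2}{4}} = \exp\paren{-\frac{\paren{\sqrt{t^2 - 2d^2} - \sqrt{2} d}^2}{8 d}},
\]
where the second equality is obtained by factoring $\sqrt{2d}$ out of the inner square root.

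Next, I would perform the substitution $v \coloneqq \sqrt{t^2 - 2 d^2} - \sqrt{2} d$. This is monotone increasing for $t \geq \sqrt{2} d$ (in particular for $t \geq 2d$), maps $t = T$ to $v_0 \coloneqq \sqrt{T^2 - 2d^2} - \sqrt{2} d \geq 0$, and satisfies
\[
    \frac{dt}{dv} = \frac{v + \sqrt{2} d}{\sqrt{(v + \sqrt{2} d)^2 + 2 d^2}} \le 1.
\]
Therefore,
\[
    \int\limits_T^{\infty} \exp\paren{-\frac{\paren{\sqrt{t^2 - 2d^2} - \sqrt{2} d}^2}{8 d}} dt = \int\limits_{v_0}^{\infty} e^{-v^2/(8d)} \frac{dt}{dv}\, dv \le \int\limits_{v_0}^{\infty} e^{-v^2/(8d)}\, dv.
\]

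Finally, rescaling by $u = v/\sqrt{8d}$ turns the remaining integral into a complementary error function: $\int_{v_0}^{\infty} e^{-v^2/(8d)}\, dv = \sqrt{2 \pi d} \cdot \erfc\paren{v_0/\sqrt{8 d}}$, and Fact~\ref{fact:erfc} bounds this by $\sqrt{2 \pi d} \cdot e^{-v_0^2/(8d)}$, giving exactly the claimed bound $\sqrt{2 \pi d}\, e^{-(\sqrt{T^2 - 2d^2} - \sqrt{2} d)^2/(8d)}$.

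There is no real obstacle here: the only subtlety is the choice of substitution $v = \sqrt{t^2 - 2d^2} - \sqrt{2} d$, which is what makes the exponent quadratic in the new variable so that the Gaussian erfc bound applies cleanly. The condition $T \geq 2d$ ensures both that Fact~\ref{fact:hd-gaussian-tail} applies on the whole integration range and that $v_0 \geq 0$, so no sign issues arise in the final bound.
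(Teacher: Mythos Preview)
Your proof is correct and follows essentially the same route as the paper: apply the $\chi^2$ tail bound from Fact~\ref{fact:hd-gaussian-tail}, perform a change of variables that makes the exponent quadratic, and finish with Fact~\ref{fact:erfc}. Your substitution $v$ is just the paper's variable $z$ rescaled by $\sqrt{8d}$, and your Jacobian estimate $dt/dv \le 1$ is in fact a slightly cleaner way to obtain the same inequality the paper derives via $a+b \le \sqrt{2}\sqrt{a^2+b^2}$.
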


\begin{proof}
Since $X_i \sim \cN\paren{\mu, \id}$, we have $T\paren{X_i} - \mu_T = X_i - \mu$.
We observe that $X_i - \mu \sim \cN\paren{0, \id} \implies \llnorm{X_i - \mu}^2 \sim \chi^2\paren{d}$.
Since $T \geq 2 d \implies \frac{t^2}{4 d} \geq \frac{T^2}{4 d} \geq d$, we can apply Fact~\ref{fact:hd-gaussian-tail} with $k = d$, yielding:
\begin{align*}
    \int\limits_T^{\infty} \pr{X_i}{\llnorm{X_i - \mu}^2 > \frac{t^2}{4 d}} \, dt &\le \bigintsss\limits_T^{\infty} \exp\brk{- \frac{1}{4} \paren{\sqrt{\frac{t^2}{2 d} - d} - \sqrt{d}}^2} \, dt \\
       \overset{z^2 = \frac{1}{4} \paren{\sqrt{\frac{t^2}{2 d} - d} - \sqrt{d}}^2}&{=} 2 \sqrt{d} \bigintss\limits_{\frac{\sqrt{\frac{T^2}{2 d} - d} - \sqrt{d}}{2}}^{\infty} \frac{2 z + \sqrt{d}}{\sqrt{z^2 + \paren{z + \sqrt{d}}^2}} e^{- z^2} \, dz \\
                                                               \overset{\paren{a}}&{\le} 2 \sqrt{2 d} \int\limits_{\frac{\sqrt{\frac{T^2}{2 d} - d} - \sqrt{d}}{2}}^{\infty} e^{- z^2} \, dz \\
                                                                                  &= \sqrt{2 \pi d} \erfc\paren{\frac{\sqrt{\frac{T^2}{2 d} - d} - \sqrt{d}}{2}} \\
                                                                                  \overset{\paren{b}}&{\le} \sqrt{2 \pi d} e^{- \frac{\paren{\sqrt{\frac{T^2}{2 d} - d} - \sqrt{d}}^2}{4}} \\
                                                                                  &= \sqrt{2 \pi d} e^{- \frac{\paren{\sqrt{T^2 - 2 d^2} - \sqrt{2} d}^2}{8 d}},
\end{align*}
where $\paren{a}$ is by the inequality $a + b \le \sqrt{2} \sqrt{a^2 + b^2}$ for $a = z, b = z + \sqrt{d}$, and $\paren{b}$ is by Fact~\ref{fact:erfc}.
\end{proof}

We are now ready to prove the main theorem of this section.

\begin{proof}[Proof of Theorem~\ref{thm:gaussian-mean}]
Let $M \colon \bR^{n \times d} \to \brk{\pm 1}^d$ be an $\paren{\eps, \delta}$-DP mechanism with:
\[
    \ex{X, M}{\llnorm{M\paren{X} - \mu}^2} \le \alpha^2 \le \frac{d}{6} = \frac{\llnorm{R}^2}{24}.
\]
By Theorem~\ref{thm:lower_bound} for Gaussians $\cN\paren{\mu, \id}$ with $\mu_j \in \brk{\pm 1}, \forall j \in \brk{d}$, we get:
\begin{align}
    n \brc{2 \delta T + 2 \eps \ex{\mu}{\sqrt{\ex{X_{\sim i}, M}{\llnorm{s}^2}}} + 2 \ex{\mu}{\int\limits_T^{\infty} \pr{X_i}{\llnorm{X_i - \mu} > \frac{t}{2 \sqrt{d}}} \, dt}} \geq \frac{d}{6}. \label{eq:gauss1}
\end{align}
We have:
\begin{align}
    \ex{X_{\sim i}, M}{\llnorm{s}^2} \le \frac{\infnorm{R}^4}{16} \alpha^2 = \alpha^2. \label{eq:gauss2}
\end{align}
We assume that $T \geq 2 d$, as is required by Lemma~\ref{lem:tail_prob_mean_ub1}.
Later, picking a specific value for $T$, we will verify that this condition is satisfied.
Substituting based on that lemma and (\ref{eq:gauss2}) into (\ref{eq:gauss1}), we get:
\begin{align}
    n \paren{2 \delta T + 2 \alpha \eps + 2 \sqrt{2 \pi d} e^{- \frac{\paren{\sqrt{T^2 - 2 d^2} - \sqrt{2} d}^2}{8 d}}} \geq \frac{d}{6}. \label{eq:mean_est_ineq1}
\end{align}
It remains to set the values of $T$ and $\delta$ appropriately so that:
\[
    \delta T \geq 2 \sqrt{2 \pi d} e^{- \frac{\paren{\sqrt{T^2 - 2 d^2} - \sqrt{2} d}^2}{8 d}} \text{ and } 3 n \delta T \le \frac{d}{12}.
\]
The first of these two conditions yields:
\[
    T e^{\frac{\paren{\sqrt{T^2 - 2 d^2} - \sqrt{2} d}^2}{8 d}} \geq \frac{2 \sqrt{2 \pi d}}{\delta}.
\]
We set $T = 2 \sqrt{d} \sqrt{\ln\paren{\frac{1}{\delta}} + \paren{\sqrt{\ln\paren{\frac{1}{\delta}}} + \sqrt{d}}^2}$.
Since $\delta \le 1$, we have $\ln\paren{\frac{1}{\delta}} \geq 0 \implies T \geq 2 d$, thus satisfying the condition of Lemma~\ref{lem:tail_prob_mean_ub1}.
Substituting to the previous yields:
\begin{align*}
    2 \sqrt{d} \sqrt{\ln\paren{\frac{1}{\delta}} + \paren{\sqrt{\ln\paren{\frac{1}{\delta}}} + \sqrt{d}}^2} \cdot \frac{1}{\delta} \geq \frac{2 \sqrt{2 \pi d}}{\delta} &\iff \sqrt{\ln\paren{\frac{1}{\delta}} + \paren{\sqrt{\ln\paren{\frac{1}{\delta}}} + \sqrt{d}}^2} \geq \sqrt{2 \pi} \\
    &\iff \ln\paren{\frac{1}{\delta}} + \sqrt{d} \sqrt{\ln\paren{\frac{1}{\delta}}} + \frac{1}{2} \paren{d - 2 \pi} \geq 0.
\end{align*}
Setting $x \coloneqq \sqrt{\ln\paren{\frac{1}{\delta}}} \geq 0$, the above is a quadratic inequality with respect to $x$.
Depending on whether the trinomial in the LHS has any positive roots, this might imply a constraint on $\delta$.
The discriminant is $\Delta = 4 \pi - d$, so taking cases based on $d$, the strictest constraint we get is $\delta \le e^{- \pi + \frac{\sqrt{4 \pi - 1}}{2}} \approx 0.23$, which is for $d = 1$.
Respecting this constraint, we now must identify a range of values for $\delta$ so that we have $3 n \delta T \le \frac{d}{12}$.
We have:
\[
    3 n \delta T \le \frac{d}{12} \iff 6 n \delta \sqrt{d} \sqrt{\ln\paren{\frac{1}{\delta}} + \paren{\sqrt{\ln\paren{\frac{1}{\delta}}} + \sqrt{d}}^2} \le \frac{d}{12}.
\]
It holds that $\ln\paren{\frac{1}{\delta}} + \paren{\sqrt{\ln\paren{\frac{1}{\delta}}} + \sqrt{d}}^2 \le 2 \paren{\sqrt{\ln\paren{\frac{1}{\delta}}} + \sqrt{d}}^2$, so it suffices to satisfy:
\begin{align}
    6 \sqrt{2} n \delta \sqrt{d} \paren{\sqrt{\ln\paren{\frac{1}{\delta}}} + \sqrt{d}} \le \frac{d}{12} \iff \delta \paren{1 + \sqrt{\frac{\ln\paren{\frac{1}{\delta}}}{d}}} \le \frac{1}{72 \sqrt{2} n}. \label{eq:delta_condition}
\end{align}
The previous inequality imposes a condition on $\delta$.
Indeed, to ensure that $\sqrt{\frac{\ln\paren{\frac{1}{\delta}}}{d}} \geq 0$, it is necessary that $\frac{1}{72 \sqrt{2} n \delta} - 1 \geq 0 \iff \delta \le \frac{1}{72 \sqrt{2} n}$, which is stronger than $\delta \le 0.23$.

For (\ref{eq:delta_condition}) to hold, we will show that it suffices to have:
\[
    \delta \le \min\brc{\frac{1}{144 \sqrt{2} n}, \frac{\sqrt{d}}{288 \sqrt{2} n \sqrt{\ln\paren{\frac{144 \sqrt{2} n}{\sqrt{d}}}}}},
\]
which trivially satisfies the previous constraint.

We set $\phi \coloneqq \frac{\sqrt{d}}{144 \sqrt{2} n}$, so the proposed range for $\delta$ becomes $\delta \le \min\brc{\frac{\phi}{\sqrt{d}}, \frac{\phi}{2 \sqrt{\ln\paren{\frac{1}{\phi}}}}}$.
Additionally, (\ref{eq:delta_condition}) can be equivalently written as:
\begin{align}
    \delta \paren{1 + \sqrt{\frac{\ln\paren{\frac{1}{\delta}}}{d}}} \le \frac{2 \phi}{\sqrt{d}}. \label{eq:delta_phi}
\end{align}
Observe that, depending on how $\delta$ compares with $e^{- d}$ determines which of the terms $1$ and $\sqrt{\frac{\ln\paren{\frac{1}{\delta}}}{d}}$ dominates.
Thus, our strategy to verify our claim is by considering cases based on whether the proposed range for $\delta$ includes $e^{- d}$.
This yields:
\begin{enumerate}
    \item $\min\brc{\frac{\phi}{\sqrt{d}}, \frac{\phi}{2 \sqrt{\ln\paren{\frac{1}{\phi}}}}} < e^{- d}$.
    We start by arguing that:
    \begin{align}
        \min\brc{\frac{\phi}{\sqrt{d}}, \frac{\phi}{2 \sqrt{\ln\paren{\frac{1}{\phi}}}}} < e^{- d} \iff \frac{\phi}{2 \sqrt{\ln\paren{\frac{1}{\phi}}}} < e^{- d}. \label{eq:equivalence}
    \end{align}
    The $\impliedby$ direction is trivial, so we focus on the $\implies$ direction.
    
    We assume that $\min\brc{\frac{\phi}{\sqrt{d}}, \frac{\phi}{2 \sqrt{\ln\paren{\frac{1}{\phi}}}}} < e^{- d}$ but $\frac{\phi}{2 \sqrt{\ln\paren{\frac{1}{\phi}}}} \geq e^{- d}$.
    Then, it must be the case that $\frac{\phi}{\sqrt{d}} < e^{- d}$.
    Setting $y \coloneqq \frac{1}{\phi}$, the above system of inequalities can be written in the form:
    \[
        \begin{cases}
            \frac{e^{2 d}}{4} \geq y^2 \ln\paren{y} \\
            y > \frac{e^d}{\sqrt{d}}
        \end{cases}.
    \]
    The function $y^2 \ln\paren{y}$ is increasing for $y \geq e^{- \frac{1}{2}}$.
    We have $\frac{e^d}{\sqrt{d}} > e^{- \frac{1}{2}}, \forall d \geq 1$.
    Thus, for $y > \frac{e^d}{\sqrt{d}}$, we get:
    \[
        \frac{e^{2 d}}{4} \geq y^2 \ln\paren{y} > \frac{e^{2 d}}{d} \ln\paren{\frac{e^d}{\sqrt{d}}} \implies 3 d < 2 \ln\paren{d}.
    \]
    However, this leads to a contradiction, because $3 d < 2 \ln\paren{d} \le 2 d - 2 \implies d < - 2$.
    
    Consequently, we have shown (\ref{eq:equivalence}).
    Since we have assumed that $\min\brc{\frac{\phi}{\sqrt{d}}, \frac{\phi}{2 \sqrt{\ln\paren{\frac{1}{\phi}}}}} < e^{- d}$, it must be the case that $\frac{\phi}{2 \sqrt{\ln\paren{\frac{1}{\phi}}}} < e^{- d}$.
    
    We now turn our attention again to (\ref{eq:delta_phi}).
    We remark that, for $\delta < e^{- d}$, we get $\sqrt{\frac{\ln\paren{\frac{1}{\delta}}}{d}} > 1$.
    As a result, in order to satisfy (\ref{eq:delta_phi}), it suffices to have:
    \begin{align}
        2 \delta \sqrt{\frac{\ln\paren{\frac{1}{\delta}}}{d}} \le \frac{2 \phi}{\sqrt{d}} \iff \delta \sqrt{\ln\paren{\frac{1}{\delta}}} \le \phi. \label{eq:delta_phi1}
    \end{align}
    Observe that the function $\delta \sqrt{\ln\paren{\frac{1}{\delta}}}$ is increasing for $\delta < e^{- d} < e^{- \frac{1}{2}}$.
    Combined with our previous remarks, this implies that, in order to verify (\ref{eq:delta_phi1}) for the values of $\delta$ we picked, it suffices to show that it holds for $\delta = \frac{\phi}{2 \sqrt{\ln\paren{\frac{1}{\phi}}}}$.
    So, we have to verify the inequality:
    \[
        \paren{\frac{\phi}{2 \sqrt{\ln\paren{\frac{1}{\phi}}}}} \sqrt{\ln\paren{\frac{2 \sqrt{\ln\paren{\frac{1}{\phi}}}}{\phi}}} \le \phi \iff \frac{\ln\paren{2 \sqrt{\ln\paren{\frac{1}{\phi}}}}}{\ln\paren{\frac{1}{\phi}}} \le 3.
    \]
    Setting $z \coloneqq \ln\paren{\frac{1}{\phi}}$, this becomes equivalent to $\frac{\ln\paren{2 \sqrt{z}}}{z} \le 3$, which holds for all $z > 0 \iff \ln\paren{\frac{1}{\phi}} > 0 \iff \phi < 1$.
    This last condition is satisfied by all $\phi$ such that $\frac{\phi}{2 \sqrt{\ln\paren{\frac{1}{\phi}}}} < e^{- d}$, so the desired result has been established.
    \item $\min\brc{\frac{\phi}{\sqrt{d}}, \frac{\phi}{2 \sqrt{\ln\paren{\frac{1}{\phi}}}}} \geq e^{- d}$.
    For this case, we consider two sub-cases, depending on how $\delta$ compares with $e^{- d}$.
    We have:
    \begin{itemize}
        \item $\delta < e^{- d}$.
        As before, we argue that it suffices to have $\delta \sqrt{\ln\paren{\frac{1}{\delta}}} \le \phi$.
        Since $\delta \sqrt{\ln\paren{\frac{1}{\delta}}}$ is increasing for $\delta < e^{- d} < e^{- \frac{1}{2}}$, all we have to do is show that $e^{- d} \sqrt{d} \le \phi$.
        This, however, is an immediate consequence of our assumption that $\min\brc{\frac{\phi}{\sqrt{d}}, \frac{\phi}{2 \sqrt{\ln\paren{\frac{1}{\phi}}}}} \geq e^{- d}$.
        \item $\delta \geq e^{- d}$.
        This implies $\frac{\ln\paren{\frac{1}{\delta}}}{d} \le 1$.
        Thus, for (\ref{eq:delta_phi}) to hold, it suffices that $2 \delta \le \frac{2 \phi}{\sqrt{d}} \iff \delta \le \frac{\phi}{\sqrt{d}}$.
        This is satisfied, because our proposed range for $\delta$ is $\delta \le \min\brc{\frac{\phi}{\sqrt{d}}, \frac{\phi}{2 \sqrt{\ln\paren{\frac{1}{\phi}}}}}$.
    \end{itemize}
\end{enumerate}
We have established that our proposed values of $\delta$ and $T$ imply:
\[
    \delta T \geq 2 \sqrt{2 \pi d} e^{- \frac{\paren{\sqrt{T^2 - 2 d^2} - \sqrt{2} d}^2}{8 d}} \text{ and } 3 n \delta T \le \frac{d}{12},
\]
while respecting all the constraints.

We substitute this to (\ref{eq:mean_est_ineq1}) and get $n \geq \frac{d}{24 \alpha \eps}$, which yields the desired result.
\end{proof}

\end{document}